\newtheorem{assumption}{Assumption}
\newtheorem{lemma}{Lemma}
\newtheorem{proposition}{Proposition}
\newtheorem{theorem}{Theorem}
\newtheorem{corollary}{Corollary}
\theoremstyle{definition}
\newtheorem{remark}{Remark}
\DeclareMathOperator*{\argmin}{arg\,min}
\DeclareMathOperator*{\diag}{diag}
\newcommand{\bm}{\boldsymbol}
\newcommand{\bb}{\mathbb}
\newcommand{\cm}[1]{\mbox{\boldmath$\mathscr{#1}$}}
\newcommand{\pr}{\prime}
\title{High-dimensional vector autoregressive time series modeling via tensor decomposition}
\author{Di Wang, Yao Zheng, Heng Lian and Guodong Li
	\\ \textit{University of Hong Kong, University of Connecticut}
    \\ \textit{and City University of Hong Kong}}
\begin{document}

\setlength{\parindent}{16pt}

\setlength{\droptitle}{-6em}
\maketitle

\begin{abstract}

The classical vector autoregressive model is a fundamental tool for multivariate time series analysis. However, it involves too many parameters when the number of time series and lag order are even moderately large. This paper proposes to rearrange the transition matrices of the model into a tensor form such that the parameter space can be restricted along three directions simultaneously via tensor decomposition. 
In contrast, the reduced-rank regression method can restrict the parameter space in only one direction. Besides achieving substantial dimension reduction, the proposed model is interpretable from the factor modeling perspective. 
Moreover, to handle high-dimensional time series, this paper considers imposing sparsity on factor matrices to improve the   model interpretability and estimation efficiency, which leads to a sparsity-inducing estimator. For the low-dimensional case, we derive asymptotic properties of the proposed least squares estimator and introduce an alternating least squares algorithm. For the high-dimensional case, we establish non-asymptotic properties of the sparsity-inducing estimator and propose an ADMM algorithm for regularized estimation. Simulation experiments and a real data example demonstrate the advantages of the proposed approach over various existing methods.
\end{abstract}

\textit{Keywords}: Factor model; High-dimensional time series; Reduced-rank regression; Tucker decomposition; Variable selection.

\newpage
\section{Introduction}

High-dimensional time series is one of the most common types of ``big data" and can be found in many areas including meteorology, genomics, finance and economics \citep{Hallin_Lippi2013}.
The classical vector autoregressive (VAR) model is fundamental to multivariate time series modeling and has recently been applied to the high-dimensional case under certain structural assumptions, e.g., the banded structure \citep{Guo_Wang_Yao2016}, network structure \citep{Zhu_Pan_Li_Liu_Wang2017}, and linear restrictions \citep{zheng20}. Consider the VAR model of the form \citep{Lutkepohl2005, Tsay2010}:
\begin{equation}
	\label{eq:VAR}
	\bm{y}_t=\bm{A}_1\bm{y}_{t-1}+\cdots+\bm{A}_P\bm{y}_{t-P}+\bm{\epsilon}_t, \hspace{5mm}1\leq t\leq T,
\end{equation}
where $\{\bm{y}_t\}$ is the observed time series with $\bm{y}_t=(y_{1t},\dots,y_{Nt})^\pr\in\mathbb{R}^N$, $\{\bm{\epsilon}_t\}$ are independent and identically distributed ($i.i.d.$) innovations with  $\bm{\epsilon}_t=(\epsilon_{1t},\dots,\epsilon_{Nt})^\pr$, $\mathbb{E}(\bm{\epsilon}_t)=0$ and $\text{var}(\bm{\epsilon}_t)<\infty$, $\bm{A}_j$s are $N\times N$ transition matrices of unknown parameters, and $T$ is the sample size. It can be difficult to perform the estimation even when the dimensions $N$ and $P$ are moderately large \citep{DeMol08,Carriero11,Koop13}.

On the other hand, compared with model \eqref{eq:VAR}, the vector autoregressive moving average (VARMA) model usually performs better in practice since it can provide a more flexible autocorrelation structure \citep{athanasopoulos2008varma,Chan16}.
However, the VARMA model may have a serious identification problem \citep{Chan16,Basu-varma,Dias18}, and its estimation is often unstable since the corresponding objective function involves a high-order polynomial. 
As a result, it is common in practice to employ a VAR model to approximate VARMA processes, and the order $P$ may be very large in order to provide a better fit for the data \citep{Ravenna07}.
For example, to guarantee the approximation accuracy, we need to assume that $P\rightarrow\infty$ and $PT^{-1/3}\rightarrow 0$ as $T\rightarrow \infty$ for univariate and multivariate cases \citep{Said_Dickey1984,Li_Leng_Tsai2014}. This makes the number of parameters in model \eqref{eq:VAR}, $N^2P$, much larger.

Therefore, to make inference on the VAR model for high-dimensional time series, it is necessary to restrict the parameter space of model \eqref{eq:VAR} to a reasonable number of degrees of freedom.
A direct method is to assume that the transition matrices $\bm{A}_j$s are sparse and apply sparsity-inducing regularized estimation, e.g., the $\ell_1$-regularization (Lasso or Dantzig selector) for VAR models  \citep{Kock_Callot2015,Davis_Zang_Zheng2016,Basu15,Han15,Wu16}.
However, unlike the traditional linear regression, time series data have non-negligible temporal and cross-sectional dependencies, which will seriously affect the accuracy of the regularized estimation.
Moreover, as explained in Remark \ref{rem1} in Section 2, the stationarity of the VAR model essentially entails that the average magnitude of parameters is bounded by $O(N^{-1/2})$.
This makes the variable selection much more challenging and hence limits the popularity of sparsity-inducing regularized estimation for time series data. 

Another important approach to reducing the dimensionality of model \eqref{eq:VAR} arises naturally from the reduced-rank regression \citep{Yuan07,Negahban11,Chen13,Raskutti17,basu2019low}. The VAR model in \eqref{eq:VAR} can be rewritten as
\begin{equation}\label{model-vec}
\bm{y}_t=\bm{A}^{(C)}\bm{x}_{t}+\bm{\epsilon}_t,
\end{equation}
where $\bm{x}_t=(\bm{y}_{t-1}',\ldots,\bm{y}_{t-P}')'$, and $\bm{A}^{(C)}=(\bm{A}_1,\ldots,\bm{A}_P)$ is assumed to have a low rank \citep{Velu86,Velu98}.
Based on the reduced-rank VAR model in \eqref{model-vec}, \cite{Carriero11} considered a Bayesian method to predict large macroeconomic data, and both the number of variables $N$ and the sample size $T$ diverge to infinity.
However, unlike the reduced-rank regression, we may have alternative ways to define the low-rankness of parameter matrices $\bm{A}_j$s with $P>1$.
Specifically, the rank of $\bm{A}^{(C)}$ is the dimension of the column space of $\bm{A}_j$s.
Denote $\bm{A}^{(R)}=(\bm{A}_1',\bm{A}_2',\dots,\bm{A}_P')$ and $\bm{A}^{(L)}=(\text{vec}(\bm{A}_1),\text{vec}(\bm{A}_2),\dots,\text{vec}(\bm{A}_P))'$, where $\mathrm{vec}(\bm{A}_j)$ is the vectorization of $\bm{A}_j$.
The ranks of $\bm{A}^{(R)}$ and $\bm{A}^{(L)}$ are then the dimensions of the row space and vectorized matrix space of $\bm{A}_j$s, respectively.
The three dimensions are different in general, and the corresponding low-rank structures have different physical interpretations; see Section 2 for details.
Similarly to model \eqref{model-vec} above, \cite{reinsel1983some} proposed an autoregressive index model, where the low-rank assumption was imposed on $\bm{A}^{(R)}$. 
Moreover, the transition matrices $\bm{A}_j$s may have a low-rank structure along different lags, i.e. $\bm{A}^{(L)}$ may be low-rank.
In fact, the VARMA model can be treated as a parsimonious formulation for VAR models, since it restricts the degrees of freedom on transition matrices over different lags \citep{Tsay2010}.

It is noteworthy that imposing the low-rank assumption on any one of $\bm{A}^{(C)}$, $\bm{A}^{(R)}$ and $\bm{A}^{(L)}$ leads to a different physical interpretation as it amounts to reducing the dimensionality along one of the three different directions. This inspires us to rearrange the transition matrices $\bm{A}_j$s into a tensor. Interestingly, the corresponding mode-1, -2 and -3 matricizations of the tensor happen to be $\bm{A}^{(C)}$, $\bm{A}^{(R)}$ and $\bm{A}^{(L)}$, respectively; see \cite{Kolda09} and Section 2.
By adopting the standard Tucker decomposition for the transition tensor, different low-rank structures can be assumed simultaneously along the three directions, and hence the parameter space of the VAR model can be efficiently restricted. We call the resulting model the multilinear low-rank VAR model, since the Tucker ranks are also called multilinear ranks.

In the literature, low-rank structures of high-dimensional time series are commonly explored through factor models \citep{SW05, BN08, SW11, BW16}. Similarly, as a means of low-rank discovery for VAR processes, the proposed model is naturally interpretable from the  factor modeling perspective. As we will discuss in Section \ref{subsec:MLRVAR}, by imposing the low-rankness along three directions, the proposed model can extract different dynamic factors across  response variables, predictor variables, and predictor time lags. Indeed, the proposed model can be written as a static factor model (SFM) \citep{BW16} yet endowed with additional low-rank structures for more substantial dimension reduction.
However, in contrast to factor models which are mainly used for interpretation, it is worth noting that the proposed model can be used for forecasting. On the other hand,  the dynamic factor model (DFM) in the literature can be constructed by combining the SFM with a certain dynamic structure for the latent factors \citep{SW11}. Compared to the DFM with VAR latent factors \citep{AW07}, the proposed model may be more flexible in the sense that it  can extract different sets of dynamic factors from the response $\bm{y}_{t}$ and the lagged predictors $\bm{y}_{t-j}$s, whereas the DFM restricts them to be identical.  In addition, the proposed model  can capture the possible low-rank structure across the $P$ time lags.

Another important contribution of this paper is to introduce a sparse decomposition for the transition tensor to further increase the estimation efficiency for much higher-dimensional time series data. In the literature, sparsity-inducing regularization has been widely considered in reduced-rank regression to improve  interpretability and efficiency. For example, \cite{ChenHuang12} and \cite{Bunea12} considered row-wise sparsity in singular value decomposition, where zero rows imply irrelevance of the corresponding predictors to the responses;
\cite{Lian15} proposed to directly restrict the rank of the coefficient matrix with entry-wise sparsity, which however does not lead to a sparse decomposition;
\cite{ChenChan12} obtained a sparse singular value decomposition of the coefficient matrix by slightly relaxing the strict orthogonality; and \cite{Uematsu17} achieved the sparsity and strict orthogonality simultaneously. Note that as in \cite{Uematsu17}, our estimation method is able to keep the strict orthogonality of the factor matrices in the tensor decomposition.

Our work is also related to the fast-growing literature on tensor regression; see, e.g., \citet{Zhou13}, \citet{Li13}, \citet{Li17}, \citet{Sun17} and \citet{Raskutti17}. Whereas most of the existing work focuses on tensor-valued predictors or responses, we employ tensor decomposition as a novel approach to the dimensionality reduction of vector-valued time series models. To summarize, the proposed methods have the following attractive features:
\begin{itemize}
	\item [(a)] The proposed model substantially reduces the dimension  along three directions of the transition tensor, allowing each direction to have a different low-rank structure. This results in interpretable physical  structures and interesting connections with factor models in the literature, and allows us to handle much higher dimensional data than the reduced-rank VAR model in \eqref{model-vec}. 
	
	\item [(b)] Through the sparsity assumption on the three factor matrices, the proposed  high-dimensional method further improves the model interpretability and estimation efficiency  by selecting important variables for each response, predictor or temporal factor. The corresponding estimation can be accomplished by an ADMM algorithm which effectively untangles the $\ell_1$-regularization and orthogonality constraints.
\end{itemize}

The rest of the paper is organized as follows. Section 2 introduces the proposed model and discusses its connections with factor models.
Section 3 presents asymptotic properties of the least squares estimator in low dimensions and an alternating least squares algorithm.
For the high-dimensional case, the sparse higher-order reduced-rank estimation is proposed in Section 4,  taking into account both the orthogonality and sparsity. Its non-asymptotic properties are established, and an ADMM algorithm is developed. A consistent rank selection method is proposed in Section 5.
Simulation experiments and real data analysis are presented in Sections 6 and 7, respectively. A short discussion is given in Section 7.  All technical proofs are given in the Appendix.

\section{Multilinear low-rank vector autoregression}

\subsection{Tensor decomposition}
Tensors, also known as multidimensional arrays, are natural higher-order extensions of matrices. A multidimensional array $\cm{X}\in\mathbb{R}^{p_1\times\cdots\times p_K}$ is called a $K$th-order tensor, and the order of a tensor is known as the dimension, way or mode;  we refer readers to \citet{Kolda09} for a detailed review on tensor notations and operations. This paper will focus on third-order tensors.

Throughout the paper, we denote vectors by small boldface letters  $\bm{y}$, $\bm{x}, \dots$, matrices by capital letters $\bm{Y}$, $\bm{X}, \dots$, and tensors by Euler script capital letters $\cm{Y}$, $\cm{X}, \dots$. For a vector $\bm{x}$, denote by $\|\bm{x}\|_1$ and $\|\bm{x}\|_2$ its $\ell_1$ and $\ell_2$ norms, respectively. For a matrix $\bm{X}$, denote by $\|\bm{X}\|_{\textup{F}}$, $\|\bm{X}\|_1$, $\|\bm{X}\|_0$,  $\|\bm{X}\|_\text{op}$, $\|\bm{X}\|_*$, $\text{vec}(\bm{X})$, $\bm{X}'$ and  $\sigma_j(\bm{X})$ its Frobenius norm, vectorized $\ell_1$ norm (i.e. $\|\bm{X}\|_1=\|\text{vec}(\bm{X})\|_1$),  $\ell_0$ ``norm", spectral norm, nuclear norm, vectorization, transpose and  the $j$-th largest singular value,  respectively. For two symmetric matrices $\bm{X}$ and $\bm{Y}$, we write $\bm{X}\leq\bm{Y}$ if $\bm{Y}-\bm{X}$ is positive semidefinite. Furthermore, for a tensor $\cm{X}\in\mathbb{R}^{p_1\times p_2\times p_3}$, let $\|\cm{X}\|_{\textup{F}}=\left(\sum_{i=1}^{p_1}\sum_{j=1}^{p_2}\sum_{k=1}^{p_3}\cm{X}_{ijk}^2\right)^{1/2}$ and
$\|\cm{X}\|_0=\sum_{i=1}^{p_1}\sum_{j=1}^{p_2}\sum_{k=1}^{p_3}1(\cm{X}_{ijk}\neq0)$ be its Frobenius norm and $\ell_0$ ``norm", respectively.

For a tensor $\cm{X}\in\mathbb{R}^{p_1\times p_2\times p_3}$, its mode-1 matricization $\cm{X}_{(1)}$ is defined as the $p_1$-by-$(p_2p_3)$ matrix  whose $\{i,(k-1)p_3+j\}$-th entry is $\cm{X}_{ijk}$, for $1\leq i\leq p_1,1\leq j\leq p_2$ and $1\leq k\leq p_3$,
and $\cm{X}_{(1)}$  contains all mode-1 fibers $\{(\cm{X}_{[:,i_2,i_3]})\in\mathbb{R}^{p_1}:1\leq i_2\leq p_2,1\leq i_3\leq p_3\}$. The mode-2 and mode-3 matricizations can be defined similarly. The matricization of tensors helps to link the concepts and properties of matrices to those of tensors. The mode-1 multiplication $\times_1$ of a tensor $\cm{X}\in\mathbb{R}^{p_1\times p_2\times p_3}$ and a matrix $\bm{Y}\in\mathbb{R}^{q_1\times p_1}$ is defined as
\begin{equation}
	\cm{X}\times_1\bm{Y}=\left(\sum_{i=1}^{p_1}\cm{X}_{ijk}\bm{Y}_{si}\right)_{1\leq s\leq q_1,1\leq j\leq p_2,1\leq k\leq p_3}.
\end{equation}
Multiplications $\times_2$ and $\times_3$ can be defined similarly.

Unlike matrices, there is no universal definition of the rank for tensors. In this paper, we consider the multilinear ranks $(r_1,r_2,r_3)$ of a tensor $\cm{X}\in\mathbb{R}^{p_1\times p_2\times p_3}$, where
\begin{equation}
		 r_1=\text{rank}_1(\cm{X}):=\text{rank}(\cm{X}_{(1)})=\dim(\text{span}\{\cm{X}_{[:,i_2,i_3]}\in\mathbb{R}^{p_1}:1\leq i_2\leq p_2,1\leq i_3\leq p_3\}),
\end{equation}
and $r_2$ and $r_3$ are the ranks of $\cm{X}_{(2)}$ and $\cm{X}_{(3)}$, respectively. Note that $r_1$, $r_2$ and $r_3$ are analogous to the row rank and column rank of a matrix, but these three ranks are not necessarily equal. The multilinear ranks are also known as Tucker ranks, as they are closely related to the Tucker decomposition.

For a tensor $\cm{X}\in\mathbb{R}^{p_1\times p_2\times p_3}$, if $\text{rank}_j(\cm{X})=r_j$ for $1\leq j\leq 3$, then there exists a Tucker decomposition \citep{Tucker66,DeLathauwer00},
	\begin{equation*}
		\cm{X}=\cm{Y}\times_1\bm{Y}_1\times_2\bm{Y}_2\times_3\bm{Y}_3,
	\end{equation*}
where $\cm{Y}\in \mathbb{R}^{r_1\times r_2\times r_3}$ is the core tensor, $\bm{Y}_j\in\mathbb{R}^{p_j\times r_j}$ with $1\leq j\leq 3$ are factor matrices, and the above decomposition can also be denoted by $\cm{X}=[\![\cm{Y};\bm{Y}_1,\bm{Y}_2,\bm{Y}_3]\!]$.

\subsection{Multilinear low-rank vector autoregression \label{subsec:MLRVAR}}

\begin{figure}[t]
	\begin{center}
		\includegraphics[width=15cm]{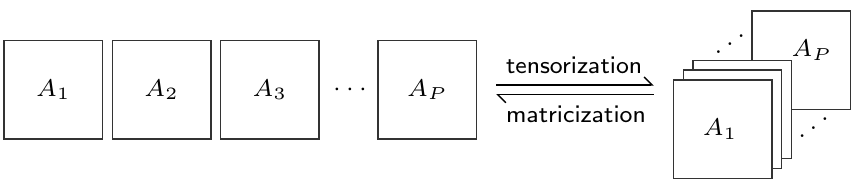}
		\label{fg:tensor}
		\caption{\label{fig1}Rearranging $P$ transition matrices of a VAR model into a tensor.}
	\end{center}
\end{figure}

For the VAR model in \eqref{eq:VAR}, we can rearrange its transition matrices into a tensor $\cm{A}\in\mathbb{R}^{N\times N\times P}$; see Figure \ref{fig1} for an illustration. Denote by $\cm{A}_{(j)}$ the mode-$j$ matricization of $\cm{A}$, where $1\leq j\leq 3$. It can be verified that $\cm{A}_{(1)}=(\bm{A}_1,\ldots,\bm{A}_P)$, $\cm{A}_{(2)}=(\bm{A}_1',\bm{A}_2',\dots,\bm{A}_P')$ and $\cm{A}_{(3)}=(\text{vec}(\bm{A}_1),\text{vec}(\bm{A}_2),\dots,\text{vec}(\bm{A}_P))'$, which correspond to the column, row and vectorized matrix spaces of $\bm{A}_j$s, respectively.

If the transition tensor $\cm{A}$ has multilinear low ranks  $(r_1,r_2,r_3)$, i.e. $\text{rank}(\cm{A}_{(j)})=r_j$ for  $1\leq j\leq 3$, then there exists a Tucker decomposition, $\cm{A}=\cm{G}\times_1\bm{U}_1\times_2\bm{U}_2\times_3\bm{U}_3$ or $\cm{A}=[\![\cm{G};\bm{U}_1,\bm{U}_2,\bm{U}_3]\!]$,
where $\cm{G}\in\mathbb{R}^{r_1\times r_2\times r_3}$ is the core tensor, and
$\bm{U}_1\in\mathbb{R}^{N\times r_1}$, $\bm{U}_2\in\mathbb{R}^{N\times r_2}$ and $\bm{U}_3\in\mathbb{R}^{P\times r_3}$ are factor matrices. As a result, model \eqref{eq:VAR} can be written as
\begin{equation}
\label{eq:MLRVAR}
\bm{y}_t=(\cm{G}\times_1\bm{U}_1\times_2\bm{U}_2\times_3\bm{U}_3)_{(1)}\bm{x}_t+\bm{\epsilon}_t,
\end{equation}
where $\bm{x}_t=(\bm{y}_{t-1}',\dots,\bm{y}_{t-P}')'$. For simplicity, we call model \eqref{eq:MLRVAR} the multilinear low-rank VAR model. 

In addition, since $(\cm{G}\times_1\bm{U}_1\times_2\bm{U}_2\times_3\bm{U}_3)_{(1)}=\bm{U}_1\cm{G}_{(1)}(\bm{U}_3\otimes\bm{U}_2)'$, where $\otimes$ is the Kronecker product, model \eqref{eq:MLRVAR} also has the following equivalent forms
\begin{equation}
\label{eq:MLRVAR_1}
\bm{y}_t=\bm{U}_1\cm{G}_{(1)}(\bm{U}_3\otimes\bm{U}_2)'\bm{x}_t+\bm{\epsilon}_t=\bm{U}_1\cm{G}_{(1)}\text{vec}(\bm{U}_2'\bm{X}_t\bm{U}_3)+\bm{\epsilon}_t,
\end{equation}
where $\bm{X}_t=(\bm{y}_{t-1},\dots,\bm{y}_{t-P})$.

\begin{assumption}
	\label{asmp:stationary}
	All roots of the matrix polynomial $\mathcal{A}(z)=\bm{I}_N-\bm{A}_1z-\dots-\bm{A}_Pz^P$, $z\in\mathbb{C}$, are outside the unit circle, where $\mathbb{C}$ is the set of complex numbers. 
\end{assumption}

Assumption \ref{asmp:stationary} is the sufficient and necessary condition for the existence of a unique strictly stationary solution to model \eqref{eq:VAR}. When $P=1$, Assumption \ref{asmp:stationary} is equivalent to $\rho(\bm{A}_1)<1$, where $\rho(\bm{A}_1)$ denotes the spectral radius of $\bm{A}_1$.

\begin{remark}\label{rem1}
To gain insight into the effect of the stationarity condition on the entries of $\bm{A}_1$, we may consider the following result regarding random matrices.  Suppose that the entries of $\bm{A}_1$ are $i.i.d.$ with mean zero and variance $\sigma^2$, that is, they are equally important. Then, by \cite{bai1997circular},  $N^{-1/2}\rho(\bm{A}_1) \rightarrow \sigma$ in probability as $N\rightarrow\infty$. In other words, when $\rho(\bm{A}_1)<1$, a larger $N$ will shrink the entries of   $\bm{A}_1$ towards zero.
\end{remark}

Note that the Tucker decomposition in \eqref{eq:MLRVAR} is not unique since $[\![\cm{G};\bm{U}_1,\bm{U}_2,\bm{U}_3]\!]=[\![\cm{G}\times_1\bm{O}_1\times_2\bm{O}_2\times_3\bm{O}_3;\bm{U}_1\bm{O}_1^{-1},\bm{U}_2\bm{O}_2^{-1},\bm{U}_3\bm{O}_3^{-1}]\!]$
for any nonsingular matrices $\bm{O}_1\in\mathbb{R}^{r_1\times r_1}$, $\bm{O}_2\in\mathbb{R}^{r_2\times r_2}$ and $\bm{O}_3\in\mathbb{R}^{r_3\times r_3}$. Hence, we consider a special Tucker decomposition: the higher-order singular value decomposition (HOSVD); see \cite{DeLathauwer00}. Specifically,
we let $\bm{U}_j$ be a tall matrix consisting  of the top $r_j$ left singular vectors of $\cm{A}_{(j)}$ for each $1\leq j\leq 3$, where $(r_1,r_2,r_3)$ are the multilinear ranks of the tensor $\cm{A}$. Let the core tensor  $\cm{G}=\cm{A}\times_1\bm{U}_1'\times_2\bm{U}_2'\times_3\bm{U}_3'$. Then $\cm{G}$ has the following \textit{all-orthogonal} property: for each $1\leq j\leq 3$, the rows of $\cm{G}_{(j)}$ are pairwise orthogonal. 

\begin{remark}
	Due to the HOSVD, the proposed multilinear low-rank VAR model in \eqref{eq:MLRVAR} has only $r_1r_2r_3+(N-r_1)r_1+(N-r_2)r_2+(P-r_3)r_3$ parameters, i.e. the dimension increases  linearly in $N$ and $P$; see \cite{Zhang18}. By contrast, model \eqref{eq:VAR} has $N^2P$ parameters, while the reduced-rank VAR model in \eqref{model-vec} has $(NP+N-r_1)r_1$ parameters, where $r_1=\textup{rank}(\cm{A}_{(1)})$.
\end{remark}

Since $\bm{U}_1$ is orthonormal, it follows from \eqref{eq:MLRVAR_1} that
\begin{equation}
\label{eq:factormodel_summary}
\bm{U}_1'\bm{y}_t=\cm{G}_{(1)}(\bm{U}_3\otimes\bm{U}_2)'\bm{x}_t+\bm{U}_1'\bm{\epsilon}_t=\cm{G}_{(1)}\text{vec}(\bm{U}_2'\bm{X}_t\bm{U}_3)+\bm{U}_1'\bm{\epsilon}_t.
\end{equation}
The above representation reveals an interesting dynamic factor based interpretation for the proposed model. Specifically, $\bm{U}_1'\bm{y}_t:=\bm{f}_t^{\text{Response}}=(f_{1,t}^{\text{Response}}, \dots, f_{r_1,t}^{\text{Response}})^\prime\in\mathbb{R}^{r_1}$ represents  $r_1$ \textit{response factors} across the $N$ variables of  $\bm{y}_t$, where $f_{j,t}^{\text{Response}}=\bm{u}_{1,j}^\prime \bm{y}_t=\sum_{i=1}^{N}(\bm{U}_1)_{ij} y_{it}$ is the $j$th response factor, for $1\leq j\leq r_1$. Thus, if the $(i,j)$th entry of $\bm{U}_1$ is zero, i.e., $(\bm{U}_1)_{ij}=0$, then  $y_{it}$ is irrelevant to $\bm{f}_{j,t}^{\text{Response}}$.  In other words, $\bm{U}_1$ can be interpreted as the loadings of the response factors.

On the right side of \eqref{eq:factormodel_summary}, the predictor has the bilinear form $\bm{U}_2'\bm{X}_t\bm{U}_3$. On the one hand, $\bm{U}_2'\bm{X}_t:=\bm{F}_{t}^{\text{Predictor}}=(\bm{f}_{1,t}^{\text{Predictor}},\dots,\bm{f}_{r_2,t}^{\text{Predictor}})^\prime\in\mathbb{R}^{r_2\times P}$ represents $r_2$ \textit{predictor factors} across the $N$ variables (rows) of the predictor matrix $\bm{X}_t$, where $\bm{f}_{j,t}^{\text{Predictor}}=\sum_{i=1}^{N}(\bm{U}_2)_{ij} \bm{x}_{it}$ is the $j$th predictor factor, for $j=1,\dots, r_2$, with $\bm{x}_{it}=(y_{i,t-1},\ldots,y_{i,t-P})'$ for $1\leq i\leq N$. Hence,  if $(\bm{U}_2)_{ij}=0$, then  $\bm{x}_{it}$ is irrelevant to $\bm{f}_{j,t}^{\text{Predictor}}$.  On the other hand, $\bm{U}_3'\bm{X}_t':=\bm{F}_{t}^{\text{Lag}}=(\bm{f}_{1,t}^{\text{Lag}}, \dots, \bm{f}_{r_3,t}^{\text{Lag}})^\prime\in\mathbb{R}^{r_3\times N}$ represents $r_3$ \textit{temporal factors} across the $P$ time lags (columns) of the predictor matrix $\bm{X}_t$, where 
$\bm{f}_{j,t}^{\text{Lag}}=\sum_{i=1}^{P}(\bm{U}_3)_{ij} \bm{y}_{t-i}$ is the  $j$th temporal  factor, for $j=1,\dots, r_3$. As a result, $(\bm{U}_3)_{ij}=0$ implies that the $i$-th lagged predictor $\bm{y}_{t-i}$ is irrelevant to $\bm{f}_{j,t}^{\text{Lag}}$.  Therefore, $\bm{U}_2$ and $\bm{U}_3$  can be interpreted as the loadings of the predictor and temporal factors, respectively.

For simplicity, we call $r_1$, $r_2$ and $r_3$ the  response, predictor and temporal ranks, respectively.  Similar formulations can be found in matrix variate regressions \citep[e.g.,][]{Zhao_Leng2014, Ding_Cook2018}. The response, predictor and temporal factors interpretations of \eqref{eq:factormodel_summary} reveal that the proposed model is related to factor modeling, one of the most widely used techniques for high-dimensional time series.  We will explore the similarities and differences between them in the next subsection.

\subsection{Connections with factor modeling for time series \label{subsec:fm}}
In the literature, low-rank structures of high-dimensional time series are commonly explored through factor models \citep{SW05, BN08, SW11, BW16}.  The multilinear low-rank assumption of $\cm{A}$ in the proposed model fulfills a similar purpose as it extracts dynamic factors along three dimensions, as shown in our discussion about \eqref{eq:factormodel_summary}. Meanwhile, the proposed model
can  be used directly for forecasting, which is another attractive feature compared to factor models. In the following, we take a closer look at the factor structures of the proposed model and both static and dynamic factor models in the literature, and discuss some interesting connections between them.

The static factor model (SFM) is commonly written as
\begin{equation}\label{eq:factor_model_basic}
\bm{y}_t=\bm{\Lambda}\bm{f}_t+\bm{e}_t,
\end{equation}
where $\bm{y}_t\in\mathbb{R}^N$ is the observed time series, $\bm{f}_t\in\mathbb{R}^{r}$ are $r$ latent factors with $r\ll N$, $\bm{\Lambda}\in\mathbb{R}^{N\times r}$ is the factor loading matrix, and $\bm{e}_t\in\mathbb{R}^N$ is the random error. 
The usual normalization restrictions require that $\bm{F}^\prime \bm{F}/T=\bm{I}_{r}$ and that $\bm{\Lambda}^\prime\bm{\Lambda}\in\mathbb{R}^{r\times r}$ is a full-rank diagonal matrix, where $\bm{F}=(\bm{f}_1, \dots, \bm{f}_T)^\prime$; see \cite{BW16}.  

We can show that the proposed  model in \eqref{eq:MLRVAR_1} has an SFM representation. Specifically, as shown in Section \ref{appendix:factor} of the Appendix, there exist $\bm{\Lambda}\in\mathbb{R}^{N\times r_1}$ and  $\bm{f}_t\in\mathbb{R}^{r_1}$  such that
\begin{equation}
\label{eq:MLRVAR_factor}
\bm{y}_t=\bm{U}_1\cm{G}_{(1)}(\bm{U}_3\otimes\bm{U}_2)'\bm{x}_t+\bm{\epsilon}_t=\bm{\Lambda}\bm{f}_t+\bm{\epsilon}_t,
\end{equation}
for $t=1,\dots, T$, where $\bm{\Lambda}$ and the resulting $\bm{F}$ satisfy the aforementioned normalization restrictions, and $\bm{f}_t$ is the normalized version of $\cm{G}_{(1)}(\bm{U}_3\otimes\bm{U}_2)'\bm{x}_t\in\mathbb{R}^{r_1}$.
Let $\textup{span}(\cdot)$ denote the column space of a matrix, and it can be verified that $\textup{span}(\bm{\Lambda})=\textup{span}(\bm{U}_1)$.

\begin{remark}\label{remark:U_space}
Consider $\{\bm{y}_t\}$ generated by the proposed model. A useful by-product of representation \eqref{eq:MLRVAR_factor} is that  the low-dimensional subspace $\textup{span}(\bm{U}_1)$ can actually be estimated by  $\textup{span}(\widehat{\bm{\Lambda}})$, where $\widehat{\bm{\Lambda}}$ is the estimator of $\bm{\Lambda}$ obtained by fitting an SFM with $r=r_1$. Moreover, let $\bm{\Lambda}_1=\bm{\Lambda}(\bm{\Lambda}^\prime\bm{\Lambda})^{-1/2}$ be the orthonormalization of $\bm{\Lambda}$. Then $\textup{span}(\bm{\Lambda}_1)=\textup{span}(\bm{\Lambda})=\textup{span}(\bm{U}_1)$, and their orthogonal projectors are identical, namely $\bm{\Lambda}_1\bm{\Lambda}_1^\prime=\bm{U}_1\bm{U}_1^\prime$. Thus, the estimation error of $\textup{span}(\bm{U}_1)$ can be measured by the commonly-used subspace distance $\|\bm{\widehat{\Lambda}}_1\bm{\widehat{\Lambda}}_1'-\bm{U}_1\bm{U}_1'\|_\text{F}^2$, where $\bm{\widehat{\Lambda}}_1=\bm{\widehat{\Lambda}}(\bm{\widehat{\Lambda}}^\prime\bm{\widehat{\Lambda}})^{-1/2}$; see \cite{vu2013minimax}.
\end{remark}

On the other hand, the dynamic factor model (DFM) can be defined by combining model \eqref{eq:factor_model_basic} with  a certain dynamic structure,  e.g., the VAR, for the latent factor process $\bm{f}_t$ \citep{AW07}. To fix ideas, suppose that $\bm{f}_t$ evolves as the VAR(1),
\begin{equation}\label{eq:factor_VAR1}
\bm{f}_t = \bm{B} \bm{f}_{t-1} + \bm{\xi}_t,
\end{equation}
where $\bm{B}\in\mathbb{R}^{r\times r}$ is the transition matrix, and $\bm{\xi}_t\in\mathbb{R}^r$ is the random error. 
Let $\bm{w}_t=\bm{\Lambda}\bm{f}_t$ and $\bm{u}_t=\bm{\Lambda}\bm{\xi}_t$.  Then, the conjunction of \eqref{eq:factor_model_basic} and \eqref{eq:factor_VAR1} can also be written as
\begin{equation}\label{eq:VAR_ME}
\bm{y}_t=\bm{w}_t+\bm{e}_t, \quad \bm{w}_t=\bm{V}\bm{C}\bm{V}' \bm{w}_{t-1}+\bm{u}_t,
\end{equation}
where $\bm{D}=\bm{\Lambda}'\bm{\Lambda}$ is diagonal, $\bm{V}=\bm{\Lambda}\bm{D}^{-1/2}$ is orthonormal, and $\bm{C}=\bm{D}^{1/2}\bm{B}\bm{D}^{-1/2}\in\mathbb{R}^{r\times r}$. Interestingly, \eqref{eq:VAR_ME} resembles the VAR with measurement error, where $\bm{y}_t$ is the observed outcome of the true VAR(1) process $\bm{w}_t$ subject to measurement error $\bm{e}_t$. Note that the naive estimation ignoring the  measurement error of the autoregressive process will result in asymptotic biases; see, e.g., \cite{Staudenmayer05}.

However, if $\bm{e}_t=0$, we may gain more insights by comparing the DFM in \eqref{eq:VAR_ME} to the proposed model of lag order one. Note that when $P=1$ the latter reduces to the reduced-rank VAR,
\begin{equation*}%\label{eq:rr_VAR1}
\bm{y}_t=\bm{U}_1\cm{G}_{(1)}\bm{U}_2'\bm{y}_{t-1}+\bm{\epsilon}_t \hspace{5mm}\text{or}\hspace{5mm} \bm{U}_1'\bm{y}_t=\cm{G}_{(1)}\bm{U}_2'\bm{y}_{t-1}+\bm{U}_1'\bm{\epsilon}_t,
\end{equation*}
with $\bm{U}_1$ and $\bm{U}_2$ being orthonormal and $r_1=r_2$, while the DFM model in \eqref{eq:VAR_ME} with $\bm{e}_t=0$ has the form of
\[
\bm{V}^\prime \bm{y}_t=\bm{C}\bm{V}' \bm{y}_{t-1}+\bm{V}^\prime\bm{u}_t.
\]
Hence, we may argue that the proposed model is more flexible than the DFM in \eqref{eq:VAR_ME}, as the former can accommodate different low-dimensional patterns for the response $y_t$ and predictors $y_{t-j}$s, whereas the latter  requires the subspaces of $y_t$ and $y_{t-j}$s to be identical. It is also worth noting that when $P>1$, another advantage of the proposed model is that it can capture the possible low-rank structure across time lags of the predictors; see \eqref{eq:factormodel_summary} in the previous subsection. 
Lastly, we note that the proposed model may be extended along the line of the factor augmented VAR models (FAVAR) \citep{BBE05} by incorporating known low-dimensional factors. 

\begin{remark}
In contrast to the proposed model, the classical factor model in the general form of  \eqref{eq:factor_model_basic} is not specific to VAR models, since it allows for  general latent factors.  However, the general factor model in \eqref{eq:factor_model_basic} cannot be directly used for forecasting unless an additional dynamic structure is imposed on the latent factor process, e.g., \eqref{eq:factor_VAR1}. As discussed above, if the multilinear low-rank assumption holds, the proposed model can be more favorable than the DFM. 
\end{remark}

\section{Low-dimensional time series modeling}

\subsection{Multilinear low-rank least squares estimation}

For the multilinear low-rank VAR model in \eqref{eq:MLRVAR} with ranks $(r_1,r_2,r_3)$,  the multilinear low-rank (MLR) least squares estimator can be defined as
\begin{equation}
\label{eq:LSE}
\cm{\widehat{A}}_{\textup{MLR}} \equiv [\![\cm{\widehat{G}};\bm{\widehat{U}}_1,\bm{\widehat{U}}_2,\bm{\widehat{U}}_3]\!] =
\argmin L(\cm{G},\bm{U}_1,\bm{U}_2,\bm{U}_3),
\end{equation}
where
\begin{equation}\label{eq:obj1}
L(\cm{G},\bm{U}_1,\bm{U}_2,\bm{U}_3)=\frac{1}{T}\sum_{t=1}^T \|\bm{y}_t-(\cm{G}\times_1\bm{U}_1\times_2\bm{U}_2\times_3\bm{U}_3)_{(1)}\bm{x}_t\|_2^2.
\end{equation}
We will derive asymptotic properties of $\cm{\widehat{A}}_{\textup{MLR}}$ when both $N$ and $P$ are fixed and the true multilinear ranks $(r_1,r_2, r_3)$ are known. Note that the minimization in \eqref{eq:LSE} is unconstrained, so the Tucker decomposition $[\![\cm{\widehat{G}};\bm{\widehat{U}}_1,\bm{\widehat{U}}_2,\bm{\widehat{U}}_3]\!]$ of $\cm{\widehat{A}}_{\text{MLR}}$ is not unique. 

Let $\bm{\phi}=(\text{vec}(\cm{G}_{(1)})^{\prime},\text{vec}(\bm{U}_1)',\text{vec}(\bm{U}_2)',\text{vec}(\bm{U}_3)')'$ be the true value of the vectorized HOSVD components and $\bm{\widehat{\phi}}_{\textup{MLR}} =(\text{vec}(\cm{\widehat{G}}_{(1)})^{\prime}, \text{vec}(\widehat{\bm{U}}_1)^{\prime}, \text{vec}(\widehat{\bm{U}}_2)^{\prime}, \text{vec}(\widehat{\bm{U}}_3)^{\prime})^{\prime}$ be the corresponding  estimator.
Let $\bm{h}(\bm{\phi})=\text{vec}(\cm{A}_{(1)})=\text{vec}(\bm{U}_1\cm{G}_{(1)}(\bm{U}_3\otimes\bm{U}_2)')$ be a function of $\bm{\phi}$.
Let $\bm{\Sigma}_{\bm{\epsilon}}=\textrm{var}(\bm{\epsilon}_t)$, $\bm{\Gamma}_j=\textrm{cov}(\bm{y}_{t+j},\bm{y}_t)$ with $j\geq 0$,
\[
\bm{\Gamma}^*=
\begin{bmatrix}
\bm{\Gamma}_0 & \bm{\Gamma}_1 & \dots & \bm{\Gamma}_{P-1}\\
\bm{\Gamma}_1' & \bm{\Gamma}_0 & \dots & \bm{\Gamma}_{P-2}\\
\vdots & \vdots & \ddots & \vdots\\
\bm{\Gamma}_{P-1}' & \bm{\Gamma}_{P-2}' & \dots & \bm{\Gamma}_0\\
\end{bmatrix},
\]
and $\bm{J}=\bm{\Sigma}_{\bm{\epsilon}}^{-1}\otimes\bm{\Gamma}^*$.
Denote
\begin{equation}\begin{split}
\bm{H}=\frac{\partial\bm{h}}{\partial\bm{\phi}}=\Big((\bm{U}_3\otimes\bm{U}_2\otimes\bm{U}_1),&[(\bm{U}_3\otimes\bm{U}_2)\cm{G}_{(1)}']\otimes\bm{I}_N,
\bm{T}_{21}\{[(\bm{U}_1\otimes\bm{U}_3)\cm{G}_{(2)}']\otimes\bm{I}_N\},\\
&\bm{T}_{31}\{[(\bm{U}_1\otimes\bm{U}_2)\cm{G}_{(3)}']\otimes\bm{I}_P\}\Big),
\end{split}\end{equation}
where $\bm{T}_{ij}$ is an $(N^2P)\times(N^2P)$ permutation matrix such that
$\text{vec}(\cm{A}_{(j)})=\bm{T}_{ij}\text{vec}(\cm{A}_{(i)})$ with $1\leq i,j\leq 3$.

\begin{theorem}
	\label{thm:Asymptotic}
	Suppose that the  time series $\{\bm{y}_t\}$ is generated by model \eqref{eq:MLRVAR} with $\mathbb{E}\|\bm{\epsilon}_t\|_2^4<\infty$, both $N$ and $P$ are fixed, and  $(r_1,r_2, r_3)$ are known. Then, under Assumption \ref{asmp:stationary},
	\begin{equation}
%	\sqrt{T}\{\bm{h}(\bm{\widehat{\phi}}_{\emph{MLR}})-\bm{h}(\bm{\phi})\}\rightarrow N(0,\bm{\Sigma}_{\emph{MLR}})
		\sqrt{T}\{\textup{vec}((\cm{\widehat{A}}_{\textup{MLR}})_{(1)})-\textup{vec}(\cm{A}_{(1)})\}\rightarrow N(\bm{0},\bm{\Sigma}_{\emph{MLR}})
	\end{equation}
	in distribution as $T\rightarrow\infty$, where $\bm{\Sigma}_{\emph{MLR}}=\bm{H}(\bm{H}'\bm{J}\bm{H})^{\dagger}\bm{H}'$, and $\dagger$ denotes the Moore-Penrose inverse.
\end{theorem}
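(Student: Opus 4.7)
The plan is to treat $\cm{\widehat{A}}_{\textup{MLR}}$ as a constrained M-estimator and linearize around the truth. The objective $L(\cm{G},\bm{U}_1,\bm{U}_2,\bm{U}_3)$ depends on $\bm{\phi}$ only through $\bm{h}(\bm{\phi})=\textup{vec}(\cm{A}_{(1)})$, so the MLR problem is equivalent to minimizing the ordinary VAR least-squares criterion over the multilinear low-rank set $\mathcal{M}=\{\cm{B}\in\mathbb{R}^{N\times N\times P}:\textup{rank}_j(\cm{B})\le r_j,\ j=1,2,3\}$. The proof then splits into (i) consistency of $\cm{\widehat{A}}_{\textup{MLR}}$ and (ii) a tangent-space linearization identifying it with an oblique projection of the unconstrained OLS estimator.

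For consistency, Assumption \ref{asmp:stationary} makes $\{\bm{y}_t\}$ strictly stationary and ergodic, and $\mathbb{E}\|\bm{\epsilon}_t\|_2^4<\infty$ gives $\mathbb{E}\|\bm{y}_t\|_2^4<\infty$. An ergodic theorem then yields uniform convergence of $L$ on compacta (in $\cm{A}$) to its population counterpart, which is uniquely minimized at the true $\cm{A}$ under Assumption \ref{asmp:stationary}. Since $\mathcal{M}$ is closed, a standard argmin continuous-mapping argument delivers $\cm{\widehat{A}}_{\textup{MLR}}\overset{p}{\to}\cm{A}$, after which attention can be restricted to a small neighborhood of $\cm{A}$.

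For the limiting distribution, because $\textup{rank}_j(\cm{A})=r_j$ is exact, $\mathcal{M}$ is locally a smooth embedded submanifold at $\cm{A}$ whose tangent space equals $\textup{range}(\bm{H})$; the stated block form of $\bm{H}=\partial\bm{h}/\partial\bm{\phi}$ follows from standard vectorization identities applied to $\bm{h}(\bm{\phi})=\textup{vec}(\bm{U}_1\cm{G}_{(1)}(\bm{U}_3\otimes\bm{U}_2)')$ together with the permutation matrices $\bm{T}_{ij}$. The unconstrained OLS estimator $\widehat{\bm{\alpha}}_{\textup{OLS}}=\textup{vec}(\cm{\widehat{A}}^{\textup{OLS}}_{(1)})$ satisfies $\sqrt{T}(\widehat{\bm{\alpha}}_{\textup{OLS}}-\bm{\alpha}_0)\to N(\bm{0},\bm{J}^{-1})$ by the classical asymptotic theory for stationary VAR models. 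A first-order expansion of the constrained first-order conditions along the tangent space then gives
\[
\sqrt{T}\bigl(\textup{vec}((\cm{\widehat{A}}_{\textup{MLR}})_{(1)})-\bm{\alpha}_0\bigr)=\bm{H}(\bm{H}'\bm{J}\bm{H})^{\dagger}\bm{H}'\bm{J}\,\sqrt{T}(\widehat{\bm{\alpha}}_{\textup{OLS}}-\bm{\alpha}_0)+o_p(1),
\]
i.e.\ the $\bm{J}$-weighted oblique projection onto $\textup{range}(\bm{H})$. Using the identity $\bm{M}\bm{M}^{\dagger}\bm{M}=\bm{M}$ for symmetric $\bm{M}$ and the OLS variance $\bm{J}^{-1}$, the limiting variance collapses to $\bm{H}(\bm{H}'\bm{J}\bm{H})^{\dagger}\bm{H}'$, which is the claim.

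The main obstacle is the non-identifiability of the HOSVD parametrization: the components $(\cm{G},\bm{U}_1,\bm{U}_2,\bm{U}_3)$ are identified only up to compatible rotations within each mode, so $\sqrt{T}(\widehat{\bm{\phi}}-\bm{\phi})$ has no weak limit and $\bm{H}'\bm{J}\bm{H}$ is singular, with null space tangent to the gauge orbits. The Moore–Penrose inverse is precisely the right device: one fixes a smooth local section of the parametrization (e.g., an HOSVD gauge on a tubular neighborhood of $\cm{A}$), applies the usual delta-method/Z-estimator argument on that slice, and verifies that $\bm{H}(\bm{H}'\bm{J}\bm{H})^{\dagger}\bm{H}'$ depends only on $\textup{range}(\bm{H})$, hence is gauge-invariant and intrinsic to $\cm{A}$. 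The remaining bookkeeping consists in checking that $\textup{rank}(\bm{H})=r_1 r_2 r_3+(N-r_1)r_1+(N-r_2)r_2+(P-r_3)r_3$ — matching the effective parameter count — and that the chain-rule expansions close correctly across the permutation operators $\bm{T}_{ij}$.
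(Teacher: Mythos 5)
Your proposal is correct and takes essentially the same route as the paper: both reduce the problem to the classical $\sqrt{T}$-asymptotics of the unconstrained OLS estimator for the stationary VAR and then transfer the limit to the constrained estimator through the oblique projection $\bm{H}(\bm{H}'\bm{J}\bm{H})^{\dagger}\bm{H}'\bm{J}$ onto $\textup{range}(\bm{H})$, with the Moore--Penrose inverse absorbing the non-identifiability of the overparameterized Tucker/HOSVD parametrization and the identity $\bm{M}\bm{M}^{\dagger}\bm{M}=\bm{M}$ collapsing the asymptotic variance to $\bm{H}(\bm{H}'\bm{J}\bm{H})^{\dagger}\bm{H}'$. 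The only difference is packaging: the paper obtains the projection step in one stroke by applying Proposition 4.1 of Shapiro (1986) to the discrepancy function $F(\bm{\widehat{h}}_{\textup{OLS}},\bm{h})$, whereas you re-derive that result directly via a tangent-space linearization of the constrained first-order conditions (together with an explicit consistency argument), which is the same mathematical content.
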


The proof of Theorem \ref{thm:Asymptotic} relies on the technique  for deriving asymptotic distributions of overparameterized models in \cite{Shapiro86}. It does not require that $\cm{G}$ and $\bm{U}_j$s  are identifiable, nor does it require imposing identification constraints on the estimation in \eqref{eq:LSE}.

However, if we are further interested in estimating the true components $\cm{G}$ and $\bm{U}_j$s in the HOSVD of $\cm{A}$, the identifiability of these components, i.e., the uniqueness of the HOSVD,  will be required. This is guaranteed by the following assumption.

\begin{assumption}\label{asmp:svd}
	For each $1\leq j\leq 3$, (i) the singular values of $\cm{A}_{(j)}$ are distinct, and (ii) the first element in each column of $\bm{U}_j$ is positive.
\end{assumption} 

In Assumption 2, Condition (i) avoids indeterminacy of the factor loading vectors and holds generally in practice. Condition (ii) rules out sign switches in $\bm{U}_j$ and is commonly used in low-rank matrix models \citep{li2016supervised}.  

Accordingly, based on the unconstrained estimator $\cm{\widehat{A}}_{\textup{MLR}}$, we can define each $\bm{\widehat{U}}_j$ uniquely as the top $r_j$ left singular vectors of $(\cm{\widehat{A}}_{\textup{MLR}})_{(j)}$ such that the first element in each column of $\bm{\widehat{U}}_j$ is positive, and set $\cm{\widehat{G}}=[\![\cm{\widehat{A}}_{\textup{MLR}};\bm{\widehat{U}}_1',\bm{\widehat{U}}_2',\bm{\widehat{U}}_3']\!]$. As a result, the estimators $\cm{\widehat{G}}$ and $\bm{\widehat{U}}_j$s are consistent and asymptotically normal.

\begin{corollary}\label{cor1}
	Suppose that the conditions of Theorem \ref{thm:Asymptotic} and Assumption \ref{asmp:svd} hold. Then $\sqrt{T}\{\textup{vec}(\cm{\widehat{G}})-\textup{vec}(\cm{G})\}$, $\sqrt{T}\{\textup{vec}(\bm{\widehat{U}}_1)-\textup{vec}(\bm{U}_1)\}$,
	$\sqrt{T}\{\textup{vec}(\bm{\widehat{U}}_2)-\textup{vec}(\bm{U}_2)\}$, and $\sqrt{T}\{\textup{vec}(\bm{\widehat{U}}_3)-\textup{vec}(\bm{U}_3)\}$ converge to normal distributions with mean zero as $T\rightarrow\infty$.
\end{corollary}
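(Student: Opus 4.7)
The plan is to deduce Corollary~\ref{cor1} from Theorem~\ref{thm:Asymptotic} by a delta-method argument, viewing each identifiable HOSVD component as a smooth function of the unconstrained estimator $\cm{\widehat{A}}_{\textup{MLR}}$. First, Theorem~\ref{thm:Asymptotic} already gives $\sqrt{T}$-consistency and asymptotic normality of $\textup{vec}((\cm{\widehat{A}}_{\textup{MLR}})_{(1)})$. Since the mode-$2$ and mode-$3$ matricizations of a tensor are obtained from the mode-$1$ matricization by fixed permutations of coordinates, the same conclusion holds jointly for $\textup{vec}((\cm{\widehat{A}}_{\textup{MLR}})_{(j)})$, $j=1,2,3$, with the asymptotic covariance obtained by the appropriate conjugation of $\bm{\Sigma}_{\textup{MLR}}$ by permutation matrices. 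All subsequent delta-method arguments will be applied to this joint limit.

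Second, I would establish that each $\bm{\widehat{U}}_j$ is a smooth function of $(\cm{\widehat{A}}_{\textup{MLR}})_{(j)}$ in a neighbourhood of the truth. Under Assumption~\ref{asmp:svd}(i), the top $r_j$ singular values of $\cm{A}_{(j)}$ are simple and strictly separated from the remaining singular values, so the singular value gap is bounded below. Standard analytic perturbation theory for the SVD (see, e.g., Stewart and Sun, \emph{Matrix Perturbation Theory}, or Kato's analytic perturbation theory) then implies that, in a sufficiently small open neighbourhood of $\cm{A}_{(j)}$, the map sending a matrix $\bm{M}$ to an orthonormal basis of its top-$r_j$ left singular subspace is real-analytic up to a column-wise choice of sign. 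Assumption~\ref{asmp:svd}(ii), which fixes the sign of the first entry of each column, selects a unique analytic branch $\varphi_j$. Since $\cm{\widehat{A}}_{\textup{MLR}}\to\cm{A}$ in probability, with probability approaching one we have $\bm{\widehat{U}}_j=\varphi_j((\cm{\widehat{A}}_{\textup{MLR}})_{(j)})$, and the ordinary delta method yields
\begin{equation*}
\sqrt{T}\{\textup{vec}(\bm{\widehat{U}}_j)-\textup{vec}(\bm{U}_j)\}\;\to\; N(\bm{0},\bm{\Sigma}_{\bm{U}_j}), \qquad j=1,2,3,
\end{equation*}
with $\bm{\Sigma}_{\bm{U}_j}$ expressible via the Jacobian of $\varphi_j$ at $\cm{A}_{(j)}$ and the corresponding block of the joint covariance from the previous step. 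This step also delivers joint asymptotic normality of $(\textup{vec}(\cm{\widehat{A}}_{\textup{MLR}}),\textup{vec}(\bm{\widehat{U}}_1),\textup{vec}(\bm{\widehat{U}}_2),\textup{vec}(\bm{\widehat{U}}_3))$.

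Third, the estimator of the core tensor satisfies $\cm{\widehat{G}}=\cm{\widehat{A}}_{\textup{MLR}}\times_1\bm{\widehat{U}}_1'\times_2\bm{\widehat{U}}_2'\times_3\bm{\widehat{U}}_3'$, which is a polynomial (hence smooth) function of the variables whose joint asymptotic normality has just been established. A final application of the delta method then gives $\sqrt{T}\{\textup{vec}(\cm{\widehat{G}})-\textup{vec}(\cm{G})\}\to N(\bm{0},\bm{\Sigma}_{\cm{G}})$, with $\bm{\Sigma}_{\cm{G}}$ computable from the Jacobian of the multilinear product and the already-derived joint covariance. Consistency of each estimator then follows as a by-product of asymptotic tightness.

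The main obstacle is the rigorous justification of the differentiability of the SVD at $\cm{A}_{(j)}$: one must invoke a perturbation bound whose constants depend on the reciprocal singular-value gaps, and verify that the analytic branch selected by Assumption~\ref{asmp:svd}(ii) is well defined throughout a fixed neighbourhood of the truth (so that the ``eventually equals $\varphi_j$'' step is legitimate). Once the SVD is handled, everything else---propagating the joint normal limit through a polynomial map, and extracting marginals---is routine and produces explicit, if notationally heavy, asymptotic covariance formulas.
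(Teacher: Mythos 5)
Your proposal is correct and follows essentially the same route as the paper: both treat the HOSVD components as smooth functions of the unconstrained estimator $\cm{\widehat{A}}_{\textup{MLR}}$ and propagate the joint limit from Theorem \ref{thm:Asymptotic} through a first-order (delta-method) expansion, with Assumption \ref{asmp:svd} supplying the singular-value separation and the sign-branch selection. The only difference is one of packaging: where you invoke abstract analytic SVD perturbation theory to get differentiability of the map $\bm{M}\mapsto\bm{\widehat{U}}_j$, the paper derives the same Jacobian from an explicit eigenvector perturbation expansion of the Gram matrices $\cm{\widehat{A}}_{(j)}\cm{\widehat{A}}_{(j)}'$ (following Izenman and Velu), and then expands $\cm{\widehat{G}}=[\![\cm{\widehat{A}};\bm{\widehat{U}}_1',\bm{\widehat{U}}_2',\bm{\widehat{U}}_3']\!]$ multilinearly exactly as in your third step.
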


The next corollary shows that the proposed estimator $\cm{\widehat{A}}_{\textup{MLR}}$ is asymptotically more efficient than the ordinary least squares (OLS) estimator
\begin{equation*}
\label{eq:OLS}
\bm{\widehat{A}}_{\textup{OLS}}=\argmin_{\bm{B}\in\mathbb{R}^{N\times NP}}\sum_{t=1}^T\|\bm{y}_t-\bm{B}\bm{x}_t\|_2^2
\end{equation*}
for the full VAR model in \eqref{eq:VAR} and the reduced-rank regression (RRR) estimator
\begin{equation*}
\label{eq:RRR}
\bm{\widehat{A}}_{\textup{RRR}}=\argmin_{\bm{B}\in\mathbb{R}^{N\times NP},\text{ rank}(\bm{B})\leq r_1}\sum_{t=1}^T\|\bm{y}_t-\bm{B}\bm{x}_t\|_2^2
\end{equation*}
for the  reduced-rank VAR model in \eqref{model-vec}, where $r_1$ is the rank of $\cm{A}_{(1)}$.
Denote by $\cm{\widehat{A}}_{\textup{OLS}}$ and $\cm{\widehat{A}}_{\textup{RRR}}$ the transition tensors formed by $\bm{\widehat{A}}_{\textup{OLS}}$ and $\bm{\widehat{A}}_{\textup{RRR}}$, respectively.

\begin{corollary}
	\label{cor:comparison}
Under the conditions of Theorem \ref{thm:Asymptotic},
$\sqrt{T}\{\textup{vec}((\cm{\widehat{A}}_{\textup{OLS}})_{(1)})-\textup{vec}(\cm{A}_{(1)})\}\rightarrow N(\bm{0},\bm{\Sigma}_{\emph{OLS}})$ and $\sqrt{T}\{\textup{vec}((\cm{\widehat{A}}_{\textup{RRR}})_{(1)})-\textup{vec}(\cm{A}_{(1)})\}\rightarrow N(\bm{0},\bm{\Sigma}_{\emph{RRR}})$ 	
%	 $\sqrt{T}\{\textup{vec}(\bm{\widehat{A}}_{\emph{OLS}})-\bm{h}(\bm{\phi})\}\rightarrow N(0,\bm{\Sigma}_{\emph{OLS}})$ and $\sqrt{T}\{\textup{vec}(\bm{\widehat{A}}_{\emph{RRR}})-\bm{h}(\bm{\phi})\}\rightarrow N(0,\bm{\Sigma}_{\emph{RRR}})$ 
	in distribution as $T\rightarrow\infty$.
	Moreover, it holds that $\bm{\Sigma}_{\emph{MLR}}\leq \bm{\Sigma}_{\emph{RRR}}\leq \bm{\Sigma}_{\emph{OLS}}$.
\end{corollary}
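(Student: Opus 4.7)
My plan is to apply the overparameterized least squares framework of \cite{Shapiro86} uniformly to all three estimators and then read off the Loewner ordering from a projection representation of the resulting asymptotic covariances.

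First, for $\bm{\widehat{A}}_{\textup{OLS}}$ I would take the trivial parameterization $\bm{\phi}_{\textup{OLS}} = \textup{vec}(\cm{A}_{(1)})$ with Jacobian $\bm{H}_{\textup{OLS}} = \bm{I}_{N^2P}$. Repeating the CLT--Slutsky argument used in the proof of Theorem~\ref{thm:Asymptotic}, under Assumption~\ref{asmp:stationary} and $\mathbb{E}\|\bm{\epsilon}_t\|_2^4<\infty$ this yields the asymptotic normality with $\bm{\Sigma}_{\textup{OLS}} = \bm{J}^{-1}$; alternatively, this is the classical VAR OLS limit. For $\bm{\widehat{A}}_{\textup{RRR}}$, I would parameterize the rank-$r_1$ constraint as $\cm{A}_{(1)} = \bm{V}_1\bm{V}_2'$ with $\bm{V}_1\in\mathbb{R}^{N\times r_1}$ and $\bm{V}_2\in\mathbb{R}^{NP\times r_1}$, set $\bm{\phi}_{\textup{RRR}} = (\textup{vec}(\bm{V}_1)',\textup{vec}(\bm{V}_2)')'$, and compute the Jacobian $\bm{H}_{\textup{RRR}} = \partial\bm{h}/\partial\bm{\phi}_{\textup{RRR}}$ via standard $\textup{vec}(ABC)$ identities. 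Shapiro's theorem again delivers the claimed asymptotic normality with $\bm{\Sigma}_{\textup{RRR}} = \bm{H}_{\textup{RRR}}(\bm{H}_{\textup{RRR}}'\bm{J}\bm{H}_{\textup{RRR}})^\dagger\bm{H}_{\textup{RRR}}'$.

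The ordering then follows from the projection identity
\[
\bm{H}(\bm{H}'\bm{J}\bm{H})^\dagger\bm{H}' = \bm{J}^{-1/2}\bm{P}_{\bm{J}^{1/2}\bm{H}}\bm{J}^{-1/2},
\]
where $\bm{P}_{\bm{J}^{1/2}\bm{H}}$ is the orthogonal projector onto $\textup{col}(\bm{J}^{1/2}\bm{H})$. The MLR parameterization factors through the RRR parameterization via $\bm{V}_1 = \bm{U}_1$ and $\bm{V}_2 = (\bm{U}_3\otimes\bm{U}_2)\cm{G}_{(1)}'$, so the chain rule gives $\bm{H}_{\textup{MLR}} = \bm{H}_{\textup{RRR}}\,(\partial\bm{\phi}_{\textup{RRR}}/\partial\bm{\phi}_{\textup{MLR}})$ and hence $\textup{col}(\bm{H}_{\textup{MLR}})\subseteq\textup{col}(\bm{H}_{\textup{RRR}})\subseteq\textup{col}(\bm{H}_{\textup{OLS}}) = \mathbb{R}^{N^2P}$. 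Consequently $\bm{P}_{\bm{J}^{1/2}\bm{H}_{\textup{MLR}}}\leq\bm{P}_{\bm{J}^{1/2}\bm{H}_{\textup{RRR}}}\leq\bm{I}$ in the Loewner order, and conjugating by $\bm{J}^{-1/2}$ preserves this ordering, yielding $\bm{\Sigma}_{\textup{MLR}}\leq\bm{\Sigma}_{\textup{RRR}}\leq\bm{\Sigma}_{\textup{OLS}}$.

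The main obstacle I anticipate is a careful treatment of the Moore--Penrose inverse when $\bm{H}$ is rank-deficient, which generically happens for the overparameterized RRR and MLR models. This is handled by reducing $\bm{H}$ to any full-column-rank submatrix with the same column space, since both sides of the projection identity depend only on $\textup{col}(\bm{H})$ and Shapiro's framework is designed to absorb such reductions. The tangent space inclusions then transfer unchanged to the reduced Jacobians, so the Loewner ordering follows without further complication.
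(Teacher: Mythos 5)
Your proposal is correct and follows essentially the same route as the paper: Shapiro's overparameterized least squares framework for all three estimators, the projection representation $\bm{H}(\bm{H}'\bm{J}\bm{H})^{\dagger}\bm{H}'=\bm{J}^{-1/2}\bm{P}_{\bm{J}^{1/2}\bm{H}}\bm{J}^{-1/2}$, and the chain-rule factorization of the MLR Jacobian through the RRR Jacobian to obtain nested column spaces and hence the Loewner ordering. The only cosmetic difference is that you parameterize the rank constraint as $\cm{A}_{(1)}=\bm{V}_1\bm{V}_2'$ while the paper uses the SVD parameterization $\bm{U}\bm{D}\bm{V}'$; both charts have Jacobians with the same column space (the tangent space of the rank-$r_1$ manifold), so the projections and covariances coincide, and your anticipated Moore--Penrose complication is in fact vacuous since $\bm{K}(\bm{K}'\bm{K})^{\dagger}\bm{K}'$ equals the orthogonal projector onto $\textup{col}(\bm{K})$ even for rank-deficient $\bm{K}$.
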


\subsection{Alternating least squares algorithm}

Let $\mathcal{F}_t=\sigma(\bm{\epsilon}_t,\bm{\epsilon}_{t-1},\cdots)$ be the $\sigma$-field generated by $\{\bm{\epsilon}_s,s\leq t\}$ and recall that $\bm{X}_t=(\bm{y}_{t-1},\dots,\bm{y}_{t-P})$. The objective function in \eqref{eq:LSE} is a nonlinear function of $\cm{G}$, $\bm{U}_1$, $\bm{U}_2$ and $\bm{U}_3$. However, from model \eqref{eq:MLRVAR}, we have
\begin{equation}
\begin{split}
\mathbb{E}(\bm{y}_t|\mathcal{F}_{t-1})&=\left ((\bm{x}_t'(\bm{U}_3\otimes\bm{U}_2)\cm{G}_{(1)}' )\otimes\bm{I}_N\right)\text{vec}(\bm{U}_1)\\
&=\bm{U}_1\cm{G}_{(1)}((\bm{U}_3'\bm{X}_t')\otimes\bm{I}_{r_2})\text{vec}(\bm{U}_2')\\
&=\bm{U}_1\cm{G}_{(1)}(\bm{I}_{r_3}\otimes(\bm{U}_2'\bm{X}_t))\text{vec}(\bm{U}_3)\\
&=(((\bm{U}_3\otimes\bm{U}_2)'\bm{x}_t)'\otimes\bm{U}_1)\text{vec}(\cm{G}_{(1)}),
\end{split}
\end{equation}
which implies that the objective function in \eqref{eq:LSE} is linear with respect to any of $\cm{G}$, $\bm{U}_1$, $\bm{U}_2$ and $\bm{U}_3$ when the other three are fixed.

Given the multilinear ranks $(r_1, r_2, r_3)$, we can employ Algorithm \ref{alg:LSE} to find $\cm{\widehat{A}}_{\textup{MLR}}$. Note that this is an alternating least squares algorithm where each step has a closed-form solution.
In practice,  the multilinear ranks need to be selected consistently, and we relegate the details to Section 5. The following proposition gives the convergence property of Algorithm \ref{alg:LSE}.

\begin{proposition}\label{prop3}
	Suppose that the stationary points of the objective function in \eqref{eq:LSE} are isolated, up to an arbitrary nonsingular linear transformation.
	Then $\bm{\phi}^{(k)}$ converges to a stationary point as $k\rightarrow\infty$, where $\bm{\phi}^{(k)}=(\textup{vec}(\cm{G}^{(k)})',\textup{vec}(\bm{U}_1^{(k)})',\textup{vec}(\bm{U}_2^{(k)})',\textup{vec}(\bm{U}_3^{(k)})')'$.
	Moreover, let $\bm{\phi}^{(\infty)}=(\textup{vec}(\cm{G}^{(\infty)})',\textup{vec}(\bm{U}_1^{(\infty)})',\textup{vec}(\bm{U}_2^{(\infty)})',\textup{vec}(\bm{U}_3^{(\infty)})')'$ be a strict local minimum of the objective function.
	Then $\{\bm{\phi}^{(k)}\}$ will be attracted to $\bm{\phi}^{(\infty)}$ if the initial value $\bm{\phi}^{(0)}$ is sufficiently close to $\bm{\phi}^{(\infty)}$.
\end{proposition}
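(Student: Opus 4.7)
The plan is to exploit the multilinearity identities displayed immediately before the proposition: with any three of the four blocks in $\{\cm{G},\bm{U}_1,\bm{U}_2,\bm{U}_3\}$ held fixed, the objective $L$ is a convex quadratic in the remaining block, so each sub-step of Algorithm~\ref{alg:LSE} is a closed-form global minimizer of an ordinary least squares problem. This yields the monotone descent inequality $L(\bm{\phi}^{(k+1)})\le L(\bm{\phi}^{(k)})$; combined with $L\ge 0$, the sequence $\{L(\bm{\phi}^{(k)})\}$ converges to some finite limit $L^{\infty}$, and, in particular, the successive gaps $L(\bm{\phi}^{(k)})-L(\bm{\phi}^{(k+1)})$ tend to zero.

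Next I would pass from tensor to parameters. Let $\cm{A}^{(k)}=[\![\cm{G}^{(k)};\bm{U}_1^{(k)},\bm{U}_2^{(k)},\bm{U}_3^{(k)}]\!]$. Because $L$ depends on the parameters only through $\cm{A}$, and because $L(\cm{B})\to\infty$ as $\|\cm{B}\|_{\textup{F}}\to\infty$ whenever the sample Gram matrix $T^{-1}\sum_{t}\bm{x}_t\bm{x}_t'$ is positive definite (which holds almost surely under Assumption~\ref{asmp:stationary}), the sequence $\{\cm{A}^{(k)}\}$ is bounded. To make the parameter sequence itself bounded, I would quotient by the group action $(\cm{G},\bm{U}_1,\bm{U}_2,\bm{U}_3)\mapsto(\cm{G}\times_1\bm{O}_1\times_2\bm{O}_2\times_3\bm{O}_3,\bm{U}_1\bm{O}_1^{-1},\bm{U}_2\bm{O}_2^{-1},\bm{U}_3\bm{O}_3^{-1})$ mentioned after the Tucker representation, and select a canonical representative of each orbit (e.g.\ the HOSVD form). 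A standard KKT computation on this slice then shows that any accumulation point of $\{\bm{\phi}^{(k)}\}$ is a simultaneous fixed point of all four sub-updates and is therefore a stationary point of $L$.

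To promote subsequential convergence to full convergence, I would invoke a standard Ostrowski-type argument: the accumulation set $\Omega$ of $\{\bm{\phi}^{(k)}\}$ is compact and connected, and the hypothesis of the proposition says its image in the quotient consists of isolated orbits; to close the argument one needs $\|\bm{\phi}^{(k+1)}-\bm{\phi}^{(k)}\|\to 0$ (on the slice), which follows from the vanishing successive objective gaps together with the strict convexity of each sub-problem on the generic set where the conditional Hessian is positive definite (so that the block minimizer depends continuously on the frozen variables). The conjunction of connectedness of $\Omega$ and discreteness of orbits forces $\Omega$ to collapse to a single orbit, yielding convergence of $\bm{\phi}^{(k)}$ modulo the group action.

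Finally, for the local attraction claim, let $\bm{\phi}^{(\infty)}$ be a strict local minimum. By strictness there is a neighborhood $\mathcal{N}$ (modulo the group) in which $\bm{\phi}^{(\infty)}$ is the unique stationary point and on which $L(\bm{\phi})>L(\bm{\phi}^{(\infty)})$ off $\bm{\phi}^{(\infty)}$. Choosing $\bm{\phi}^{(0)}$ so close to $\bm{\phi}^{(\infty)}$ that $L(\bm{\phi}^{(0)})<\inf_{\partial\mathcal{N}}L$, monotone descent traps every iterate inside $\mathcal{N}$, and the previous paragraph forces convergence to $\bm{\phi}^{(\infty)}$. The chief obstacle throughout is preventing the non-identifiability of the Tucker factors from producing drift along the orbits of the group action; essentially every delicate step (boundedness of iterates, step-size vanishing, isolation) requires either working on the quotient or pinning down a canonical cross-section, which is why the hypothesis is phrased ``up to an arbitrary nonsingular linear transformation."
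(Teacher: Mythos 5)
Your proposal is correct in substance but takes a genuinely different route from the paper's proof. The paper handles global convergence by directly invoking the block relaxation theory of \citet{Lange10}, verifying five conditions: coercivity of $L$ (claimed via compactness of the sublevel set $\{\bm{\phi}:L(\bm{\phi})\leq L(\bm{\phi}^{(0)})\}$), isolation of stationary points (the assumed hypothesis), continuity of the algorithm map via the implicit function theorem, equivalence of fixed points and stationary points, and monotone descent with equality only at fixed points; it then gets local attraction from Ostrowski's theorem, i.e., the spectral radius of the differential of the algorithmic map at a strict local minimum being below one, with the details deferred to \citet{Lange10}. You instead rebuild the argument from scratch: monotone descent plus coercivity in $\cm{A}$, an explicit quotient by the group action $(\cm{G},\bm{U}_1,\bm{U}_2,\bm{U}_3)\mapsto(\cm{G}\times_1\bm{O}_1\times_2\bm{O}_2\times_3\bm{O}_3,\bm{U}_1\bm{O}_1^{-1},\bm{U}_2\bm{O}_2^{-1},\bm{U}_3\bm{O}_3^{-1})$ with a canonical slice, a connectedness-of-the-accumulation-set argument in place of Lange's theorem, and a sublevel-set capture argument in place of Ostrowski. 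What each buys: your explicit orbit treatment is actually \emph{more} careful than the paper's, since the sublevel sets of $L$ are not compact in raw parameter space (one can inflate $\cm{G}$ and deflate the $\bm{U}_i$ along an orbit without changing $\cm{A}$), so coercivity only holds on the quotient --- your identification of this as the chief obstacle is exactly what the clause ``up to an arbitrary nonsingular linear transformation'' is quietly absorbing in the paper. Conversely, your capture argument has one soft spot the Ostrowski route avoids: monotone descent alone does not trap a block-update iterate in $\mathcal{N}$, because each update is a jump rather than a continuous path; you need the extra observation that each subproblem is convex, so along the segment from $\bm{\phi}^{(k)}$ to $\bm{\phi}^{(k+1)}$ (which lies in a single block's coordinate subspace) the objective is bounded by the maximum of its endpoint values, hence remains below $\inf_{\partial\mathcal{N}}L$ and the segment cannot cross $\partial\mathcal{N}$. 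With that one-line repair, and granting the same genericity of nondegenerate blockwise least squares problems that the paper itself assumes implicitly (its claim that strict descent occurs iff a block actually changes), your argument is complete and, in the global part, more self-contained than the original.
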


\begin{algorithm}[t]
	\caption{Alternating least squares algorithm for $\cm{\widehat{A}}_{\textup{MLR}}$}
	\label{alg:LSE}
	Initialize: $\cm{A}^{(0)}$\\ %=\cm{\widehat{A}}_{\textup{OLS}}
	HOSVD: $\cm{A}^{(0)}\approx\cm{G}^{(0)}\times_1\bm{U}_1^{(0)}\times_2\bm{U}_2^{(0)}\times_3\bm{U}_3^{(0)}$ with multilinear ranks $(r_1,r_2,r_3)$\\
	\textbf{repeat} $k=0,1,2,\dots$\\
	\hspace*{1cm}$\bm{U}_1^{(k+1)}\leftarrow\arg\min_{\bm{U}_1}\sum_{t=1}^T\|\bm{y}_t-  ((\bm{x}_t'(\bm{U}^{(k)}_3\otimes\bm{U}^{(k)}_2)\cm{G}^{(k)\prime}_{(1)})\otimes\bm{I}_N)\text{vec}(\bm{U}_1)\|_2^2$\\
	\hspace*{1cm}$\bm{U}_2^{(k+1)} \leftarrow \arg\min_{\bm{U}_2}\sum_{t=1}^T\|\bm{y}_t-\bm{U}_1^{(k+1)}\cm{G}^{(k)}_{(1)}((\bm{X}_t\bm{U}_3^{(k)})'\otimes\bm{I}_{r_2})\text{vec}(\bm{U}_2')\|_2^2$\\
	\hspace*{1cm}$\bm{U}_3^{(k+1)} \leftarrow \arg\min_{\bm{U}_3}\sum_{t=1}^T\|\bm{y}_t-\bm{U}_1^{(k+1)}\cm{G}^{(k)}_{(1)}(\bm{I}_{r_3}\otimes(\bm{U}_2^{(k+1)'}\bm{X}_t))\text{vec}(\bm{U}_3)\|_2^2$\\
	\hspace*{1cm}$\cm{G}^{(k+1)}\leftarrow \arg\min_{\footnotesize{\cm{G}}}\sum_{t=1}^T\|\bm{y}_t-(((\bm{U}_3^{(k+1)}\otimes\bm{U}_2^{(k+1)})'\bm{x}_t)'\otimes \bm{U}_1^{(k+1)})\text{vec}(\cm{G}_{(1)})\|_2^2$\\
	\hspace*{1cm}$\cm{A}^{(k+1)}\leftarrow\cm{G}^{(k+1)}\times_1\bm{U}_1^{(k+1)}\times_2\bm{U}_2^{(k+1)}\times_3\bm{U}_3^{(k+1)}$\\
	\textbf{until convergence}\\
	Finalize: $\bm{\widehat{U}}_i\leftarrow$ top $r_i$ left singular vectors of $\cm{\widehat{A}}_{(i)}$ with positive first elements, $1\leq i\leq 3$\\
	\hspace*{1.6cm} $\cm{\widehat{G}}\leftarrow [\![\cm{\widehat{A}};\bm{\widehat{U}}_1',\bm{\widehat{U}}_2',\bm{\widehat{U}}_3']\!]$
\end{algorithm}

\begin{remark}\label{remark:init1}
If the sample size is sufficiently large, by Corollary \ref{cor:comparison},  $\cm{\widehat{A}}_{\textup{OLS}}$ can be used as the initial value $\cm{A}^{(0)}$ of Algorithm \ref{alg:LSE}. For smaller sample sizes,  $\cm{\widehat{A}}_{\textup{RRR}}$ or the nuclear norm estimator to be  discussed in  Section 5 can be employed instead. 
Moreover, Algorithm \ref{alg:LSE} does not guarantee convergence to the global solution defined in \eqref{eq:LSE}. As a result, in practice, we recommend a random initialization method with $\cm{A}^{(0)}=\cm{\widehat{A}}_{\textup{pre}}+T^{-1/2}\cm{T}$, where $\cm{\widehat{A}}_{\textup{pre}}$ is a preliminary estimate, say, $\cm{\widehat{A}}_{\textup{OLS}}$ or $\cm{\widehat{A}}_{\textup{RRR}}$, and $\cm{T}\in\mathbb{R}^{N\times N\times P}$ is a random perturbation whose entries are drawn independently from $N(0,1)$. Many randomized initial values can be tried, and the solution which yields the smallest value for the objective function will be adopted. 
\end{remark}

\begin{remark}
Algorithm \ref{alg:LSE} corresponds to the unconstrained estimation in \eqref{eq:LSE}. Thus, we do not need the orthogonality constraints of $\cm{G}$ and $\bm{U}_i$s. The unidentifiability of the Tucker decomposition does not affect the convergence of the algorithm, since Proposition \ref{prop3} does not require that the convergent sequence $\bm{\phi}^{(k)}$ is unique. Moreover, note that the final estimates $\cm{\widehat{G}}$ and $\bm{\widehat{U}}_i$s in Algorithm \ref{alg:LSE} are obtained from the unconstrained estimate of  $\cm{A}$, which is consistent with the definitions of $\cm{\widehat{G}}$ and $\bm{\widehat{U}}_i$s in Corollary \ref{cor1}. Similar alternating algorithms without imposing identification constraints can be found in the literature of tensor decomposition; see, e.g. \citet{Zhou13} and \cite{Li13}.
\end{remark}

\section{High-dimensional time series modeling}

\subsection{Sparse higher-order reduced-rank VAR}

As discussed in Section 2.2, the proposed model can effectively capture the dynamic information  along three dimensions by response, predictor and temporal factors, with  $\bm{U}_1, \bm{U}_2$ and $\bm{U}_3$ representing the the corresponding factor loadings. 
However, when the dimensions $N$ and/or $P$ are very large,  the fitted loading matrices often contain many small values, indicating relatively insignificant contribution of certain variables or lags to the factors. 
For example, if the $(i,j)$th entry of $\bm{U}_1$ is very small, then $y_{it}$ may be irrelevant to the $j$th response factor, with $1\leq i\leq N$ and $1\leq j \leq r_1$; see also the discussion below \eqref{eq:factormodel_summary} for similar interpretations regarding $\bm{U}_2$ and $\bm{U}_3$.

To improve the interpretability, we may shrink the small values in the factor  loading matrices to zero by imposing  sparsity assumptions on $\bm{U}_i$s. This allows us to substantially reduce the number of unknown parameters  while performing data-driven variable selection for each factor, and hence the estimation efficiency is also improved; see
\cite{ChenChan12} and \cite{Uematsu17}.

Specifically, we introduce the following $\ell_1$-penalized Sparse Higher-Order Reduced-Rank (SHORR) estimator:
\begin{equation}
\label{eq:SparseHOSVD}
\begin{split}
&\cm{\widehat{A}}_{\textup{SHORR}}\equiv[\![\cm{\widehat{G}};\bm{\widehat{U}}_1,\bm{\widehat{U}}_2,\bm{\widehat{U}}_3]\!]
=\underset{{\scriptsize\cm{G}},\bm{U}_1,\bm{U}_2,\bm{U}_3}{\argmin}\left\{L(\cm{G},\bm{U}_1,\bm{U}_2,\bm{U}_3)
+\lambda\|\bm{U}_3\otimes\bm{U}_2\otimes\bm{U}_1\|_1\right\}
\end{split}
\end{equation}
subject to 
\begin{equation}\label{eq:ortho}
	\cm{G}\in\text{AO}(r_1,r_2,r_3)\hspace{5mm}\text{and}\hspace{5mm}\bm{U}_i'\bm{U}_i=\bm{I}_{r_i},\hspace{5mm} i=1,2,3,
\end{equation}
where
$L(\cm{G},\bm{U}_1,\bm{U}_2,\bm{U}_3)$ is defined as in \eqref{eq:obj1}, and $\text{AO}(r_1,r_2,r_3)=\{\cm{G}\in\mathbb{R}^{r_1\times r_2\times r_3}: \cm{G}_{(i)}\text{ is row-orthogonal},~i=1,2,3\}$.
Unlike the unconstrained estimation in \eqref{eq:LSE}, the orthogonality constraints in \eqref{eq:ortho} are necessary; otherwise, the sparsity patterns of $\bm{U}_i$ cannot be identified. As in Section 3, we  will derive the statistical properties of the proposed estimator under the true multilinear ranks $(r_1, r_2, r_3)$, while a consistent rank selection procedure will be discussed in Section 5.

\begin{remark}
The proposed SHORR estimation method is different from the row-sparse reduced-rank regression that has been studied extensively in the literature \citep{ChenHuang12,Bunea12}.  We avoid imposing the row-sparsity because (1) it would restrict the flexibility and interpretability of the VAR model, and (2) with a row-sparse response factor matrix $\bm{U}_1$, those unselected time series cannot be predicted at all. Thus, we consider the general sparsity structure for  $\bm{U}_i$s rather than the row-sparsity. 
\end{remark}

\begin{remark}
Alternatively, one  might consider penalizing each $\bm{U}_i$ individually, e.g. with the penalty term $\sum_{i=1}^{3}\lambda_i\|\bm{U}_i\|_1$. Unfortunately, the three tuning parameters will bring about much higher computational costs and significant theoretical difficulties. To circumvent this problem, the SHORR estimator induces sparsity for $\bm{U}_1$, $\bm{U}_2$ and $\bm{U}_3$ jointly since $\|\bm{U}_3\otimes\bm{U}_2\otimes\bm{U}_1\|_1=\|\bm{U}_3\|_1\|\bm{U}_2\|_1\|\bm{U}_1\|_1$. 
Implementation of this joint penalty is convenient through the alternating algorithm to be introduced in Section 4.3. Similar ideas of joint penalization can be found in the literature, e.g., the joint Lasso penalty in \citet{Zhao_Leng2014} and the joint penalty for left and right singular vectors for sparse SVD  in \citet{ChenChan12}. 
Moreover, when $P$ is relatively small, we might wish to impose sparsity on $\bm{U}_1$ and $\bm{U}_2$ only, and then $\|\bm{U}_3\otimes\bm{U}_2\otimes\bm{U}_1\|_1$ can be replaced by $\|\bm{U}_2\otimes\bm{U}_1\|_1$.
\end{remark}

\subsection{Theoretical properties of the SHORR estimator}
To derive the non-asymptotic estimation and prediction error bounds of the SHORR estimator, we make the following assumptions.

\begin{assumption} (Gaussian error)
	\label{asmp:subgau}
	The errors $\{\bm{\epsilon}_t\}$ are $i.i.d.$ Gaussian random vectors with mean zero and positive definite covariance matrix $\bm{\Sigma}_{\bm{\epsilon}}$.
\end{assumption}

\begin{assumption} (Sparsity)
	\label{asmp:SSS}
	Each column of the factor matrices $\bm{U}_i$ has at most $s_i$ nonzero entries, for $i=1,2,3$.
\end{assumption}

\begin{assumption} (Restricted parameter space)
	\label{asmp:RPS}
	The parameter space for $\cm{G}$ and $\bm{U}_i$ with $1\leq i\leq 3$ is $\Omega=\{\cm{G}\in\textup{AO}(r_1,r_2,r_3):\sigma_1(\cm{G}_{(j)})\leq \bar{g}<\infty,~\textup{for}~1\leq j\leq 3\}\times\mathcal{U}_1\times\mathcal{U}_2\times\mathcal{U}_3$, where $\mathcal{U}_i=\{\bm{U}\in\mathbb{R}^{p_i\times r_i}:\bm{U}'\bm{U}=\bm{I}_{r_i},~\textup{and}~\bm{U}_{ij}^2\geq\nu>0~\textup{or}~\bm{U}_{ij}=0\}$ with $p_1=p_2=N$ and $p_3=P$,
	and $\nu$ is a uniform lower threshold for elements of $\bm{U}_{i}$s.
\end{assumption}

\begin{assumption} (Relative spectral gap)
	\label{asmp:RSG}
	The nonzero singular values of $\cm{A}_{(i)}$ satisfy that $\sigma^2_{j-1}(\cm{A}_{(i)})-\sigma^2_{j}(\cm{A}_{(i)})\geq\delta\sigma^2_{j-1}(\cm{A}_{(i)})$ for $2\leq j\leq r_i$ and $1\leq i \leq 3$, where $\delta$ is a positive constant.
\end{assumption}

Assumption \ref{asmp:subgau} enables us to apply the concentration inequalities for VAR models in \citet{Basu15}. The Gaussian condition may be relaxed to sub-Gaussianity by techniques in \citet{zheng2019testing}. Assumption \ref{asmp:SSS} states the sparsity of each factor matrix. Assumption \ref{asmp:RPS} imposes an upper bound on the core tensor $\cm{G}$, which is not a stringent assumption since large singular values in $\cm{G}$ could cause nonstationarity of the VAR process. The lower threshold $\nu$ for the $\bm{U}_i$s is essential to restrict the estimation error to a subspace such that the restricted eigenvalue condition \citep{Bickel09} can be established. Note that $\nu$ may shrink to zero as the dimension increases, so this condition is not too stringent.  Assumption \ref{asmp:RSG} guarantees that the singular values of each $\cm{A}_{(i)}$  are well separated. This rules out unidentifiability and allows us to derive the upper bound for the perturbation errors in Lemma \ref{lemma:perturbation} in Section \ref{append:lems} of the Appendix.

Assumption \ref{asmp:stationary} guarantees that the eigenvalues of the Hermitian matrix $\mathcal{A}^*(z)\mathcal{A}(z)$ over the unit circle $\{z\in\mathbb{C}:|z|=1\}$ are all positive, where $\mathcal{A}^*(z)$ denotes the conjugate transpose of $\mathcal{A}(z)$. Following \citet{Basu15}, let
\[
\mu_{\min}(\mathcal{A})=\underset{|z|=1}{\min}\lambda_{\min}(\mathcal{A}^*(z)\mathcal{A}(z))
\hspace{5mm}\text{and}\hspace{5mm}
\mu_{\max}(\mathcal{A})=\underset{|z|=1}{\max}\lambda_{\max}(\mathcal{A}^*(z)\mathcal{A}(z)),
\]
where $\lambda_{\min}(\cdot)$ and $\lambda_{\max}(\cdot)$ denote the minimum and maximum eigenvalues of a matrix, respectively.
It holds that
\begin{equation}
\mu_{\min}(\mathcal{A})=\underset{\theta\in[-\pi,\pi]}{\min}\lambda_{\min}\left(\left(\bm{I}_N-\sum_{p=1}^P\bm{A}_p'e^{ip\theta}\right)\left(\bm{I}_N-\sum_{p=1}^P\bm{A}_p'e^{-ip\theta}\right)\right)
\end{equation}
and
\begin{equation}
\mu_{\max}(\mathcal{A})=\underset{\theta\in[-\pi,\pi]}{\max}\lambda_{\max}\left(\left(\bm{I}_N-\sum_{p=1}^P\bm{A}_p'e^{ip\theta}\right)\left(\bm{I}_N-\sum_{p=1}^P\bm{A}_p'e^{-ip\theta}\right)\right).
\end{equation}

\begin{theorem}
	\label{thm:errorbound}
	Suppose that Assumptions \ref{asmp:stationary} and 3-6 hold, and  $(r_1,r_2, r_3)$ are known. If $\lambda\gtrsim\mathcal{M}\sqrt{\log(N^2P)/T}$ and  $T\gtrsim\log(N^2P)+\mathcal{M}^2d\min[\log(NP),\log(cNP/d)]$,
	then
	\begin{equation}
	\label{eq:L2est}
	\|\cm{\widehat{A}}_{\textup{SHORR}}-\cm{A}\|_{\textup{F}}\leq C_1\tau\sqrt{S}\lambda/\alpha,
	\end{equation}
	and
	\begin{equation}
	\label{eq:L2pred}
	T^{-1}\sum_{t=1}^T\|(\cm{\widehat{A}}_{\textup{SHORR}}-\cm{A})_{(1)}\bm{x}_t\|_2^2\leq C_2\tau^2S\lambda^2/\alpha,
	\end{equation}
	with probability at least $1-C\exp[-c\log(N^2P)]-C\exp\{-cd\min[\log(NP),\log(cNP/d)]\}$,
	where $c, C, C_1, C_2>0$ are absolute constants, $\mathcal{M}=\lambda_{\max}(\bm{\Sigma}_{\bm{\epsilon}})\left(1+\mu_{\max}(\mathcal{A})/\mu_{\min}(\mathcal{A})\right)$,  $d=\nu^{-2}r_1r_2r_3$, $\tau=\delta^{-1}r_1r_2r_3\sum_{i=1}^3\eta_i/\sqrt{r_i}$ with $\eta_i=(\sum_{j=1}^{r_i}\sigma_1^2(\cm{A}_{(i)})/\sigma_j^2(\cm{A}_{(i)}))^{1/2}$,
	$S=s_1s_2s_3$ and $\alpha=\lambda_{\min}(\bm{\Sigma}_{\bm{\epsilon}})/\mu_{\max}(\mathcal{A})$.
\end{theorem}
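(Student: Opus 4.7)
}
My plan is to combine a basic inequality derived from the optimality of the penalized estimator with (a) a deviation bound on the score process via the concentration inequalities for Gaussian VAR series in \citet{Basu15}, (b) an HOSVD perturbation lemma (Lemma \ref{lemma:perturbation}) to transfer tensor-level deviations to factor-level deviations, and (c) a restricted eigenvalue condition for the sample Gram matrix $\widehat{\bm{\Gamma}}=T^{-1}\sum_{t=1}^T \bm{x}_t \bm{x}_t'$ on the cone carved out by the SHORR penalty. The quantities $\mu_{\min}(\mathcal{A})$ and $\mu_{\max}(\mathcal{A})$ govern the spectral density of $\{\bm{x}_t\}$, which controls the interplay between the population Gram matrix $\bm{\Gamma}^*$ and its empirical counterpart.

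First I would write the basic inequality: feasibility of the truth $(\cm{G},\bm{U}_1,\bm{U}_2,\bm{U}_3)$ yields
\begin{equation*}
\frac{1}{T}\sum_{t=1}^T \|(\widehat{\cm{A}}-\cm{A})_{(1)}\bm{x}_t\|_2^2 \le \frac{2}{T}\sum_{t=1}^T \bm{\epsilon}_t' (\widehat{\cm{A}}-\cm{A})_{(1)}\bm{x}_t + \lambda \bigl(\|\bm{U}_3\otimes\bm{U}_2\otimes\bm{U}_1\|_1 - \|\widehat{\bm{U}}_3\otimes\widehat{\bm{U}}_2\otimes\widehat{\bm{U}}_1\|_1\bigr).
\end{equation*}
The score term is handled by the duality bound $|T^{-1}\sum_t \bm{\epsilon}_t' (\widehat{\cm{A}}-\cm{A})_{(1)}\bm{x}_t| \le \|T^{-1}\sum_t \bm{x}_t \bm{\epsilon}_t'\|_{\max}\cdot\|\widehat{\cm{A}}-\cm{A}\|_1$, and the max-norm of the score is controlled by the Hanson--Wright style inequality from \citet{Basu15}, giving a deviation of order $\mathcal{M}\sqrt{\log(N^2P)/T}$. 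This is why the choice $\lambda \asymp \mathcal{M}\sqrt{\log(N^2P)/T}$ is made.

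Next, using the identity $\text{vec}(\widehat{\cm{A}}_{(1)})=(\widehat{\bm{U}}_3\otimes \widehat{\bm{U}}_2\otimes \widehat{\bm{U}}_1)\text{vec}(\widehat{\cm{G}}_{(1)})$ together with Assumption \ref{asmp:RPS} and the HOSVD perturbation bound in Lemma \ref{lemma:perturbation}, I would argue that, on the good event, there exist rotation matrices $\bm{O}_i$ such that $\widehat{\bm{U}}_i\bm{O}_i-\bm{U}_i$ is quantitatively controlled by $\|\widehat{\cm{A}}-\cm{A}\|_{\textup{F}}$ via the relative spectral gap $\delta$, with the composite scale factor $\tau=\delta^{-1}r_1r_2r_3\sum_i\eta_i/\sqrt{r_i}$. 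This lets me (i) expand the penalty difference $\|\bm{U}_3\otimes\bm{U}_2\otimes\bm{U}_1\|_1-\|\widehat{\bm{U}}_3\otimes\widehat{\bm{U}}_2\otimes\widehat{\bm{U}}_1\|_1$ into $\ell_1$ contributions that are supported on the union of supports of the $\bm{U}_i$'s (of total cardinality $s_1s_2s_3=S$) plus an off-support penalty, and (ii) conclude that $\widehat{\cm{A}}-\cm{A}$ lies, up to a factor $\tau$, in a cone of approximately $S$-sparse tensors. Standard Lasso algebra then yields $\|\widehat{\cm{A}}-\cm{A}\|_1 \lesssim \tau\sqrt{S}\|\widehat{\cm{A}}-\cm{A}\|_{\textup{F}}$ on this cone.

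The remaining ingredient is the restricted eigenvalue property
\begin{equation*}
\frac{1}{T}\sum_{t=1}^T \|(\widehat{\cm{A}}-\cm{A})_{(1)}\bm{x}_t\|_2^2 \ge \alpha \|\widehat{\cm{A}}-\cm{A}\|_{\textup{F}}^2,
\end{equation*}
which I would establish by combining the population-level lower bound $\lambda_{\min}(\bm{\Sigma}_{\bm{\epsilon}}^{-1}\otimes\bm{\Gamma}^*)\ge \lambda_{\min}(\bm{\Sigma}_{\bm{\epsilon}})/\mu_{\max}(\mathcal{A})$ with an empirical-to-population deviation bound, again from \citet{Basu15}. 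Because the relevant cone is not a standard sparse cone but the image of a low-dimensional set of factor perturbations, the deviation must be uniform over this set; this is where the $\mathcal{M}^2 d\min[\log(NP),\log(cNP/d)]$ sample-size condition enters, with $d=\nu^{-2}r_1r_2r_3$ reflecting the effective dimension of the manifold traced out by $(\cm{G},\bm{U}_1,\bm{U}_2,\bm{U}_3)$ and the thresholding from Assumption \ref{asmp:RPS}. A standard $\epsilon$-net and peeling argument over this manifold delivers the stated probability. Combining the basic inequality, the score bound, the cone containment, and the RE inequality gives both \eqref{eq:L2est} and \eqref{eq:L2pred}.

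The main obstacle will be step (ii): the SHORR penalty $\|\bm{U}_3\otimes\bm{U}_2\otimes\bm{U}_1\|_1$ is a nonconvex function of $\cm{A}$ through a non-unique decomposition, and the feasible set is nonconvex because of the orthogonality constraint. Translating the penalty difference into a usable $\ell_1$ bound on $\widehat{\cm{A}}-\cm{A}$ requires carefully exploiting Assumption \ref{asmp:RSG} (to pin down the factors up to sign) and Assumption \ref{asmp:RPS} (to prevent sign-flipped small entries from blowing up the cone), and then doing a sharp bookkeeping of how errors propagate through the triple Kronecker product. Once that obstacle is cleared, the rest is standard high-dimensional VAR machinery.
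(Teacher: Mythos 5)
Your scaffolding (basic inequality, Gaussian-VAR deviation bounds, HOSVD perturbation, restricted eigenvalue) matches the paper's architecture, but the load-bearing step you yourself flag as ``the main obstacle'' is resolved differently in the paper, and the resolution you sketch does not go through. You bound the score term by $\|T^{-1}\sum_t \bm{x}_t\bm{\epsilon}_t'\|_{\max}\cdot\|\cm{\widehat{A}}-\cm{A}\|_1$ and then seek containment of $\cm{\widehat{A}}-\cm{A}$ in an ``approximately $S$-sparse'' cone. The trouble is that the penalty is $\lambda\|\bm{U}_3\otimes\bm{U}_2\otimes\bm{U}_1\|_1$, a function of the \emph{factors}, so the penalty difference $\lambda(\|\bm{U}\|_1-\|\bm{\widehat{U}}\|_1)$ cannot absorb the off-support part of a tensor-level $\ell_1$ norm: a small factor perturbation spreads over up to $(r_1s_1)(r_2s_2)(r_3s_3)$ entries of $\cm{A}$, and under the column-sparsity of Assumption \ref{asmp:SSS} the tensor $\cm{A}$ itself need not be $S$-sparse (the paper's remark following the theorem makes exactly this point). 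The paper never forms a cone for $\cm{\widehat{A}}-\cm{A}$. Instead it decomposes the vectorized error \emph{inside} the inner product, $\bm{\widehat{\Delta}}=\bm{\widehat{\Delta}}_{\bm{u}}\bm{\widehat{g}}+\bm{U}\bm{\widehat{\Delta}}_{\bm{g}}$ with $\bm{\widehat{\Delta}}_{\bm{u}}=\bm{\widehat{U}}-\bm{U}$ and $\bm{\widehat{\Delta}}_{\bm{g}}=\bm{\widehat{g}}-\bm{g}$, and Lemma \ref{lemma:DB} bounds the two pieces separately: $\langle T^{-1}\bm{Z}'\bm{e},\bm{\widehat{\Delta}}_{\bm{u}}\bm{\widehat{g}}\rangle\leq\lambda\|\bm{\widehat{\Delta}}_{\bm{u}}\|_1/4$ (folding $\bm{\widehat{g}}$ into the score via the bound $\sigma_1(\cm{G}_{(j)})\leq\bar g$ from Assumption \ref{asmp:RPS}) and $\langle T^{-1}\bm{Z}'\bm{e},\bm{U}\bm{\widehat{\Delta}}_{\bm{g}}\rangle\leq\lambda\|\bm{\widehat{\Delta}}_{\bm{g}}\|_1/4$ (projecting the score onto the $r_1r_2r_3$-dimensional column space of $\bm{U}$, which is essential — a max-norm bound over all $N^2P$ coordinates would be wasteful here). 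This keeps every $\ell_1$ quantity at the factor level where the penalty lives, so the standard cone algebra applies to $\bm{\widehat{\Delta}}_{\bm{u}}$ over the support $\mathbb{S}_{\bm{U}}$ of cardinality at most $\prod_i s_ir_i$; Lemma \ref{lemma:perturbation} then converts $\|\bm{\widehat{\Delta}}_{\bm{u}}\|_{\textup{F}}$ and $\|\bm{\widehat{\Delta}}_{\bm{g}}\|_2$ back into $\tau\|\bm{\widehat{\Delta}}\|_2$, yielding the $\tau\sqrt{S}\lambda/\alpha$ rate. Your rotation matrices $\bm{O}_i$ are unnecessary: under Assumption \ref{asmp:RSG} the factors are singular vectors pinned up to sign, which is how Lemma \ref{lemma:perturbation} is stated.

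Your restricted eigenvalue plan also deviates in a way that matters. An $\epsilon$-net and peeling argument over the parameter manifold is much heavier than what is needed, and the stated sample-size term $d\min[\log(NP),\log(cNP/d)]$ is the covering count of a \emph{sparse set}, not of a Stiefel-type manifold. The paper's Lemma \ref{lemma:RE} exploits the fact that both the truth and the estimator lie in the thresholded space $\Omega$ of Assumption \ref{asmp:RPS}: since every nonzero entry of each unit-norm factor column has squared magnitude at least $\nu$, each column has at most $\nu^{-1}$ nonzeros, and the Kronecker structure forces each of the $N$ blocks of $\bm{\widehat{\Delta}}$ to be \emph{hard} $d$-sparse with $d=2\nu^{-2}r_1r_2r_3$. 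The RE condition then follows from the sparse-set discretization inequality of \citet{Basu15} applied to $\mathcal{K}(d)$, with the population lower bound $\lambda_{\min}(\bm{\Gamma}^*)\geq\lambda_{\min}(\bm{\Sigma}_{\bm{\epsilon}})/\mu_{\max}(\mathcal{A})$ exactly as you anticipate. In short: your two key ideas in their stated form — a tensor-level $\ell_1$ cone and a manifold-net RE — would respectively fail and overcomplicate; the working proof replaces them with factor-level score/penalty bookkeeping and a hard-sparsity reduction.
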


Theorem \ref{thm:errorbound} gives the non-asymptotic error upper bounds under high-dimensional scaling. When the multilinear ranks $(r_1, r_2, r_3)$ and lower threshold $\nu$ are fixed, \eqref{eq:L2est} shows that $\cm{\widehat{A}}_{\textup{SHORR}}$ is a consistent estimator if $T\gtrsim S\log(N^2P)$. In this setting, the estimation and prediction error bounds in \eqref{eq:L2est} and \eqref{eq:L2pred} become $O_p(\sqrt{S\log(N^2P)/T})$ and $O_p(S\log(N^2P)/T)$, respectively.

\begin{remark}
\citet{Basu15} considers estimation of stationary Gaussian VAR($P$) models with sparse transition matrices  such that $\|\cm{A}\|_0=k$. For the Lasso estimator $\cm{\widehat{A}}_{\textup{LASSO}}=\argmin T^{-1}\sum_{t=1}^T\|\bm{y}_t-\cm{A}_{(1)}\bm{x}_t\|_2^2+\lambda\|\cm{A}_{(1)}\|_1$, it was shown that  $\|\cm{\widehat{A}}_\textup{LASSO}-\cm{A}\|_\textup{F}\lesssim \sqrt{k\log(N^2P)/T}$ and $T^{-1}\sum_{t=1}^T\|(\cm{\widehat{A}}_\textup{LASSO}-\cm{A})_{(1)}\bm{x}_t\|_2^2\lesssim k\log(N^2P)/T$ with high probability, which are consistent with the regular error bounds for the Lasso as $N^2 P$ corresponds to the number of parameters \citep[e.g.,][]{wang2015high}. In contrast, we assume that $\cm{A}$ admits an HOSVD with sparse factor matrices $\bm{U}_i$, but $\cm{A}$ itself is not necessarily sparse. When each $\bm{U}_i$ is row-sparse with $s_i$ nonzero rows, for $i=1,2,3$, it can be checked that $\cm{A}$ is a sparse tensor with sparsity level $S$. In this case, the SHORR estimator has the same error bounds as the Lasso estimator. However, in the general case, even when $\cm{A}$ has a sparse HOSVD,  $\cm{A}$ may not be highly sparse, i.e., $k$ is larger than $S$, so $\cm{\widehat{A}}_{\textup{SHORR}}$ may be more efficient than  $\cm{\widehat{A}}_\textup{LASSO}$.
\end{remark}

\subsection{ADMM algorithm}

There are two major challenges in developing an efficient algorithm for the SHORR estimator. First, the core tensor $\cm{G}$ is subject to the  all-orthogonal constraint in \eqref{eq:ortho} which cannot be handled in a straightforward way. Second, the $\ell_1$-regularization in \eqref{eq:SparseHOSVD}  and the orthogonality constraints  in \eqref{eq:ortho} are imposed jointly on  $\bm{U}_i$s. The former is nonsmooth while the latter is nonconvex.  To deal with  these challenges, we adopt the alternating direction method of multipliers (ADMM) algorithm \citep{Boyd11} to update $\bm{U}_i$s and $\cm{G}$ alternatingly; see Algorithm \ref{alg:ADMM}. 

\algsetup{indent=2em}
\begin{algorithm}[t]
	\caption{ADMM algorithm for SHORR estimator}
	\label{alg:ADMM}
	\begin{algorithmic}[1]
		\STATE Initialize: $\cm{A}^{(0)}$\\ %=\cm{\widehat{A}}_{\textup{NN}}
		\STATE HOSVD: $\cm{A}^{(0)}\approx\cm{G}^{(0)}\times_1\bm{U}_1^{(0)}\times_2\bm{U}_2^{(0)}\times_3\bm{U}_3^{(0)}$ with multilinear ranks $(r_1,r_2,r_3)$.
		\REPEAT 
		\STATE$\bm{U}_1^{(k+1)}\leftarrow\underset{\bm{U}_1'\bm{U}_1=\bm{I}_{r_1}}{\arg\min}\left\{L(\cm{G}^{(k)},\bm{U}_1,\bm{U}_2^{(k)},\bm{U}_3^{(k)})+\lambda\|\bm{U}_1\|_1\|\bm{U}_2^{(k)}\|_1\|\bm{U}_3^{(k)}\|_1\right\}$
		\STATE$\bm{U}_2^{(k+1)}\leftarrow\underset{\bm{U}_2'\bm{U}_2=\bm{I}_{r_2}}{\arg\min}\left\{L(\cm{G}^{(k)},\bm{U}_1^{(k+1)},\bm{U}_2,\bm{U}_3^{(k)})+\lambda\|\bm{U}_1^{(k+1)}\|_1\|\bm{U}_2\|_1\|\bm{U}_3^{(k)}\|_1\right\}$
		\STATE$\bm{U}_3^{(k+1)}\leftarrow\underset{\bm{U}_3'\bm{U}_3=\bm{I}_{r_3}}{\arg\min}\left\{L(\cm{G}^{(k)},\bm{U}_1^{(k+1)},\bm{U}_2^{(k+1)},\bm{U}_3)+\lambda\|\bm{U}_1^{(k+1)}\|_1\|\bm{U}_2^{(k+1)}\|_1\|\bm{U}_3\|_1\right\}$                         
		\STATE$\cm{G}^{(k+1)}\leftarrow \arg\min\Big\{ L(\cm{G},\bm{U}_1^{(k+1)},\bm{U}_2^{(k+1)},\bm{U}_3^{(k+1)})+\sum_{i=1}^3\varrho_i\|\cm{G}_{(i)}-\bm{D}_i^{(k)}\bm{V}_i^{(k)\prime}$\\
		\hspace{35mm}$+(\cm{C}_i^{(k)})_{(i)}\|_{\textup{F}}^2\Big\}$\\
		\FOR{$i\in\{1,2,3\}$}
		\STATE$\bm{D}_i^{(k+1)}\leftarrow\underset{\bm{D}_i=\diag(\bm{d}_i)}{\argmin}\|\cm{G}_{(i)}^{(k+1)}-\bm{D}_i\bm{V}_i^{(k)\prime}+(\cm{C}_i^{(k)})_{(i)}\|_{\textup{F}}^2$
		\STATE$\bm{V}_i^{(k+1)}\leftarrow\underset{\bm{V}_i'\bm{V}_i=\bm{I}_{r_i}}{\argmin}\|\cm{G}_{(i)}^{(k+1)}-\bm{D}_i^{(k+1)}\bm{V}_i^\prime+(\cm{C}_i^{(k)})_{(i)}\|_{\textup{F}}^2$
		\STATE$(\cm{C}_{i}^{(k+1)})_{(i)}\leftarrow(\cm{C}_{i}^{(k)})_{(i)}+\cm{G}_{(i)}^{(k+1)}-\bm{D}_i^{(k+1)}\bm{V}_i^{(k+1)\prime}$
		\ENDFOR
		\STATE$\cm{A}^{(k+1)}\leftarrow\cm{G}^{(k+1)}\times_1\bm{U}_1^{(k+1)}\times_2\bm{U}_2^{(k+1)}\times_3\bm{U}_3^{(k+1)}$
		\UNTIL{\textbf{convergence}}
	\end{algorithmic}
\end{algorithm}

Firstly,  to tackle the all-orthogonal constraint of $\cm{G}$, our idea is to separate it into three orthogonality constraints on the matricizations $\cm{G}_{(i)}$ for $1\leq i\leq 3$. This is to say that $\cm{G}_{(i)}$ can be decomposed as
$\cm{G}_{(i)}=\bm{D}_i\bm{V}_i'$,
where $\bm{D}_i\in\mathbb{R}^{r_i\times r_i}$  is a diagonal matrix,  and  $\bm{V}_1\in\mathbb{R}^{r_2 r_3\times r_1}$, $\bm{V}_2\in\mathbb{R}^{r_1 r_3\times r_2}$, and $\bm{V}_3\in\mathbb{R}^{r_1 r_2\times r_3}$ are orthonormal matrices with $\bm{V}_i'\bm{V}_i=\bm{I}_{r_i}$. 
Then, the augmented Lagrangian corresponding to the objective function in \eqref{eq:SparseHOSVD}  can be written as
\begin{equation*}
%\label{eq:lagrangian}
\begin{split}
\mathcal{L}_{\bm{\varrho}}(\cm{G},\{\bm{U}_i\},\{\bm{D}_i\},\{\bm{V}_i\};\{\cm{C}_i\})=&L(\cm{G},\bm{U}_1,\bm{U}_2,\bm{U}_3)+\lambda\|\bm{U}_3\otimes\bm{U}_2\otimes\bm{U}_1\|_1\\
&+2\sum_{i=1}^3\varrho_i\langle(\cm{C}_i)_{(i)},\cm{G}_{(i)}-\bm{D}_i\bm{V}_i'\rangle+\sum_{i=1}^3\varrho_i\|\cm{G}_{(i)}-\bm{D}_i\bm{V}_i'\|_\textup{F}^2,
\end{split}
\end{equation*}
where  $\cm{C}_1$, $\cm{C}_2, \cm{C}_3 \in\mathbb{R}^{r_1\times r_2\times r_3}$ are the tensor-valued  dual variables, and $\bm{\varrho}=(\varrho_1,\varrho_2,\varrho_3)'$ is the set of regularization parameters. This leads us to Algorithm \ref{alg:ADMM}. Note that all-orthogonal constraint of  $\cm{G}$ has been transferred to the matrices $\bm{V}_i$s in line 10, so no constraint is needed for updating $\cm{G}$ in line 7 of Algorithm \ref{alg:ADMM}.

Secondly, we consider the update of $\bm{U}_i$s. Since $L(\cm{G},\bm{U}_1,\bm{U}_2,\bm{U}_3)$ in \eqref{eq:obj1} is a least squares loss function with respect to each $\bm{U}_i$,  the $\bm{U}_i$-update steps in lines 4-6 of Algorithm \ref{alg:ADMM} are $\ell_1$-regularized least squares problems subject to an orthogonality constraint, which can be written in the general form:
\begin{equation}
\label{eq:SparseOrthogonal}
\underset{B}{\min}\left\{n^{-1}\|\bm{y}-\bm{X}\text{vec}(\bm{B})\|_2^2+\lambda\|\bm{B}\|_1\right\},~~\text{s.t.}~\bm{B}'\bm{B}=\bm{I}.
\end{equation}
Since the $\ell_1$-regularization and the orthogonality constraint for $\bm{B}$ are difficult to handle jointly, we adopt an  ADMM subroutine to separate them into two steps.
Specifically, we introduce the dummy variable  $\bm{W}$  as a surrogate for $\bm{B}$ and write problem \eqref{eq:SparseOrthogonal} into the equivalent form as follows:
\begin{equation}
\underset{\bm{B},\bm{W}}{\min}\{n^{-1}\|\bm{y}-\bm{X}\text{vec}(\bm{B})\|_2^2+\lambda\|\bm{W}\|_1\},~~\text{s.t.}~\bm{B}'\bm{B}=\bm{I} ~\text{and}~ \bm{B}=\bm{W}.
\end{equation}
Then the corresponding  augmented Lagrangian formulation is
\begin{equation}\label{eq:augl}
\underset{\bm{B},\bm{W}}{\min}\{n^{-1}\|\bm{y}-\bm{X}\text{vec}(\bm{B})\|_2^2+\lambda\|\bm{W}\|_1+2\kappa\langle \bm{M},\bm{B}-\bm{W}\rangle+\kappa\|\bm{B}-\bm{W}\|_{\textup{F}}^2\},
\end{equation}
where $\bm{M}$ is the dual variable, and $\kappa$ is a regularization parameter. The ADMM subroutine for \eqref{eq:augl} is presented in Algorithm \ref{alg:admm}. This yields solutions to the $\bm{U}_i$-update subproblems in Algorithm \ref{alg:ADMM}.

Note that the $\bm{B}$-update step in Algorithm \ref{alg:admm} and the $\bm{V}_i$-update step in line 10 of Algorithm \ref{alg:ADMM} are least squares problems with an orthogonality constraint.  
Hence, they can be solved efficiently by the splitting orthogonality constraint (SOC) method \citep{Lai14}. The $\bm{W}$-update step in Algorithm \ref{alg:admm} is an $\ell_1$-regularized minimization, which can be solved by the explicit soft-thresholding. The $\cm{G}$- and $\bm{D}_i$-update steps in lines 7 and 9 of Algorithm \ref{alg:ADMM} are simple least squares problems.

\algsetup{indent=2em}
\begin{algorithm}[t]
	\caption{ADMM subroutine for sparse and orthogonal regression}
	\label{alg:admm}
	\begin{algorithmic}[1]
		\STATE Initialize: $\bm{B}^{(0)}=\bm{W}^{(0)}$, $\bm{M}^{(0)}=\bm{0}$\\
		\REPEAT 
		\STATE$\bm{B}^{(k+1)}\leftarrow\arg\min_{\bm{B}^\prime \bm{B}=\bm{I}}\left\{n^{-1}\|\bm{y}-\bm{X}\text{vec}(\bm{B})\|^2_2+\kappa\|\bm{B}-\bm{W}^{(k)}+\bm{M}^{(k)}\|^2_{\textup{F}}\right\}$
		\STATE$\bm{W}^{(k+1)}\leftarrow\arg\min_{\bm{W}}\left\{\kappa\|\bm{B}^{(k+1)}-\bm{W}+\bm{M}^{(k)}\|^2_{\textup{F}}+\lambda\|\bm{W}\|_1\right\}$
		\STATE$\bm{M}^{(k+1)}\leftarrow\bm{M}^{(k)}+\bm{B}^{(k+1)}-\bm{W}^{(k+1)}$                        
		\UNTIL{\textbf{convergence}}
	\end{algorithmic}
\end{algorithm}

For general nonconvex problems, it is well known that ADMM algorithms need not converge, and even if they do, they need not converge to an optimal solution. A comprehensive algorithmic convergence analysis for Algorithm \ref{alg:ADMM} is challenging due to both the nested ADMM subroutine, Algorithm \ref{alg:admm},  and its interplay with the outer loop of Algorithm \ref{alg:ADMM}. 

\cite{WangYin19} gives a rigorous convergence analysis of multi-block ADMMs for nonconvex nonsmooth optimization with linear equality constraints. Their theory would be applicable to  Algorithm \ref{alg:admm} if the $\bm{B}$-update step in line 3 were exact. 
The extension to the inexact $\bm{B}$-update step would require a sophisticated analysis of the optimization error of the SOC method. We do not delve into the development of the convergence theory further in this paper.
Nonetheless, similarly to the analysis in \cite{Uematsu17}, under some high-level assumptions on $\mathcal{L}_{\bm{\varrho}}(\cdot)$, we can still  obtain the following convergence result for Algorithm \ref{alg:ADMM}. 

\begin{proposition}\label{prop4}
Let $\Delta\mathcal{L}_{\bm{\varrho}}(\cdot)$ be the decrease in the augmented Lagrangian $\mathcal{L}_{\bm{\varrho}}(\cdot)$ by a block update.
If $\sum_{k=1}^{\infty} \{[\Delta \mathcal{L}_{\bm{\varrho}}(\cm{G}^{(k)})]^{1/2}+\sum_{i=1}^3[\Delta \mathcal{L}_{\bm{\varrho}}(\bm{U}_i^{(k)})]^{1/2}+\sum_{i=1}^3[\Delta \mathcal{L}_{\bm{\varrho}}(\bm{D}_i^{(k)})]^{1/2}+\sum_{i=1}^3[\Delta \mathcal{L}_{\bm{\varrho}}(\bm{V}_i^{(k)})]^{1/2}\}<\infty$, then the sequence generated by Algorithm \ref{alg:ADMM} converges to a local solution of problem \eqref{eq:SparseHOSVD}.
\end{proposition}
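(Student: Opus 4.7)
The plan is to mirror the nonconvex multi-block ADMM strategy of \cite{WangYin19} and \cite{Uematsu17}, adapted to the nested structure of Algorithms \ref{alg:ADMM} and \ref{alg:admm}. First, I would verify that each primal block update of $\cm{G}$, $\bm{U}_i$, $\bm{D}_i$, and $\bm{V}_i$ is constructed to minimize the augmented Lagrangian $\mathcal{L}_{\bm{\varrho}}$ in that block with the others held fixed; together with the standard dual-ascent effect of the $\cm{C}_i$-updates, this makes $\{\mathcal{L}_{\bm{\varrho}}^{(k)}\}$ monotonically non-increasing across a full sweep. Since $\mathcal{L}_{\bm{\varrho}}$ is bounded below on the feasible set (the quadratic loss and the $\ell_1$ penalty are nonnegative, and the augmentation terms are nonnegative), $\{\mathcal{L}_{\bm{\varrho}}^{(k)}\}$ converges and the cumulative per-block decrease $\sum_k \Delta\mathcal{L}_{\bm{\varrho}}$ is finite.

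Second, I would convert per-block decreases of the Lagrangian into per-block changes of the iterates. Because each subproblem is locally strongly convex in its own block around its optimizer---owing to the quadratic penalties $\varrho_i\|\cm{G}_{(i)}-\bm{D}_i\bm{V}_i'\|_{\textup{F}}^2$ and $\kappa\|\bm{B}-\bm{W}\|_{\textup{F}}^2$ built into $\mathcal{L}_{\bm{\varrho}}$, and to the positive curvature of the quadratic loss restricted to the Stiefel tangent space---there should exist a constant $c>0$ such that for every block $\bm{Z}\in\{\cm{G},\bm{U}_i,\bm{D}_i,\bm{V}_i\}$,
\[
\|\bm{Z}^{(k+1)}-\bm{Z}^{(k)}\|_{\textup{F}} \leq c\,[\Delta\mathcal{L}_{\bm{\varrho}}(\bm{Z}^{(k)})]^{1/2}.
\]
The summability hypothesis of the proposition then yields $\sum_k\|\bm{Z}^{(k+1)}-\bm{Z}^{(k)}\|_{\textup{F}}<\infty$ for each block. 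Consequently each primal sequence is Cauchy, and the $\cm{C}_i$-update recursion inherits the same summability, so the dual sequences are Cauchy as well. The iterates therefore converge to some limit $(\cm{G}^{\infty},\{\bm{U}_i^{\infty}\},\{\bm{D}_i^{\infty}\},\{\bm{V}_i^{\infty}\},\{\cm{C}_i^{\infty}\})$.

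Third, I would identify this limit as a local solution of \eqref{eq:SparseHOSVD}. Passing to the limit in the first-order optimality conditions of each block subproblem---standard stationarity for the unconstrained $\cm{G}$- and $\bm{D}_i$-updates, the Riemannian (manifold) stationarity conditions for the orthogonal $\bm{V}_i$- and $\bm{U}_i$-updates, and the subdifferential inclusion for the $\ell_1$ term produced by the inner ADMM (Algorithm \ref{alg:admm})---recovers by continuity the KKT system of \eqref{eq:SparseHOSVD}. The vanishing successive differences force the primal residuals $\cm{G}_{(i)}^{\infty}-\bm{D}_i^{\infty}\bm{V}_i^{\infty\prime}=0$, so the all-orthogonal constraint on $\cm{G}^{\infty}$ is recovered through its factorizations, and the equality $\bm{B}=\bm{W}$ in the inner loop enforces both orthogonality and sparsity on each $\bm{U}_i^{\infty}$. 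Hence the limit is a KKT, and therefore local, solution of \eqref{eq:SparseHOSVD}.

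The main obstacle is precisely the quadratic-growth step that turns a small Lagrangian decrease into a small iterate change, especially for the $\bm{U}_i$-updates, which are solved only inexactly by the SOC-based inner ADMM on the nonconvex Stiefel manifold. A fully self-contained argument would require either a Kurdyka--\L ojasiewicz inequality for the semi-algebraic Lagrangian on the product of Stiefel manifolds in the spirit of \cite{WangYin19}, or an explicit quantification of the inner-loop optimization error of Algorithm \ref{alg:admm} combined with a local strong-convexity argument. It is exactly this analytical burden that the high-level summability hypothesis of the proposition is designed to encapsulate, which is why we adopt it as an assumption rather than derive it from first principles.
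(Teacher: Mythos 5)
Your core argument is exactly the paper's: both proofs observe that $\mathcal{L}_{\bm{\varrho}}$ is non-increasing and bounded below, then use strong convexity of $\mathcal{L}_{\bm{\varrho}}$ in each block (on the compact Stiefel manifold for $\bm{U}_i$, $\bm{V}_i$, and via strict convexity of the unconstrained quadratic subproblems for $\cm{G}$, $\bm{D}_i$) to get a quadratic-growth bound of the form $\Delta\mathcal{L}_{\bm{\varrho}}(\bm{Z}^{(k)})\geq \delta\, d^2(\bm{Z}^{(k+1)},\bm{Z}^{(k)})$, so that the assumed summability of the square roots makes each block sequence Cauchy and hence convergent to a blockwise local minimizer. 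Up to that point your proposal is a faithful reconstruction of the paper's proof, including the honest acknowledgment that the inexact SOC-based inner updates and a KL-type analysis are the real obstacles being swept into the hypothesis.

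Where you go beyond the paper, however, there is a genuine gap. The paper explicitly \emph{fixes} the dual variables $\cm{C}_i$ throughout the analysis and stops at blockwise primal convergence; you instead try to prove convergence of the duals and a KKT limit, and both of your supporting claims fail. First, the $\cm{C}_i$-update increments are the primal residuals $\cm{G}_{(i)}^{(k+1)}-\bm{D}_i^{(k+1)}\bm{V}_i^{(k+1)\prime}$, not successive differences of any primal block, so summability of $\|\bm{Z}^{(k+1)}-\bm{Z}^{(k)}\|_{\textup{F}}$ does not transfer to the duals; if the residuals converge to a nonzero limit, $\{\cm{C}_i^{(k)}\}$ diverges linearly even though every primal block is Cauchy. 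Second, the assertion that ``vanishing successive differences force the primal residuals to vanish'' is false in general for nonconvex ADMM: the primal iterates can converge to a limit with $\cm{G}_{(i)}^{\infty}\neq\bm{D}_i^{\infty}\bm{V}_i^{\infty\prime}$, in which case the limit is not feasible for the all-orthogonal constraint and the KKT identification collapses. Establishing residual vanishing would require an additional mechanism (e.g., dual boundedness plus the dual-update identity, or a sufficiently large $\varrho_i$ with a Lipschitz-gradient argument as in the nonconvex ADMM literature), none of which follows from your hypotheses; the paper's choice to freeze $\cm{C}_i$ is precisely what lets it avoid this issue.
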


\begin{remark}\label{remark:init2}
The initial value $\cm{A}^{(0)}$ for Algorithm 2 can be set to   the nuclear norm (NN) estimator $\cm{\widehat{A}}_{\textup{NN}}$ for low-rank VAR models \citep{Negahban11}, and it holds $\lVert\cm{\widehat{A}}_{\textup{NN}} - \cm{A} \rVert_{\textup{F}}=O_p(\sqrt{r_1NP/T})$; see also Section 5. Consequently, if one searches the SHORR estimator within a neighborhood of $\cm{\widehat{A}}_{\textup{NN}}$ of radius $O(\sqrt{r_1NP/T})$, then  all iterates $\cm{A}^{(k)}$ will satisfy $\lVert \cm{A}^{(k)}- \cm{A} \rVert_{\textup{F}} \leq  \lVert \cm{A}^{(k)}- \cm{\widehat{A}}_{\textup{NN}} \rVert_{\textup{F}} + \lVert \cm{\widehat{A}}_{\textup{NN}} - \cm{A} \rVert_{\textup{F}}=O_p(\sqrt{r_1NP/T})$. Additionally, Theorem \ref{thm:errorbound} implies $\lVert \cm{A}^{(k)}- \cm{\widehat{A}}_{\textup{SHORR}} \rVert_{\textup{F}} \leq \lVert \cm{A}^{(k)}- \cm{A} \rVert_{\textup{F}} + \lVert \cm{\widehat{A}}_{\textup{SHORR}}- \cm{A} \rVert_{\textup{F}} =O_p(\sqrt{r_1 NP/T})$, where  $\cm{\widehat{A}}_{\textup{SHORR}}$ is the global solution.
A similar convex relaxation based initialization approach is used by \cite{Uematsu17} for a nonconvex optimization problem with jointly imposed sparsity and orthogonality constraints.  
Moreover, since Algorithm 2 and Proposition \ref{prop4} do not guarantee the convergence to a global solution, similarly to the random initialization method in Remark \ref{remark:init1}, in practice we can try many randomized initial values $\cm{A}^{(0)}=\cm{\widehat{A}}_{\textup{NN}}+(NP/T)^{1/2}\cm{T}$, where the entries of the perturbation $\cm{T}\in\mathbb{R}^{N\times N\times P}$ are drawn independently from $N(0,(N^2P)^{-1})$ such that $\|\cm{T}\|_{\mathrm{F}}=O_p(1)$, and then select the final solution as the one with the smallest value for the objective function. 
\end{remark}

\begin{remark}
	The above algorithms are presented under known multilinear ranks and a fixed tuning parameter $\lambda$. In practice, to save computational costs, we recommend a two-step procedure: first select the ranks by the method to be introduced in Section 5, and then fixing these rank, select the tuning parameter $\lambda$  by a fine grid search with information criterion such as the BIC or its high-dimensional extensions. Although the degrees of freedom in a sparse and orthogonal matrix are unclear, the total number of nonzero elements in $\cm{G}$, $\bm{U}_1$, $\bm{U}_2$ and $\bm{U}_3$ could be used as proxies.
	% Similarly, we can apply the same tuning parameter procedure for the initial nuclear norm regularized estimation.
\end{remark}

\section{Rank Selection}

The theoretical results we derived for MLR and SHORR estimators  hinge on correct multilinear ranks. This section introduces a procedure for consistent rank selection.

Suppose that $\cm{\widehat{A}}$ is a consistent initial estimator  of $\cm{A}$.  We propose the following ridge-type ratio estimator \citep{xia2015consistently}  to estimate the multilinear ranks,
\[
\widehat{r}_i=\argmin_{1\leq j\leq p_i-1} \frac{\sigma_{j+1}(\cm{\widehat{A}}_{(i)})+c}{\sigma_{j}(\cm{\widehat{A}}_{(i)})+c},
\]
for $1\leq i\leq 3$, where $p_1=p_2=N$, $p_3=P$, and $c$ is a parameter that needs to be well chosen; see the assumption  below. Here we allow $N$, $P$ and the multilinear ranks to diverge with $T$. For $i=1,2,3$, denote
\[
\varsigma_i=\frac{1}{\sigma_{r_i}(\cm{A}_{(i)})}\cdot \max_{1\leq j< r_i} \frac{\sigma_{j}(\cm{A}_{(i)})}{\sigma_{j+1}(\cm{A}_{(i)})}.
\]

\begin{assumption} \label{asmp:minimal}
The parameter $c>0$ is chosen such that (i) $\|\cm{\widehat{A}}-\cm{A}\|_\textup{F}/c=o_p(1)$ and (ii) $c\max_{1\leq i\leq 3}\varsigma_i=o(1)$.
\end{assumption}

\begin{remark}\label{remark:rank}
In Assumption \ref{asmp:minimal}, Condition (i) states that the estimation error is dominated by $c$, while Condition (ii) requires that $c$  grows much slower than  $\varsigma_i$s. Roughly speaking,  Condition (ii) may be violated if the smallest nonzero singular value of $\cm{A}_{(i)}$ is too small, or if there is a big drop from $\sigma_{j}(\cm{A}_{(i)})$ to $\sigma_{j+1}(\cm{A}_{(i)})$, for some $1\leq j< r_i$ and $1\leq i\leq 3$. In either case, it will be more difficult for the ridge-type ratio to select the rank correctly. Note that if all the  nonzero singular values are bounded above and away from zero, then Condition (ii) simply becomes $c=o(1)$.
\end{remark}

Similar to the minimal signal assumption for variable selection consistency of sparsity-inducing estimators, Assumption \ref{asmp:minimal} is essential to the rank selection consistency:

\begin{theorem}\label{thm:rank}
Under Assumption \ref{asmp:minimal} and the conditions of Theorem \ref{thm:errorbound}, $\mathbb{P}(\widehat{r}_1=r_1,\widehat{r}_2=r_2,\widehat{r}_3=r_3)\to1$ as $T\to\infty$. 
\end{theorem}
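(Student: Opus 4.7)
The plan is to show that for each mode $i\in\{1,2,3\}$, the ridge-type ratio $R_i(j):=[\sigma_{j+1}(\cm{\widehat{A}}_{(i)})+c]/[\sigma_j(\cm{\widehat{A}}_{(i)})+c]$ is uniquely minimized at $j=r_i$ with probability tending to one. A union bound over $i=1,2,3$ then gives the conclusion. The workhorse is Weyl's inequality, which, combined with Assumption \ref{asmp:minimal}(i), yields the uniform perturbation bound $\max_{j}|\sigma_j(\cm{\widehat{A}}_{(i)})-\sigma_j(\cm{A}_{(i)})|\leq\|\cm{\widehat{A}}_{(i)}-\cm{A}_{(i)}\|_{\text{op}}\leq\|\cm{\widehat{A}}-\cm{A}\|_{\textup{F}}=o_p(c)$.

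I would then split the index set into three regimes and compare magnitudes. First, note that $\varsigma_i\geq 1/\sigma_{r_i}(\cm{A}_{(i)})$ since each ratio $\sigma_j/\sigma_{j+1}\geq 1$, so Assumption \ref{asmp:minimal}(ii) gives $c/\sigma_{r_i}(\cm{A}_{(i)})\leq c\varsigma_i=o(1)$. For $j=r_i$, the numerator is $o_p(c)+c=O_p(c)$ while the denominator equals $\sigma_{r_i}(\cm{A}_{(i)})(1+o_p(1))+c$, so $R_i(r_i)=O_p(c/\sigma_{r_i}(\cm{A}_{(i)}))=o_p(1)$. For $j>r_i$, both singular values of $\cm{A}_{(i)}$ vanish, so after perturbation both numerator and denominator equal $o_p(c)+c$, giving $R_i(j)=1+o_p(1)$, which is much larger than $R_i(r_i)$. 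For $j<r_i$, since $\sigma_j(\cm{A}_{(i)})\geq\sigma_{r_i}(\cm{A}_{(i)})\gg c\gg\|\cm{\widehat{A}}-\cm{A}\|_{\textup{F}}$, one gets $R_i(j)=\sigma_{j+1}(\cm{A}_{(i)})/\sigma_j(\cm{A}_{(i)})\cdot(1+o_p(1))$, and by the definition of $\varsigma_i$ this is at least $(\varsigma_i\sigma_{r_i}(\cm{A}_{(i)}))^{-1}(1+o_p(1))$.

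Finally I would compare $R_i(r_i)$ with the other regimes. The ratio $R_i(r_i)/R_i(j)$ for $j<r_i$ is bounded above by $O_p(c\varsigma_i)=o_p(1)$, and for $j>r_i$ by $O_p(c/\sigma_{r_i}(\cm{A}_{(i)}))=o_p(1)$. Hence the probability that $R_i$ is minimized at $j=r_i$ tends to $1$, proving $\widehat{r}_i\xrightarrow{p}r_i$. Intersecting the three events yields the theorem.

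The main obstacles are bookkeeping rather than conceptual. The delicate point is handling the regime $j>r_i$: here both $\sigma_j(\cm{\widehat{A}}_{(i)})$ and $\sigma_{j+1}(\cm{\widehat{A}}_{(i)})$ are $o_p(c)$, so the additive ridge $c$ is essential to prevent the ratio from being arbitrarily close to zero (which is exactly why the ratio form of \cite{xia2015consistently} is adopted). One must verify that $R_i(j)$ is bounded below by a constant (e.g., $1/2$) with high probability on this regime, using the lower bound $R_i(j)\geq c/(o_p(c)+c)$. A second technical subtlety is that the rates are allowed to depend on $T$ through $N$, $P$, $r_i$ and $\sigma_{r_i}(\cm{A}_{(i)})$, so all $o_p(1)$ statements must be justified via Assumption \ref{asmp:minimal} rather than treated as constants; but once those two quantitative inputs are in place, the comparison argument goes through without further difficulty.
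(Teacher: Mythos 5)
Your proposal is correct and follows essentially the same route as the paper's proof: a uniform singular-value perturbation bound (you use Weyl's inequality where the paper uses Mirsky's, an immaterial difference) combined with a three-regime comparison of the ridge-type ratio at $j<r_i$, $j=r_i$, and $j>r_i$, with Assumption \ref{asmp:minimal}(i) making the perturbation $o_p(c)$ and Assumption \ref{asmp:minimal}(ii) forcing $R_i(r_i)=O_p(c/\sigma_{r_i}(\cm{A}_{(i)}))$ to be of smaller order than both competing regimes. If anything, your explicit lower bound $R_i(j)\geq c/(o_p(c)+c)$ for $j>r_i$ and your attention to uniformity of the $o_p(1)$ terms make the argument slightly more careful than the paper's own writeup.
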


For the initial estimator, in this paper we use the nuclear norm (NN) estimator for low-rank VAR models defined as
\[\cm{\widehat{A}}_{\text{NN}}=\argmin\frac{1}{T}\sum_{t=1}^T\|\bm{y}_t-\cm{A}_{(1)}\bm{x}_t\|_2^2+\lambda\|\cm{A}_{(1)}\|_*.\]
Note that the estimation error rate derived in \cite{Negahban11} for VAR(1) models can be readily extended to VAR($P$) cases, which yields $\|\cm{\widehat{A}}_{\textup{NN}}-\cm{A}\|_\text{F}=O_p(\sqrt{r_1NP/T})$; see also Remark \ref{remark:init2}. Then, the rank selection consistency in Theorem \ref{thm:rank} would hold for a relatively large range of $c$. In practice, we recommend using $c=\sqrt{NP\log(T)/10T}$, which is shown to  perform satisfactorily in the first simulation experiment of Section 6. 

\section{Simulation experiments\label{sec:sim}}
\subsection{Rank selection consistency}
As the rank selection method proposed  in Section 5 will be used throughout all the following simulations and real data analysis in the next section, we first conduct an experiment to evaluate its consistency.

The data are generated from the proposed  model in \eqref{eq:MLRVAR} with dimensions $(N,P)=(10,5)$,  multilinear ranks $(r_1,r_2,r_3)=(3,3,3)$, and $\bm{\epsilon}_t \overset{i.i.d.}{\sim} N(\bm{0},\bm{I}_N)$. To examine how the singular values of $\cm{A}_{(i)}$s impact the rank selection performance, we let $\cm{G}$ be a diagonal cube with superdiagonal elements $(\cm{G}_{111},\cm{G}_{222},\cm{G}_{333})=(2,2,2)$ (case a), $(4,3,2)$ (case b), $(1,1,1)$ (case c), or $(2,1,0.5)$ (case d). As a result, the three nonzero singular values of every $\cm{A}_{(i)}$ are exactly $\cm{G}_{111},\cm{G}_{222}$ and $\cm{G}_{333}$. We generate the orthonormal factor matrices $\bm{U}_i$s as the first $r_i$ left singular vectors of Gaussian random  matrices while ensuring that the stationarity condition in Assumption \ref{asmp:stationary} holds. The parameter $c$ for the proposed  ridge-type ratio estimator is set to $\sqrt{NP\log(T)/10T}$. Figure \ref{fig:E4} presents the proportion of correct rank selection, i.e., the event $\{(\widehat{r}_1,\widehat{r}_2, \widehat{r}_3)=(r_1,r_2,r_3)\}$, across different sample sizes $T\in[50,400]$ based on 1000 replications for each setting. First, it can be seen that the proportion increases as $T$ increases and reaches almost one when $T=400$ for all cases. Second, as noted in Remark \ref{remark:rank}, the rank selection may be more difficult  if the smallest nonzero singular value $\sigma_{r_i}(\cm{A}_{(i)})$ is too small, or if there is a big gap between any two consecutive nonzero singular values.  Thus, the better performance of cases a and b may be due to their larger $\sigma_{r_i}(\cm{A}_{(i)})$ compared to the other two cases. Moreover, it can be seen that cases a and c outperform cases b and d, respectively, which may be explained by the equality of the singular values $\cm{G}_{111},\cm{G}_{222}$ and $\cm{G}_{333}$ in the former cases.

\subsection{Performance of MLR and SHORR estimators \label{subsec:sim2}}
We conduct two experiments to verify the theoretical properties of the proposed MLR and SHORR estimators. 

We first verify the asymptotic results in Section 3 for the proposed MLR estimator $\cm{\widehat{A}}_{\textup{MLR}}$ in comparison with the other two  low-dimensional estimators, $\cm{\widehat{A}}_{\textup{OLS}}$ and $\cm{\widehat{A}}_{\textup{RRR}}$. The data are generated from model \eqref{eq:MLRVAR} with $(N,P)=(10,5)$, $\bm{\epsilon}_t \overset{i.i.d.}{\sim} N(\bm{0},\bm{I}_N)$, $r_1=r_2=3$, and $r_3=2,3$ or 4.  We generate $\cm{G}$ by scaling a randomly generated tensor with independent standard normal entries such that $\min_{1\leq i\leq 3}\sigma_{r_i}(\cm{G}_{(i))})=1$, and generate $\bm{U}_i$s by the same method as in the previous experiment. There are 1000 replications for each setting. Throughout this and all following experiments, the multilinear ranks are selected by the method in Section 5. For each estimator, i.e., $\cm{\widehat{A}}=\cm{\widehat{A}}_{\textup{OLS}}, \cm{\widehat{A}}_{\textup{RRR}}$  or $\cm{\widehat{A}}_{\textup{MLR}}$, we calculate the average bias across all elements of $\cm{\widehat{A}}$ and  all replications. The square of this average bias is plotted against  $T\in[2000,4000]$ in the upper panels of Figure \ref{fig:E1}. We also calculate the empirical and asymptotic variances  for each element of  $\cm{\widehat{A}}$ according to Theorem \ref{thm:Asymptotic} and Corollary \ref{cor:comparison}. The averages of these empirical and asymptotic variances over all elements of $\cm{\widehat{A}}$ and  all replications, denoted by EVar and AVar, respectively, are plotted against $T$ in the lower panels of Figure \ref{fig:E1}. It can be seen that $\cm{\widehat{A}}_{\textup{MLR}}$ has much smaller squared bias,  EVar and AVar than $\cm{\widehat{A}}_{\textup{OLS}}$ and $\cm{\widehat{A}}_{\textup{RRR}}$. In addition, the EVar generally matches the corresponding AVar well, with their difference getting smaller as $T$ increases, although the EVar tends to overestimate the variances for all cases due to the large $(N,P)$ relative to the sample size. In sum, the asymptotic theory of the proposed MLR estimator in Section 3 is confirmed by this experiment.

The goal of the next experiment is to verify the non-asymptotic error bound of the proposed SHORR estimator. We consider two settings of the multilinear ranks, $(r_1,r_2,r_3)=(2,2,2)$ and $(3,3,3)$, and  the following four cases of $(N,P, s_1,s_2,s_3)$ for model \eqref{eq:MLRVAR}. For case a, we set $(N,P)=(10,5)$ and $(s_1,s_2,s_3)=(3,3,2)$. Then, cases b-d are defined by changing one of the settings in case a while keeping all others fixed. Specifically, we set $(s_1,s_2,s_3)=(2,2,2)$ in case b, $N=20$ in case c, and $P=10$ in case d. The core tensor $\cm{G}$ is generated in the same way as in the previous experiment, and the sparse orthonormal factor matrices $\bm{U}_i$s are generated randomly by the method  given in Section \ref{appendix: sim} of the Appendix. The regularization parameter $\lambda$ is selected by the BIC. By Theorem \ref{thm:errorbound}, fixing the multilinear ranks, it holds $\|\cm{\widehat{A}}_{\textup{SHORR}}-\cm{A}\|_{\textup{F}}^2=O_p(S\log(N^2P)/T)$, where $S=s_1s_2s_3$.  Thus, we denote $\gamma=S\log(N^2P)/T$ and set the sample size $T$ such that $\gamma=0.05$, 0.1, 0.15, 0.2, and 0.25. The mean squared error $\|\cm{\widehat{A}}_{\textup{SHORR}}-\cm{A}\|_{\textup{F}}^2$, averaged over 500 replications, is plotted against $\gamma$ in Figure \ref{fig:E2}. It is shown that the mean squared error generally increases linearly in $\gamma$, and the four lines in each plot almost coincide. These findings support the error bound  in Theorem \ref{thm:errorbound}.

\subsection{Comparison with existing estimation methods}

In the following experiment, we compare the performance of the proposed MLR and SHORR estimators with those of four existing ones for low-rank and/or sparse VAR models, including (i) Lasso \citep{Tibshirani96,Basu15}; (ii) nuclear norm \citep[NN]{Negahban11}; (iii) regression with a sparse SVD \citep[RSSVD]{ChenChan12}; and (iv) sparse and orthogonal factor regression \citep[SOFAR]{Uematsu17}.   

The data are generated from model \eqref{eq:MLRVAR} with $(N,P)=(10,5)$ (case a) or $(15,8)$ (case b). For both cases, we let $(r_1,r_2,r_3)=(3,3,3)$, $(s_1,s_2,s_3)=(3,3,2)$ and $\bm{\epsilon}_t \overset{i.i.d.}{\sim} N(\bm{0},\bm{I}_N)$.  For case a, $\cm{G}$ and $\bm{U}_i$s are generated by the same methods as in the previous subsection, and in case b, zeros rows are added below the $\bm{U}_i$s in case a. In both cases, entry-wisely $\|\cm{A}\|_0=500$. Hence, it is not sparse in case a, but is sparse in case b due to the zero rows of $\bm{U}_i$s.  Figure \ref{fig:E3} plots the estimation error $\|\cm{\widehat{A}}-\cm{A}\|_{\textup{F}}$ averaged over 500 replications  against $T\in[500,900]$ and $T\in[800,1200]$ for the smaller and larger $(N,P)$ cases, respectively. The error bars representing $\pm$ one standard deviation are also displayed for the proposed estimators, and suppressed for the others for clearer presentation.

Under both smaller and larger $(N,P)$, Figure \ref{fig:E3} shows that  both $\cm{\widehat{A}}_{\textup{MLR}}$ and $\cm{\widehat{A}}_{\textup{SHORR}}$ significantly outperform the other estimators which either   consider the low-rankness along only one direction or ignore it completely. Moreover, $\cm{\widehat{A}}_{\textup{SHORR}}$ consistently outperforms $\cm{\widehat{A}}_{\textup{MLR}}$ as the former exploits the sparsity of $\bm{U}_i$s in addition to the low-rankness along three dimensions. It is also interesting to note the different performances of $\cm{\widehat{A}}_{\textup{LASSO}}$ and $\cm{\widehat{A}}_{\textup{NN}}$ in Figure \ref{fig:E3}. Since  $\cm{\widehat{A}}_{\textup{LASSO}}$ only exploits entry-wise sparsity of $\cm{A}$, it has the worst performance when $\cm{A}$ is not sparse, as shown in the left panel.  In contrast, $\cm{\widehat{A}}_{\textup{NN}}$ only takes into account the low-rankness, so it performs best among the four existing estimators when $\cm{A}$ is not sparse, and yet becomes the worst when $\cm{A}$ is sparse as is the case for the right panel. This suggests that higher efficiency can be achieved by incorporating both the low-rankness and sparsity, which is the key advantage of the proposed $\cm{\widehat{A}}_{\textup{SHORR}}$.

\subsection{Comparison with factor models \label{subsec:sim_fac}}
The final experiment aims to compare the proposed model to the static and dynamic factor models, namely SFM and DFM, discussed in Section 2.3. Note that the SFM  cannot be directly used for forecasting since it does not impose an explicit model on the latent factors. However, for data generated by both the proposed model and the DFM, the SFM can be used to estimate the low-dimensional subspace where the conditional mean $\mathbb{E}(\bm{y}_t|\mathcal{F}_{t-1})$ lies; see Remark \ref{remark:U_space}.  

We consider four data generating processes for $\{\bm{y}_t\}$ with dimension $N=10$. Two of them are generated by DFMs with $\bm{e}_t\overset{i.i.d.}{\sim}N(\bm{0}, 0.5\bm{I}_{N})$: 
\begin{itemize}
	\item DFM-1:  The DFM with $r=1$, specified jointly by the SFM $\bm{y}_t=\bm{\Lambda}f_t+\bm{e}_t$ and the autoregressive latent factor $f_t = Bf_{t-1} + \xi_t$, where $\bm{\Lambda}\in\mathbb{R}^{10\times1}$ is a randomly generated vector with unit Euclidean norm, $B=0.5$, and $\xi_t\overset{i.i.d.}{\sim}N(0, 1)$.
	\item DFM-2: The DFM  with $r=3$, specified jointly by the SFM $\bm{y}_t=\bm{\Lambda}\bm{f}_t+\bm{e}_t$ and the VAR(1) process for the latent factors $\bm{f}_t = \bm{B}\bm{f}_{t-1} + \bm{\xi}_t$, where $\bm{\Lambda}\in\mathbb{R}^{10\times3}$ is a randomly generated orthonormal matrix, $\bm{B}=\textup{diag}(0.6,0.5,0.4)$, and $\bm{\xi}_t\overset{i.i.d.}{\sim}N(\bm{0}, \bm{I}_3)$.
\end{itemize}
The other two are generated by the proposed model with  $P=3$ and $\bm{\epsilon}_t \overset{i.i.d.}{\sim} N(\bm{0},\bm{I}_N)$:
\begin{itemize}
	\item MLR-1: The proposed multilinear low-rank VAR model in \eqref{eq:MLRVAR} with $(r_1,r_2,r_3)=(2,2,2)$. The core tensor $\cm{G}$ and factor matrices $\bm{U}_i$s are generated in the same way as the first experiment in Section 6.2.
	\item MLR-2: Same as MLR-1 except for $(r_1,r_2,r_3)=(3,3,3)$.
\end{itemize}

We first compare the performance of the proposed model and the SFM in terms of the estimation accuracy of the conditional mean subspace. The estimation of the SFM is conducted by the principal component method in \cite{BW16}. The subspace estimation error  can be measured by   $\|\bm{\widehat{\Lambda}}_1\bm{\widehat{\Lambda}}_1'-\bm{\Lambda}\bm{\Lambda}'\|_\text{F}^2$ for DFM-1 and DFM-2, and $\|\bm{\widehat{\Lambda}}_1\bm{\widehat{\Lambda}}_1'-\bm{U}_1\bm{U}_1'\|_\text{F}^2$ for MLR-1 and MLR-2, where  $\bm{\widehat{\Lambda}}_1=\bm{\widehat{\Lambda}}(\bm{\widehat{\Lambda}}'\bm{\widehat{\Lambda}})^{-1/2}$ is the normalized version of $\bm{\widehat{\Lambda}}$ for the fitted SFM. The results based on 1000 replications are displayed in Figure \ref{fig:E4a}.  It can be seen from the upper panels of the figure  that the conditional mean subspaces for DFM-1 and DFM-2 can be consistently estimated by fitting the corresponding SFMs. 
However, as discussed in Section \ref{subsec:fm}, for data generated by the DFM, fitting the proposed model will lead to  model misspecification. This may explain why the subspace estimation error for the MLR method is much larger and seems to persist for large $T$ in the upper panels of Figure \ref{fig:E4a}.  
On the other hand, when the data generating process is MLR-1 or MLR-2, the lower panels of  Figure \ref{fig:E4a} show that both methods can estimate the subspace consistently, although the MLR method is more efficient. This agrees with our observation that the proposed model admits an SFM representation.

We next compare the performance of the proposed model and the DFM through the  prediction error of the conditional mean $\mathbb{E}(\bm{y}_{T+1}|\mathcal{F}_{T})$. The DFM is estimated by a two-step approach, where we first obtain the estimated factors $\bm{\widehat{f}}_t$ by fitting an SFM, and then fit a (vector) autoregressive model to $\{\bm{\widehat{f}}_t\}$. Figure \ref{fig:E4b} displays the prediction error $\|\widehat{\mathbb{E}}(\bm{y}_{T+1}|\mathcal{F}_T)-\mathbb{E}(\bm{y}_{T+1}|\mathcal{F}_T)\|_2$ based on 1000 replications. Remarkably, as shown in the upper panels, even if the data are generated from the DFM-1 or DFM-2, the proposed model exhibits competitive forecasting performance despite the model misspecfication.  On the other hand, as shown in the lower panels, when the data are generated from MLR-1 or MLR-2, the  forecasting performance of the fitted DFM is rather poor. 
As discussed in Section 2.3, the proposed model can accommodate different low-dimensional patterns for the response $\bm{y}_t$ and predictors $\bm{y}_{t-j}$s, whereas the DFM requires the subspaces of $\bm{y}_t$ and $\bm{y}_{t-j}$s to be identical. 
When the DGP is the proposed model with distinct $\bm{U}_1$ and $\bm{U}_2$, the DFM will forecast $\bm{\widehat{U}}_1'\bm{y}_t$ based on $\bm{\widehat{U}}_1'\bm{y}_{t-j}$. However, the true conditional expectation of $\bm{y}_t$ is dependent on $\bm{U}_2'\bm{y}_{t-j}$, and the latter could be different from, or even orthogonal to, $\bm{\widehat{U}}_1'\bm{y}_{t-j}$. Consequently, when the response's low-dimensional subspace is applied to the predictors, the DFM may have no predictive power at all.
This explains why forecasting based on the DFM leads to considerable prediction errors. The robust forecasting performance of the proposed model reflects that its low-dimensional structure can be much more flexible than that of the DFM. 

\section{Real data analysis\label{sec:realdata}}

This section applies the proposed estimation methods to jointly model 40 quarterly macroeconomic sequences of the United States from 1959 to 2007, with 194 observed values for each variable \citep{Koop13}. All series are seasonally adjusted except for financial variables, transformed to stationarity, and standardized to zero mean and unit variance. These variables capture many aspects of the economy, and can be classified into eight categories: (i) GDP and its decomposition, (ii) National Association of Purchasing Managers (NAPM) indices, (iii) industrial production, (iv) housing, (v) money, credit and interest rate, (vi) employment, (vii) prices and wages, and (viii) others. The VAR model has been widely applied to fit these series in empirical econometric studies for structural analysis and forecasting; see \cite{Stock09} and \cite{Koop13}. Table \ref{tbl:macro} gives more details about these macroeconomic variables.

We first apply the SHORR estimation to the entire data set, with the lag order fixed at $P=4$ for the fitted VAR model as suggested by \citet{Koop13}.
Since the number of variables $N=40$ is much larger than the lag order $P=4$,
we do not perform variable selection for the factor matrix related to lags; that is, we replace $\|\bm{U}_3\otimes\bm{U}_2\otimes\bm{U}_1\|_1$ with $\|\bm{U}_2\otimes\bm{U}_1\|_1$ in the penalty term. The multilinear ranks are selected by the ridge-type ratio estimator, which results in $(r_1, r_2, r_3)=(4,3,2)$, and the tuning parameter $\lambda$ is selected by BIC.

The $\ell_1$ penalty yields sparse estimated  factor matrices $\widehat{\bm{U}}_1$ and $\widehat{\bm{U}}_2$, and the estimated coefficients are presented in Figure \ref{fg:estimates}.
The factor loading provides insights into the dynamic relationship among the 40 macroeconomic variables. The four response factors, denoted by $R_i$ for $1\leq i\leq 4$, contain nearly all of the variables and encapsulate different aspects of the economy: $R_1$ is mostly related to investments, imports, industrial production and employments; $R_2$ includes personal consumption, housing starts, and labor productivity; $R_3$ includes manufacturing, housing starts, and treasury bill yield rates; and $R_4$ includes NAPM indices, housing starts, and price index. Each response factor covers multiple categories of macroeconomic indices, and no clear group structure can be observed. However, it is noteworthy that only twelve variables are selected by the three predictor factors, and the sparse formulations of the predictor factors mainly consist of variables from the first four categories, including real GDP, private investment, NAPM indices, manufacturing and housing starts. The above result leads to an interesting interpretation: the activeness of production and investment serves as the driving force of the whole economy and usually precedes changes in other economic aspects such as the price indices, financial indices, and labor markets.

We next evaluate the forecasting performance of $\cm{\widehat{A}}_{\textup{MLR}}$ and  $\cm{\widehat{A}}_{\textup{SHORR}}$ in comparison with the competing estimators considered in Section \ref{sec:sim}.
The following rolling forecasting procedure is adopted: first, use the historical data with the end point rolling from Q4-2000 to Q3-2007 to fit the models; and then, conduct one-step-ahead forecasts based on the fitted models.
The selected ranks and tuning parameters for $\cm{\widehat{A}}_{\textup{SHORR}}$ are preserved from the analysis of the entire data set, i.e. $(r_1, r_2, r_3)=(4,3,2)$, and the selected ranks for MLR and RRR estimation are also fixed accordingly.

The $\ell_2$ and $\ell_\infty$ norms of the forecast errors for various methods are displayed in Table \ref{table:forecast}.  It can be seen that  the proposed MLR and SHORR estimators have much smaller forecast errors than competing ones, including the DFM with $r=4$ and the regularized and unregularized estimation methods for the VAR model.
This can be explained by the capability of the proposed estimators to substantially reduce the dimensionality along three directions simultaneously. The SHORR estimator performs best among all estimators as it enforces sparsity of the factor matrices and hence prevents overfitting most effectively.

\section{Conclusion and discussion}

For a large VAR($P$) model, its reduced-rank structure can be defined in three different ways. The novelty of the proposed approach lies in its ability to jointly enforce three different reduced-rank structures. This is made possible by rearranging the transition matrices of the VAR model into a tensor such that the Tucker decomposition can be conducted. As a result, the parameter space is restricted effectively along three directions, and the capability of the classical VAR model for modeling large-scale time series is substantially expanded. 

Moreover, for the high-dimensional setup, this paper further proposes a sparsity-inducing estimator to improve the model interpretability and estimation efficiency. An ADMM algorithm is developed to tackle the computational challenges due to the all-orthogonal constraints on $\cm{G}$  as well as the jointly imposed $\ell_1$-regularization and orthogonality constraints on $\bm{U}_i$s. It is worth noting that this paper has a different focus than most work on tensor regression: here we employ the tensor technique as a novel approach to the dimension reduction problem in classical VAR time series modeling.

This paper may be extended in three possible directions. Firstly, the proposed estimators do not take into account the possible correlation structure among components of $\bm{\epsilon}_t$, which will reduce the estimation efficiency.
Let $\widehat{\bm{\Sigma}}_{\bm{\epsilon}}$ be an estimator of ${\bm{\Sigma}}_{\bm{\epsilon}}$. As in \cite{Davis_Zang_Zheng2016}, we may alternatively consider the generalized least squares loss
$\sum_{t=1}^T(\bm{y}_t-\cm{A}_{(1)}\bm{x}_t)\widehat{\bm{\Sigma}}_{\bm{\epsilon}}^{-1}(\bm{y}_t-\cm{A}_{(1)}\bm{x}_t)$ rather than $\sum_{t=1}^T\|\bm{y}_t-\cm{A}_{(1)}\bm{x}_t\|_2^2$.  However, the difficulty would be to find a good estimator $\widehat{\bm{\Sigma}}_{\bm{\epsilon}}$.
Secondly, the tensor technique potentially can be applied to many variants of the VAR model, e.g., those with a nonlinear dynamic structure such as the threshold VAR model \citep{Tsay1998} and the varying coefficient VAR model \citep{Lutkepohl2005}. For instance, consider the time-varying coefficient VAR model with lag one, $\bm{y}_t=\bm{A}_t\bm{y}_{t-1}+\bm{\epsilon}_t$. Similarly, the coefficient matrices can be rearranged into a tensor $\cm{A}$ with $\cm{A}_{(1)}=(\bm{A}_1,...,\bm{A}_T)$. If $\cm{A}$ has multilinear low ranks $(r_1,r_2,r_3)$, then the number of parameters will be $r_1r_2r_3+(N-r_1)r_1+(N-r_2)r_2+(T-r_3)r_3\lesssim NT$. Moreover, a fourth-order tensor can be used to handle the case of lag order $P>1$.
Lastly, the proposed model can be generalized to a tensor autoregressive model for matrix-valued or tensor-valued time series; see \cite{wang2016factor} for a related work.

\section*{Acknowledgements}

We are grateful to the joint editor, the associate editor and three anonymous referees for their valuable comments which led to substantial improvement of this paper. 
Lian and Li are co-corresponding authors and contributed to the paper equally.
This research was partially supported by GRF grants 11300519 and 17305319 from the Hong Kong Research Grant Council and a Key Program grant 72033002 from National Natural Science Foundation of China.

\bibliography{mybib,HDVAR}

\newpage

\begin{figure}[ht]
	\centering
	\includegraphics[width=0.9\textwidth]{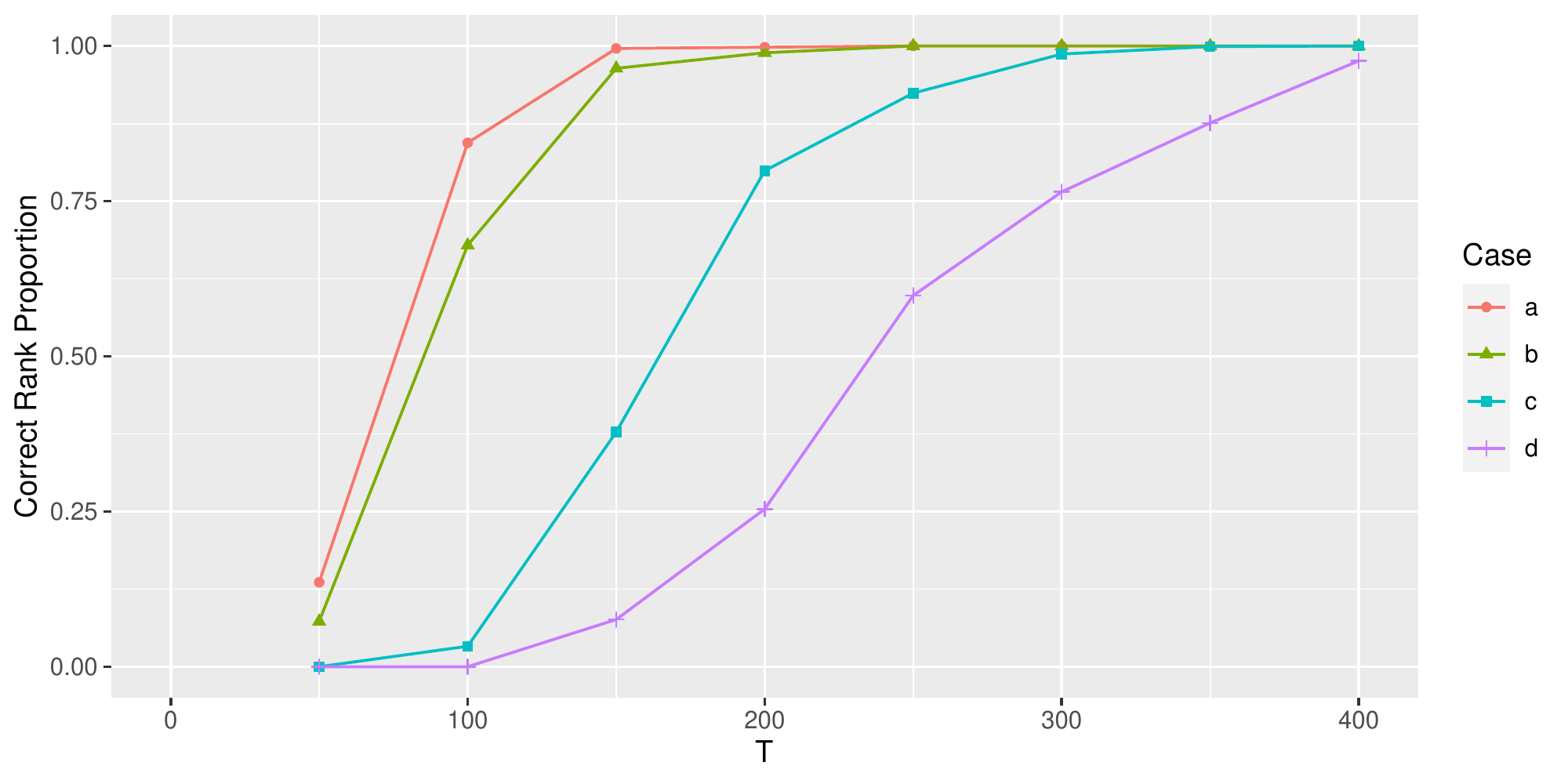}
	\caption{Proportion of correct rank selection when the three nonzero singular values of each $\cm{A}_{(i)}$ are $(2,2,2)$ (case a), $(4,3,2)$ (case b), $(1,1,1)$ (case c), or $(2,1,0.5)$ (case d).}
	\label{fig:E4}
\end{figure}

\begin{figure}[!htbp]
	\centering
	\includegraphics[width=0.9\textwidth]{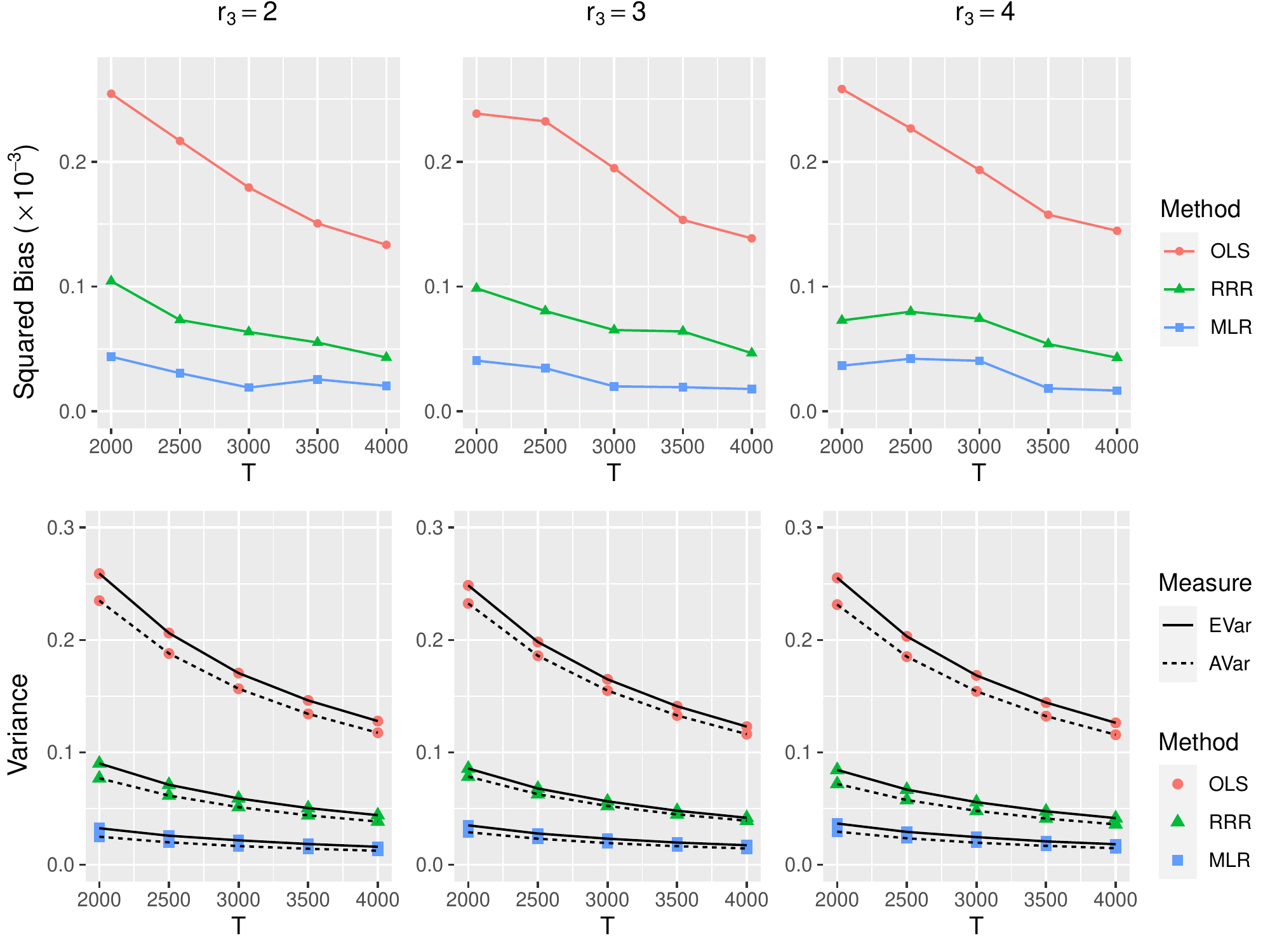}
	\caption{Squared bias, empirical  variance (EVar) and asymptotic variance (AVar) for $\cm{\widehat{A}}_{\textup{OLS}}$  $\cm{\widehat{A}}_{\textup{RRR}}$, and $\cm{\widehat{A}}_{\textup{MLR}}$ under various multilinear ranks.}
	\label{fig:E1}
\end{figure}

\clearpage
\begin{figure}[ht]
	\centering
	\includegraphics[width=\textwidth]{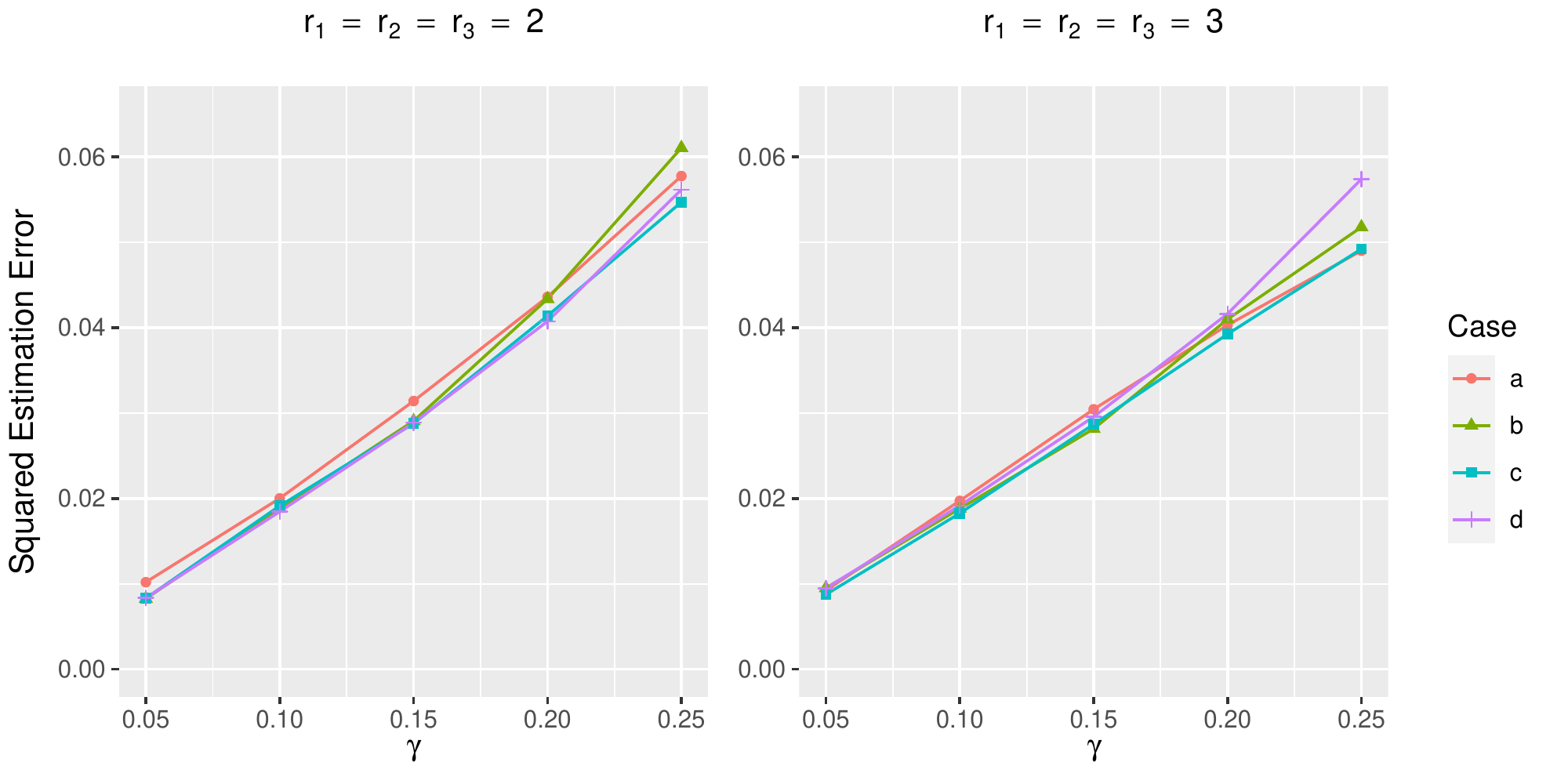}
	\caption{Plots of the squared estimation error $\|\cm{\widehat{A}}_{\textup{SHORR}}-\cm{A}\|_{\textup{F}}^2$ against  $\gamma=S\log(N^2P)/T$ for four cases of $(N,P, s_1,s_2,s_3)$ under two settings of  multilinear ranks.}
	\label{fig:E2}
\end{figure}

\begin{figure}[!htbp]
	\centering
	\includegraphics[width=\textwidth]{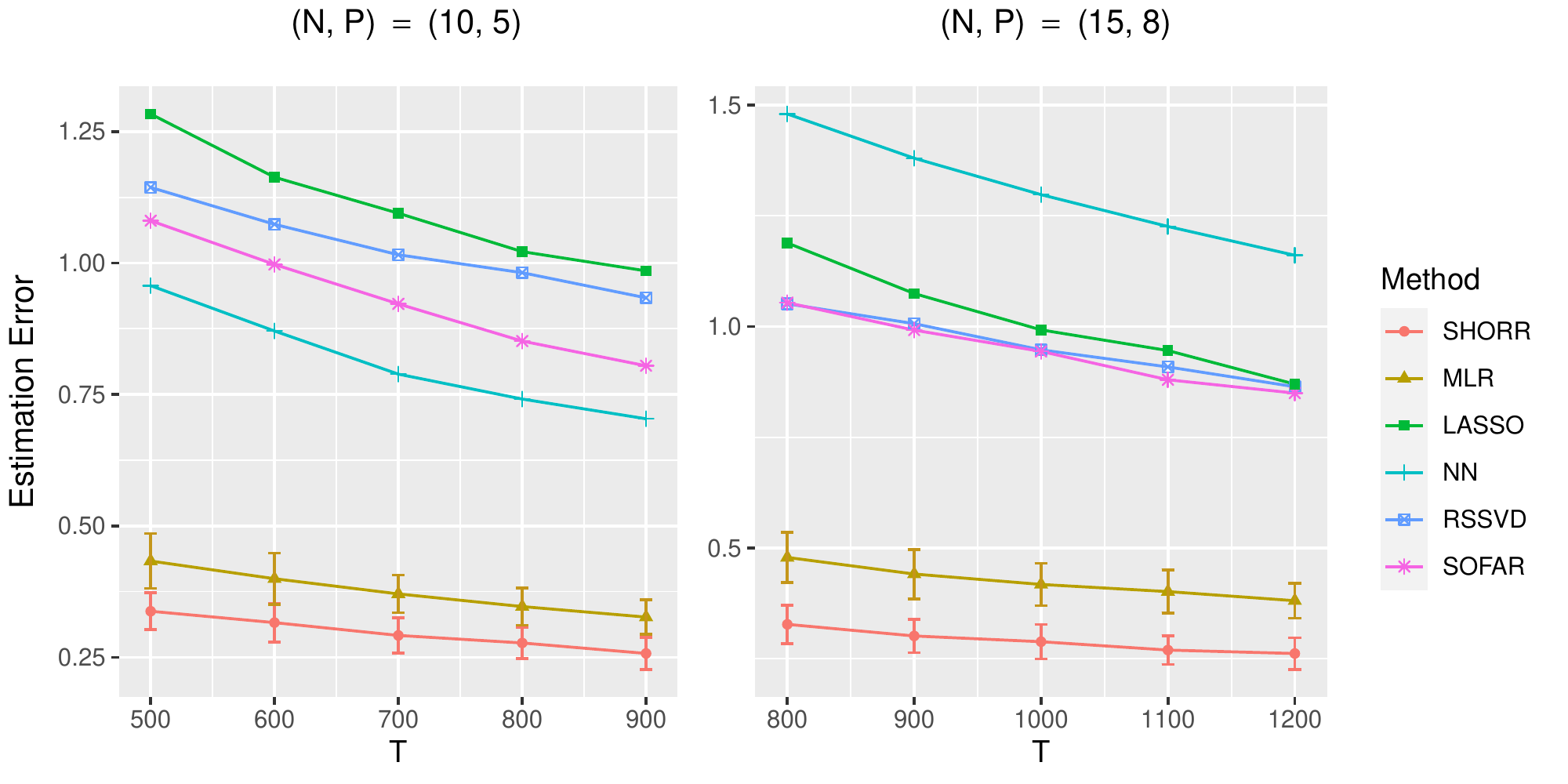}
	\caption{Plots of the estimation error  $\|\cm{\widehat{A}}-\cm{A}\|_{\textup{F}}$ against $T$ for six estimation methods under two settings of $(N,P)$.}
	\label{fig:E3}
\end{figure}

\begin{figure}[!htbp]
	\centering
	\includegraphics[width=0.8\textwidth]{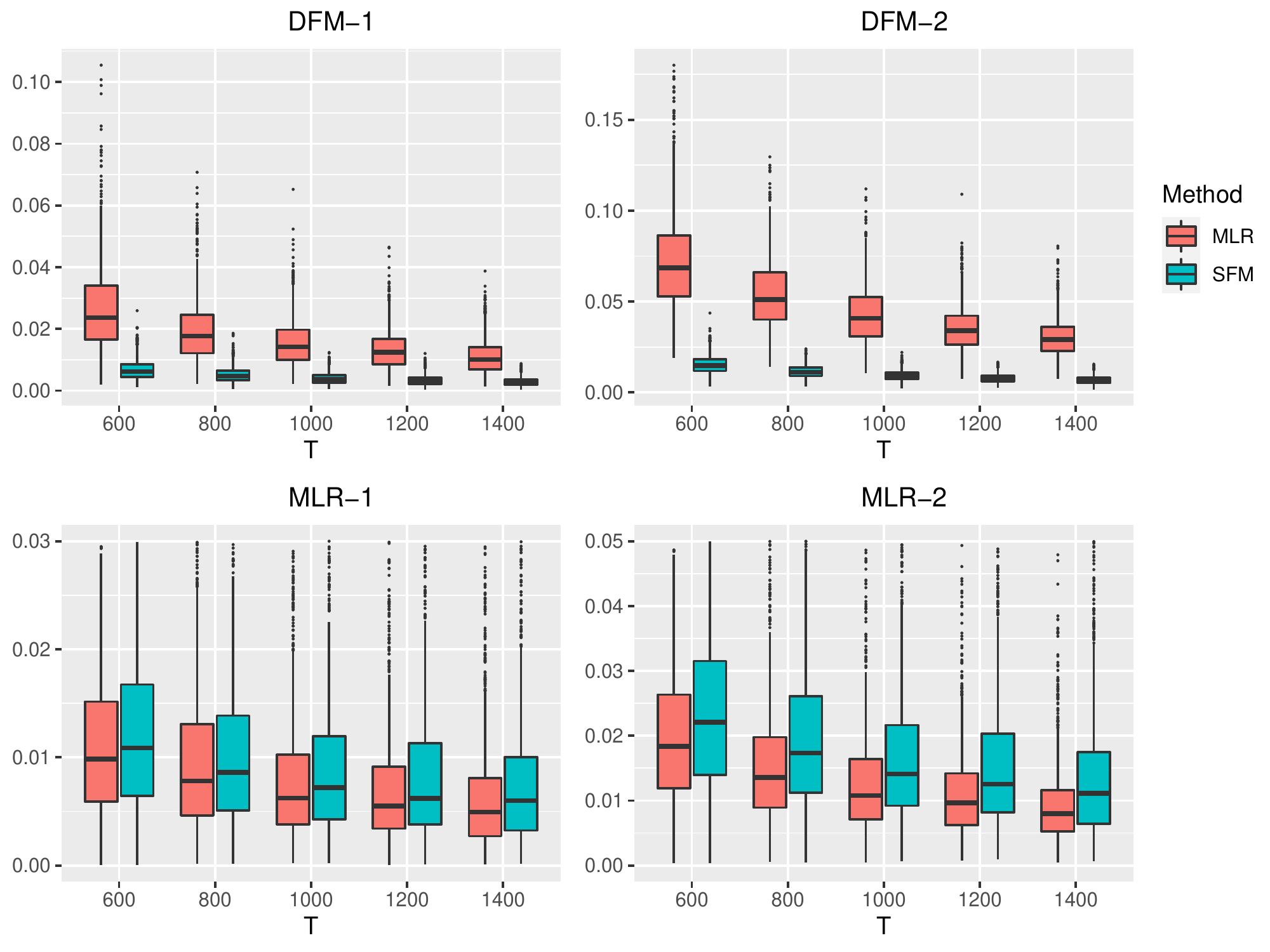}
	\vspace{-5mm}
	\caption{Subspace estimation error for four data generating processes based on two methods: fitting the proposed model by the MLR method, or fitting the static factor model (SFM).}
	\label{fig:E4a}
\end{figure}

\begin{figure}[!htbp]
	\centering
	\includegraphics[width=0.8\textwidth]{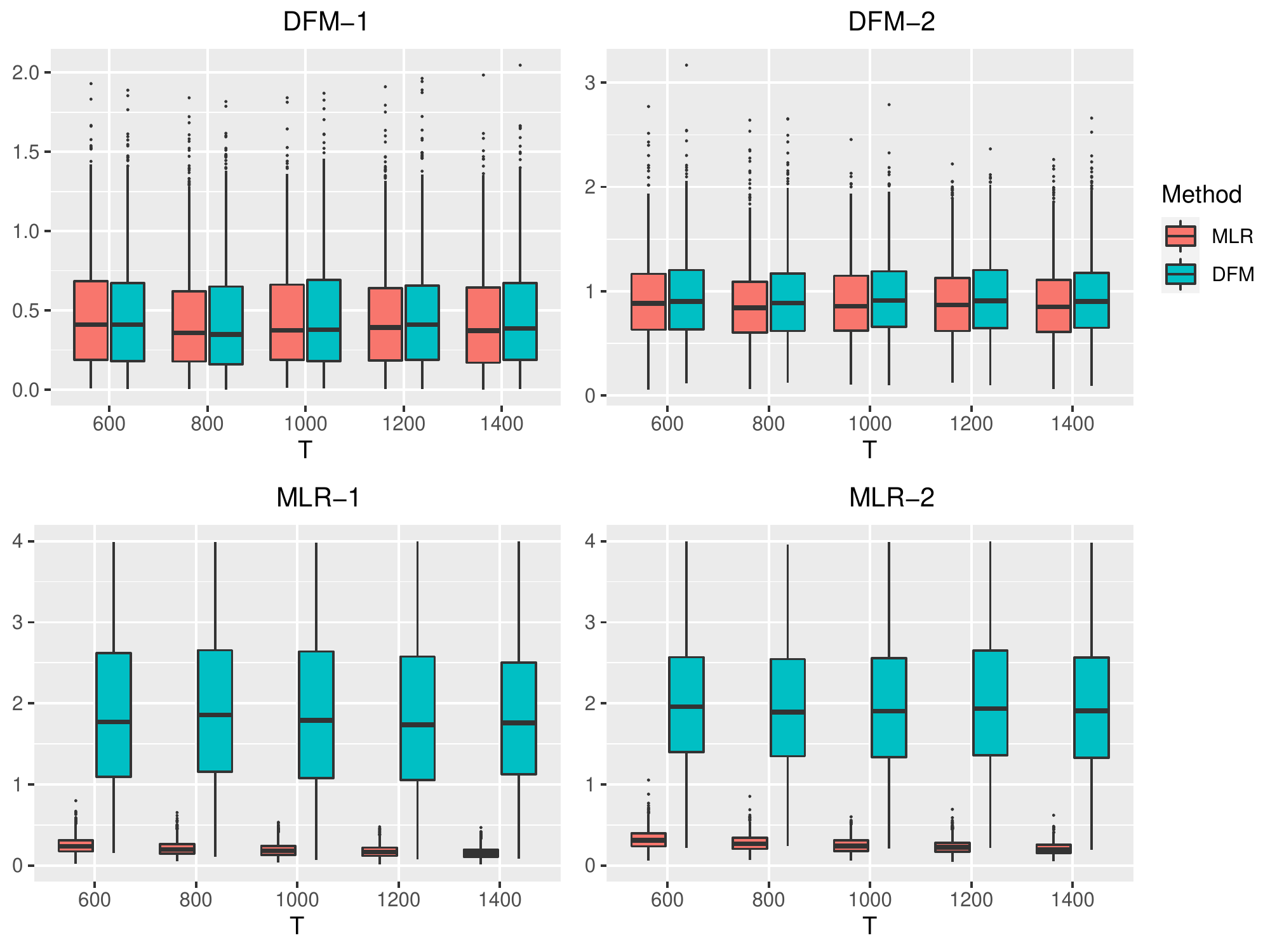}
	\vspace{-5mm}
	\caption{Prediction error for four data generating processes based on two methods: fitting the proposed model by the MLR method, or fitting the dynamic factor model (DFM).}
	\label{fig:E4b}
\end{figure}

\newpage
\begin{figure}[t]
	\centering\vspace{-10mm}
	\includegraphics[width=0.75\textwidth]{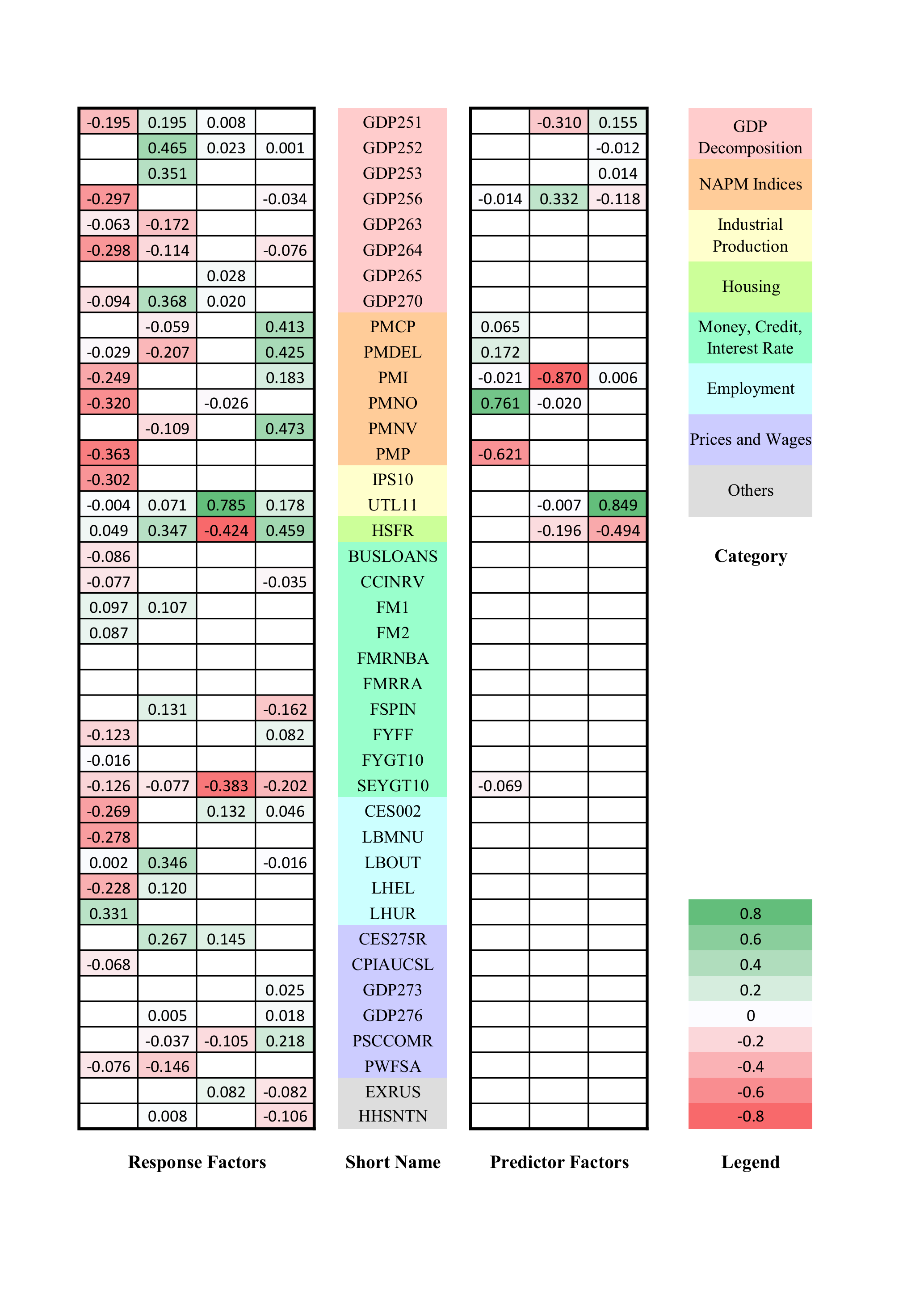}
	\vspace{-5mm}
	\caption{Estimated coefficients in the response and predictor factor loading matrices.}
	\label{fg:estimates}
\end{figure}

\begin{table}[h]%\footnotesize
	\begin{center}\vspace{-5mm}
		\caption{Forecasting error for forty quarterly macroeconomic sequences of the United States from 1959 to 2007. The best cases among (un)regularized methods are marked in bold. \label{table:forecast}}	
		\vspace{-5mm}	
		\begin{tabular}{cccccccccccc}
			\hline\hline
			& \multicolumn{4}{c}{Unregularized methods}&&\multicolumn{6}{c}{Regularized methods}\\
			\cline{2-5}\cline{7-12}
			Criterion & OLS & RRR & DFM & MLR && SHORR & LASSO & NN & RSSVD  & SOFAR\\
			\hline
			$\ell_2$ norm  & 20.16 & 13.31 & 6.36 & \bf{5.81} && $\bf{5.35}$ & 6.72 & 8.16 & 6.33 & 6.28\\
			$\ell_\infty$ norm & 8.32 & 4.55 & 2.85 & \bf{2.56} &&$\bf{2.44}$ & 3.06 & 3.36 & 3.02 & 3.02
			\\\hline
		\end{tabular}
	\end{center}
	\vspace{-10mm}
\end{table}

\begin{landscape}
	\begin{table}[]
		\small
		\centering
		\caption{Forty quarterly macroeconomic variables belonging to 8 categories. Category code (C) represents: 1 = GDP and its decomposition, 2 = national association of purchasing managers (NAPM) indices, 3 = industrial production, 4 = housing, 5 = money, credit, interest rates, 6 = employment, 7 = prices and wages, 8 = others. Variables are seasonally adjusted except for those in category 5. All variables are transformed to stationarity with the following transformation codes (T): 1 = no transformation, 2 = first difference, 3 = second difference, 4 = log, 5 = first difference of logged variables, 6 = second difference of logged variables.}
		\label{tbl:macro}
		\begin{tabular}{@{}llllllll@{}}
			\toprule
			Short name&C&T&Description&Short name&C&T&Description\\
			\midrule
			GDP251&1&5& Real GDP, quantity index (2000=100)            &FM2&5&6& Money stock: M2 (bil\$)                        \\[-2ex]
			GDP252&1&5& Real personal cons exp, quantity index       &FMRNBA&5&3& Depository inst reserves: nonborrowed (mil\$)  \\[-2ex]
			GDP253&1&5& Real personal cons exp: durable goods&FMRRA&5&6& Depository inst reserves: total (mil\$)        \\[-2ex]
			GDP256&1&5& Real gross private domestic investment&FSPIN&5&5& S\&P's common stock price index: industrials   \\[-2ex]
			GDP263&1&5& Real exports&FYFF&5&2& Interest rate: federal funds (\% per annum)    \\[-2ex]
			GDP264&1&5& Real imports&FYGT10&5&2& Interest rate: US treasury const. mat., 10-yr  \\[-2ex]
			GDP265&1&5& Real govt cons expenditures \&  gross investment&SEYGT10&5&1& Spread btwn 10-yr and 3-mth T-bill rates\\[-2ex]
			GDP270&1&5& Real final sales to domestic purchasers&CES002&6&5& Employees, nonfarm: total private              \\[-2ex]
			PMCP&2&1& NAPM commodity price index (\%)&LBMNU&6&5& Hrs of all persons: nonfarm business sector\\[-2ex]
			PMDEL&2&1& NAPM vendor deliveries index (\%)&LBOUT&6&5& Output per hr: all persons, business sec\\[-2ex]
			PMI&2&1& Purchasing managers' index&LHEL&6&2& Index of help-wanted ads in newspapers\\[-2ex]
			PMNO&2&1& NAPM new orders index (\%)&LHUR&6&2& Unemp. rate: All workers, 16 and over (\%)     \\[-2ex]
			PMNV&2&1& NAPM inventories index (\%)&CES275R&7&5& Real avg hrly earnings, non-farm prod. workers \\[-2ex]
			PMP&2&1& NAPM production index (\%)&CPIAUCSL&7&6& CPI all items                                  \\[-2ex]
			IPS10&3&5& Industrial production index: total             &GDP273&7&6& Personal consumption exp.: price index         \\[-2ex]
			UTL11&3&1& Capacity utilization: manufacturing (SIC)      &GDP276&7&6& Housing price index\\[-2ex]
			HSFR&4&4& Housing starts: Total (thousands)              &PSCCOMR&7&5& Real spot market price index: all commodities  \\[-2ex]
			BUSLOANS&5&6& Comm. and industrial loans at all comm. Banks&PWFSA&7&6& Producer price index: finished goods           \\[-2ex]
			CCINRV&5&6& Consumer credit outstanding: nonrevolving&EXRUS&8&5& US effective exchange rate: index number       \\[-2ex]
			FM1&5&6& Money stock: M1 (bil\$)                        &HHSNTN&8&2& Univ of Mich index of consumer expectations\\[-1ex]
			\bottomrule
		\end{tabular}
	\end{table}
\end{landscape}

\appendix

\section{Proofs of Theorems 1-3\label{append:A1}}
\begin{proof}[Proof of Theorem 1]
	
	The proof generally follows from Proposition 4.1 in \citet{Shapiro86} for overparameterized models. The VAR($P$) model can be written as the linear regression problem
	\begin{equation}\label{eq:lr}
	\underbrace{\begin{bmatrix}
		\bm{y}_{1}'\\\bm{y}_{2}'\\ \vdots\\ \bm{y}_T'
		\end{bmatrix}}_{\bm{Y}}=
	\underbrace{\begin{bmatrix}
		\bm{y}_{0}'&\bm{y}_{-1}'&\dots&\bm{y}_{-P+1}'\\
		\bm{y}_{1}'&\bm{y}_{0}'&\dots&\bm{y}_{-P+2}'\\
		\vdots & \vdots & \ddots & \vdots\\
		\bm{y}_{T-1}'&\bm{y}_{T-2}'&\dots&\bm{y}_{T-P}'\\
		\end{bmatrix}}_{\bm{X}}\underbrace{\begin{bmatrix}
		\bm{A}_1'\\\bm{A}_2'\\\vdots\\\bm{A}_P'
		\end{bmatrix}}_{\cm{A}_{(1)}'}+\underbrace{\begin{bmatrix}
		\bm{\epsilon}_{P+1}'\\\bm{\epsilon}_{P+2}'\\\vdots\\\bm{\epsilon}_{T}'
		\end{bmatrix}}_{\bm{E}}
	\end{equation}
	Let $\bm{\phi}$ denote the component parameters in the Tucker decomposition forms, let $\bm{h}(\bm{\phi})$ denote the true parameter $\text{vec}(\cm{A}_{(1)})=\text{vec}(\bm{U}_1\cm{G}_{(1)}(\bm{U}_3\otimes\bm{U}_2)')$, and let $\bm{\widehat{h}}_{\textup{OLS}}$ denote the vectorized OLS estimates $\text{vec}(\bm{\widehat{A}}_{\textup{OLS}})$ without constraint.
	With Assumption 1, according to classical asymptotic theory for stationary VAR model \citep{Tsay13}, as $T\to\infty$,
	\begin{equation}
	\begin{split}
	(i).~~ & \bm{X}'\bm{X}/T\overset{p}{\to}\bm{\Gamma}^*\equiv
	\begin{bmatrix}
	\bm{\Gamma}_0 & \bm{\Gamma}_1 & \dots & \bm{\Gamma}_{P-1}\\
	\bm{\Gamma}_1' & \bm{\Gamma}_0 & \dots & \bm{\Gamma}_{P-2}\\
	\vdots & \vdots & \ddots & \vdots\\
	\bm{\Gamma}_{P-1}' & \bm{\Gamma}_{P-2}' & \dots & \bm{\Gamma}_0\\
	\end{bmatrix};\\
	(ii).~~ & T^{-1/2}\text{vec}(\bm{X}'\bm{E})\overset{d}{\to}N(\bm{0},\bm{\Sigma}_{\bm{\epsilon}}\otimes\bm{\Gamma}^*);\\
	(iii).~~ & \bm{\widehat{h}}_{\textup{OLS}}\overset{p}{\to}\bm{h};\\
	(iv).~~ & \sqrt{T}(\bm{\widehat{h}}_{\textup{OLS}}-\bm{h})\overset{d}{\to}N\left(\bm{0},\bm{\Sigma}_{\bm{\epsilon}}\otimes(\bm{\Gamma}^*)^{-1}\right).
	\end{split}
	\end{equation}
	
	Consider the discrepancy function for any $\bm{h}(\bm{\phi})$,
	\begin{equation}
	F(\bm{\widehat{h}}_{\textup{OLS}},\bm{h})=\|\text{vec}(\bm{Y})-(\bm{I}_N\otimes \bm{X})\bm{h}\|^2_2-\|\text{vec}(\bm{Y})-(\bm{I}_N\otimes \bm{X})\bm{\widehat{h}}_{\textup{OLS}}\|^2_2.
	\end{equation}
	Obviously, $F(\bm{\widehat{h}}_{\textup{OLS}},\bm{h})$ is a nonnegative and twice continuously differentiable function, and equals to zero if and only if $\bm{\widehat{h}}_{\textup{OLS}}=\bm{h}$.
	
	In order to calculate the Jacobian matrix $\bm{H}$, we define the tensor matricization transformation operator $\bm{T}_{ij}(N,N,P)$ which is an $N^2P\times N^2P$ matrix and satisfies that $\text{vec}(\cm{A}_{(j)})=\bm{T}_{ij}(N,N,P)\text{vec}(\cm{A}_{(i)})$ for any tensor $\cm{A}\in\mathbb{R}^{N\times N\times P}$.
	In fact, $\bm{T}_{ij}(N,N,P)$ is a full-rank matrix indicating the corresponding position in $\text{vec}(\cm{A}_{(i)})$ of $\cm{A}$'s each entry in $\text{vec}(\cm{A}_{(j)})$, and can be regarded as the natural extension of the permutation matrix for matrix transpose. Also note that $\bm{T}_{ij}(N,N,P)$ only depends on the value of $N$ and $P$, and since we consider fixed $N$ and $P$ in this part, we simplify it to $\bm{T}_{ij}$.
	
	Therefore,
	\begin{equation}\begin{split}
	\text{vec}(\cm{A}_{(1)})=&\text{vec}(\bm{U}_1\cm{G}_{(1)}(\bm{U}_3\otimes\bm{U}_2)')=\bm{T}_{21}\text{vec}(\bm{U}_2\cm{G}_{(2)}(\bm{U}_1\otimes\bm{U}_3)')\\
	=&\bm{T}_{31}\text{vec}(\bm{U}_3\cm{G}_{(3)}(\bm{U}_1\otimes\bm{U}_2)'),
	\end{split}\end{equation}
	
	and the Jacobian matrix of $\bm{h}$ is
	\begin{equation}\begin{split}
	\bm{H}=\frac{\partial \bm{h}}{\partial\bm{\phi}}=&\Big((\bm{U}_3\otimes\bm{U}_2\otimes\bm{U}_1),[(\bm{U}_3\otimes\bm{U}_2)\cm{G}_{(1)}']\otimes\bm{I}_N,\\
	&\bm{T}_{21}\{[(\bm{U}_1\otimes\bm{U}_3)\cm{G}_{(2)}']\otimes\bm{I}_N\},\bm{T}_{31}\{[(\bm{U}_1\otimes\bm{U}_2)\cm{G}_{(3)}']\otimes \bm{I}_P\}\Big).
	\end{split}\end{equation}
	
	Then, by Proposition 4.1 in \citet{Shapiro86}, we know that the minimizer of $F(\bm{\widehat{h}}_{\textup{OLS}},\cdot)$, namely the MLR estimator, has the asymptotic normality,
	\begin{equation}
	\sqrt{T}(\bm{h}(\bm{\widehat{\phi}}_{\textup{MLR}})-\bm{h})\overset{d}{\to}N(\bm{0},\bm{\Sigma}_{\textup{MLR}})
	\end{equation}
	and $\bm{\Sigma}_{\textup{MLR}}=\bm{P}\bm{\Gamma}\bm{P}'$, where $\bm{P}=\bm{H}(\bm{H}'\bm{J}\bm{H})^\dagger\bm{H}'\bm{J}$ is the projection matrix, $\bm{J}$ is the Fisher information matrix of $\bm{h}$ as $T$ goes to infinity,
	$\bm{H}$ is the Jacobian matrix of $\bm{h}$ with respect to the overparameterized model parameters $\bm{\phi}$, $\bm{\Gamma}=\bm{\Sigma}_{\bm{\epsilon}}\otimes(\bm{\Gamma}^*)^{-1}$ is the asymptotic covariance matrix for $\bm{\widehat{h}}_{\textup{OLS}}$ and $\dagger$ denotes the Moore-Penrose inverse.
	Since $\bm{\Gamma}=\bm{J}^{-1}$ in the VAR($P$) model, we have $\bm{\Sigma}_{\textup{MLR}}=\bm{H}(\bm{H}'\bm{J}\bm{H})^\dagger\bm{H}'$.
\end{proof}

\begin{proof}[Proof of Theorem \ref{thm:errorbound}]
	
	The proof of Theorem \ref{thm:errorbound} consists of two parts.
	\begin{itemize}
		\item The first part shows the estimation error bounds given the deterministic realization of the time series process, assuming that the deviation bound condition and restricted eigenvalue condition hold.
		\item The second step is the stochastic analysis in which we show that these two regulatory conditions are satisfied with high probability converging to 1.
	\end{itemize}
	
	Based on the linear regression form \eqref{eq:lr}, we can rewrite $\bm{Y}=\bm{X}(\bm{U}_3\otimes\bm{U}_2)\cm{G}_{(1)}'\bm{U}_1'+\bm{E}$ as
	\begin{equation}
	\underbrace{\text{vec}(\bm{Y})}_{\bm{y}}=\underbrace{(\bm{I}_N\otimes\bm{X})}_{\bm{Z}}\underbrace{(\bm{U}_3\otimes \bm{U}_2\otimes\bm{U}_1)}_{\bm{U}}\underbrace{\text{vec}(\cm{G}_{(1)}')}_{\bm{g}}+\underbrace{\text{vec}(\bm{E})}_{\bm{e}}.
	\end{equation}
	
	Denote $\bm{\widehat{\Delta}}=\bm{\widehat{U}}\bm{\widehat{g}}-\bm{Ug}$, $\bm{\widehat{\Delta}}_{\bm{u}}=\bm{\widehat{U}}-\bm{U}$ and $\bm{\widehat{\Delta}}_{\bm{g}}=\bm{\widehat{g}}-\bm{g}$. By the optimality of the SHORR estimator,
	\begin{equation}
	\label{eq:shorr}
	\begin{split}
	&T^{-1}\|\bm{y}-\bm{Z}\bm{\widehat{U}}\bm{\widehat{g}}\|_2^2+\lambda\|\bm{\widehat{U}}\|_1\leq T^{-1}\|\bm{y}-\bm{ZUg}\|_2^2+\lambda\|\bm{U}\|_1\\
	\Rightarrow&T^{-1}\|\bm{Z}\bm{\widehat{\Delta}}\|_2^2+\lambda\|\bm{\widehat{U}}\|_1\leq 2\langle T^{-1}\bm{Z}'\bm{e},\bm{\widehat{\Delta}}\rangle+\lambda\|\bm{U}\|_1.
	\end{split}
	\end{equation}
	
	Note that $\bm{\widehat{\Delta}}=\bm{\widehat{U}}\bm{\widehat{g}}-\bm{Ug}=(\bm{\widehat{U}}-\bm{U})\bm{\widehat{g}}+\bm{U}(\bm{\widehat{g}}-\bm{g})=\bm{\widehat{\Delta}}_{\bm{u}}{\bm{\widehat{g}}}+\bm{U}\bm{\widehat{\Delta}}_{\bm{g}}$,
	so we can decompose $\langle T^{-1}\bm{Z}'\bm{e},\bm{\widehat{\Delta}}\rangle$ into two parts,
	\begin{equation}
	\langle T^{-1}\bm{Z}'\bm{e},\bm{\widehat{\Delta}}\rangle=\langle T^{-1}\bm{Z}'\bm{e},\bm{\widehat{\Delta}}_{\bm{u}}\bm{\widehat{g}}\rangle+\langle T^{-1}\bm{Z}'\bm{e},\bm{U}\bm{\widehat{\Delta}}_{\bm{g}}\rangle.
	\end{equation}
	and bound these two parts separately.
	We denote the event that these two inner products are bounded by $\lambda\|\bm{\widehat{\Delta}}_{\bm{u}}\|_1$ and $\lambda\|\bm{\widehat{\Delta}}_g\|_1$ as $\mathcal{I}_1$,
	\begin{equation}
	\mathcal{I}_1=\{\langle T^{-1}\bm{Z}'\bm{e},\bm{\widehat{\Delta}}_{\bm{u}}\bm{\widehat{g}}\rangle\leq\lambda\|\bm{\widehat{\Delta}}_{\bm{u}}\|_1/4\}\cap \{\langle T^{-1}\bm{Z}'\bm{e},\bm{U}\bm{\widehat{\Delta}}_{\bm{g}}\rangle\leq\lambda\|\bm{\widehat{\Delta}}_{\bm{g}}\|_1/4\}.
	\end{equation}
	
	Denote by $\bb{S}_{\bm{U}}$ the nonzero index set of $\text{vec}(\bm{U})$, and by $\bb{S}_{\bm{U}}^\complement$ is the complement of $\bb{S}_{\bm{U}}$. By the sparsity of each $\bm{U}_i$ in Assumption 4, $\text{card}(\bb{S}_{\bm{U}})=\|\bm{U}_3\otimes \bm{U}_2\otimes\bm{U}_1\|_0\leq \prod_{i=1}^3s_ir_i$.
	In the following proof, we use the abused notation. For any matrix $\bm{M}\in\mathbb{R}^{N^2P\times r_1r_2r_3}$ and any vector norm $\|\cdot\|_*$, we denote $\|(\bm{M})_{\bb{S}_{\bm{U}}}\|_{*}:=\|(\text{vec}(\bm{M}))_{\bb{S}_{\bm{U}}}\|_{*}$ and $\|(\bm{M})_{\bb{S}_{\bm{U}}^\complement}\|_{*}:=\|(\text{vec}(\bm{M}))_{\bb{S}_{\bm{U}}^\complement}\|_{*}$.
	
	On the event $\mathcal{I}_1$, if we multiply 2 to both sides of \eqref{eq:shorr} we can have
	\begin{equation}
	\label{eq:basicinequality}
	\begin{split}
	&2T^{-1}\|\bm{Z}\bm{\widehat{\Delta}}\|^2_2+2\lambda\|\bm{\widehat{U}}\|_1\leq\lambda\|\bm{\widehat{\Delta}}_{\bm{g}}\|_1+\lambda\|\bm{\widehat{\Delta}}_{\bm{u}}\|_1+2\lambda\|\bm{U}\|_1
	\end{split}
	\end{equation}
	On the left-hand side, by triangle inequality,
	\begin{equation}
	\|\bm{\widehat{U}}\|_1=\|\bm{\widehat{U}}_{\bb{S}_{\bm{U}}}\|_1+\|\bm{\widehat{U}}_{\bb{S}_{\bm{U}}^\complement}\|_1\geq \|\bm{U}_{\bb{S}_{\bm{U}}}\|_1-\|(\bm{\widehat{\Delta}}_{\bm{u}})_{\bb{S}_{\bm{U}}}\|_1+\|\bm{\widehat{U}}_{\bb{S}_{\bm{U}}^\complement}\|_1,
	\end{equation}
	whereas on the right-hand side, $\|\bm{\widehat{\Delta}}_{\bm{u}}\|_1=\|(\bm{\widehat{\Delta}}_{\bm{u}})_{\bb{S}_{\bm{U}}}\|_1+\|\bm{\widehat{U}}_{\bb{S}_{\bm{U}}^\complement}\|_1$.
	So we have
	\begin{equation}
	2T^{-1}\|\bm{Z}\bm{\widehat{\Delta}}\|_2^2+\lambda\|(\bm{\widehat{\Delta}}_{\bm{u}})_{\bb{S}_{\bm{U}}^\complement}\|_1\leq \lambda\|\bm{\widehat{\Delta}}_{\bm{g}}\|_1+3\lambda\|(\bm{\widehat{\Delta}}_{\bm{u}})_{\bb{S}_{\bm{U}}}\|_1
	\end{equation}
	
	Next, we assume that there is a lower bound for $2T^{-1}\|\bm{Z}\bm{\widehat{\Delta}}\|_2^2$. Thus, we define the event $\mathcal{I}_2=\{2T^{-1}\|\bm{Z}\bm{\widehat{\Delta}}\|_2^2\geq\alpha\|\widehat{\bm{\Delta}}\|_2^2\}$, where $\alpha=\lambda_{\min}(\bm{\Sigma}_{\bm{\epsilon}})/\mu_{\max}(\mathcal{A})$. On the event $\mathcal{I}_2$,
	\begin{equation}
	\begin{split}
	\label{eq:maininequality}
	\alpha\|\bm{\widehat{\Delta}}\|_2^2&\leq2T^{-1}\|\bm{Z}\bm{\widehat{\Delta}}\|_2^2\leq\lambda\|\bm{\widehat{\Delta}}_{\bm{g}}\|_1+3\lambda\|(\bm{\widehat{\Delta}}_{\bm{u}})_{\bb{S}_{\bm{U}}}\|_1\\
	&\leq \lambda\sqrt{r_1r_2r_3}\|\bm{\widehat{\Delta}}_g\|_2+3\lambda\sqrt{r_1r_2r_3}\sqrt{s_1s_2s_3}\|(\bm{\widehat{\Delta}}_{\bm{u}})_{\bb{S}_{\bm{U}}}\|_2\\
	&\leq \lambda\sqrt{r_1r_2r_3}\|\bm{\widehat{\Delta}}_g\|_2+3\lambda\sqrt{r_1r_2r_3}\sqrt{s_1s_2s_3}\|\bm{\widehat{\Delta}}_{\bm{u}}\|_{\textup{F}}.\\
	\end{split}
	\end{equation}
	
	By the perturbation bound for HOSVD in Lemma 1, we have
	\begin{equation}
	\begin{split}
	&\|\bm{\widehat{\Delta}}_{\bm{u}}\|_{\textup{F}}=\|\bm{U}_3\otimes\bm{U}_2\otimes\bm{U}_1-\bm{\widehat{U}}_3\otimes\bm{\widehat{U}}_2\otimes\bm{\widehat{U}}_1\|_{\textup{F}}\\
	\leq&\|\bm{U}_3\otimes\bm{U}_2\otimes\bm{U}_1-\bm{\widehat{U}}_3\otimes\bm{U}_2\otimes\bm{U}_1\|_{\textup{F}}
	+\|\bm{\widehat{U}}_3\otimes\bm{U}_2\otimes\bm{U}_1-\bm{\widehat{U}}_3\otimes\bm{\widehat{U}}_2\otimes\bm{U}_1\|_{\textup{F}}\\
	+&\|\bm{\widehat{U}}_3\otimes\bm{\widehat{U}}_2\otimes\bm{U}_1-\bm{\widehat{U}}_3\otimes\bm{\widehat{U}}_2\otimes \bm{\widehat{U}}_1\|_{\textup{F}}\\
	\leq&\sqrt{r_1r_2}\|\bm{U}_3-\bm{\widehat{U}}_3\|_{\textup{F}}+\sqrt{r_1r_3}\|\bm{U}_2-\bm{\widehat{U}}_2\|_{\textup{F}}+\sqrt{r_2r_3}\|\bm{U}_1-\bm{\widehat{U}}_1\|_{\textup{F}}\\
	\leq&c\tau\|\bm{\widehat{\Delta}}\|_2,
	\end{split}
	\end{equation}
	where $\widetilde{\tau}=\delta^{-1}(\eta_1\sqrt{r_2r_3}+\eta_2\sqrt{r_1r_3}+\eta_3\sqrt{r_1r_2})$, and
	\begin{equation}
	\|\bm{\widehat{\Delta}}_{\bm{g}}\|_2\leq C\delta^{-1}(\eta_1+\eta_2+\eta_3)\|\bm{\widehat{\Delta}}\|_2.
	\end{equation}
	
	Therefore, we have
	\begin{equation}
	\alpha\|\bm{\widehat{\Delta}}\|_2^2\leq C\tau\sqrt{s_1s_2s_3}\lambda\|\bm{\widehat{\Delta}}\|_2,
	\end{equation}
	where $\tau=\delta^{-1}r_1r_2r_3\sum_{i=1}^3\eta_i/\sqrt{r_i}$.
	
	If we denote $S=s_1s_2s_3$, we can obtain the estimation error bound and in-sample prediction error bound
	\begin{equation}
	\|\bm{\widehat{\Delta}}\|_2\leq C_1\tau\sqrt{S}\lambda/\alpha,
	~~\text{and}~~T^{-1}\|\bm{Z}\bm{\widehat{\Delta}}\|_2^2\leq C_2\tau^2S\lambda^2/\alpha,
	\end{equation}
	which conclude the deterministic analysis.
	
	In the second part, we show that the events $\mathcal{I}_1$ and $\mathcal{I}_2$ occur with high probability. In the high-dimensional regression literature, the conditions in $\mathcal{I}_1$ and $\mathcal{I}_2$ are known as deviation bound condition and restricted eigenvalue condition. We defer the proof of both conditions to Lemma \ref{lemma:DB} and \ref{lemma:RE}, where we show that $\mathcal{I}_1$ and $\mathcal{I}_2$ hold simultaneously with probability at least $1-C\exp[-c\log(N^2P)]-C\exp\{-cd\min[\log(NP),\log(cNP/d)]\}$, given that the sample size $T\gtrsim\log(N^2P)+\mathcal{M}^2d\min[\log(NP),\log(cNP/d)]$.
\end{proof}

\begin{proof}[Proof of Theorem \ref{thm:rank}]
	By definition, for any tensor $\cm{T}\in\mathbb{R}^{p_1\times p_2\times p_3}$, where $p_1=p_2=N$ and $p_3=P$,
	\begin{equation}
	\|\cm{T}\|_{\text{F}}^2=\|\cm{T}_{(i)}\|_{\text{F}}^2=\sum_{j=1}^{p_i}\sigma_j^2(\cm{T}_{(i)}),~~1\leq i\leq j.
	\end{equation}
	In other words, the Frobenius norm of the error tensor is equivalent to the $\ell_2$ norm of the singular values of any matricization. By Mirsky's singular value inequality \citep{mirsky1960symmetric},
	\begin{equation}
	\sum_{j=1}^{p_j}[\sigma_j((\cm{\widehat{A}}_0)_{(i)})-\sigma_j(\cm{A}_{(i)})]^2\leq\sum_{j=1}^{p_j}\sigma_j^2((\cm{\widehat{A}}_0)_{(i)}-\cm{A}_{(i)})=\|\cm{\widehat{A}}_0-\cm{A}\|_{\text{F}}^2.
	\end{equation}
	Obviously, the $\ell_\infty$ error bound is smaller than the $\ell_2$ error bound, so it follows the same upper bound,
	\begin{equation}
	\max_{1\leq j\leq p_j}|\sigma_j((\cm{\widehat{A}}_0)_{(i)})-\sigma_j(\cm{A}_{(i)})|\leq\left\{\sum_{j=1}^{p_j}[\sigma_j((\cm{\widehat{A}}_0)_{(i)})-\sigma_j(\cm{A}_{(i)})]^2\right\}^{1/2}\leq\|\cm{\widehat{A}}_0-\cm{A}\|_{\text{F}}\lesssim B(T).
	\end{equation}
	
	Note that $\sigma_j(\cm{\widehat{A}}_{(i)})+c=\sigma_j(\cm{A}_{(i)})+[\sigma_j(\cm{\widehat{A}}_{(i)})-\sigma_j(\cm{A}_{(i)})]+c$. For $j>r_i$, since $\sigma_j(\cm{A}_{(i)})=0$ and $\sigma_j(\cm{\widehat{A}}_{(i)})-\sigma_j(\cm{A}_{(i)})=o_p(c)$, $c$ is the dominating term in $\sigma_j(\cm{\widehat{A}}_{(i)})+c$. For $j\leq r_i$, since $\sigma_j(\cm{\widehat{A}}_{(i)})-\sigma_j(\cm{A}_{(i)})=o_p(c)$ and $c=o(\sigma_j(\cm{A}_{(i)}))$, $\sigma_j(\cm{A}_{(i)})$ is the dominating term.
	
	Hence, for $j>r_i$ as $T\to\infty$,
	\begin{equation}
	\frac{\sigma_{j+1}(\cm{\widehat{A}}_{(i)})+c}{\sigma_{j}(\cm{\widehat{A}}_{(i)})+c}\to\frac{c}{c}=1.
	\end{equation}
	For $j<r_i$,
	\begin{equation}
	\frac{\sigma_{j+1}(\cm{\widehat{A}}_{(i)})+c}{\sigma_{j}(\cm{\widehat{A}}_{(i)})+c}\to\frac{\sigma_{j+1}(\cm{A}_{(i)})}{\sigma_{j}(\cm{A}_{(i)})}.
	\end{equation}
	For $j=r_i$,
	\begin{equation}
	\frac{\sigma_{j+1}(\cm{\widehat{A}}_{(i)})+c}{\sigma_{j}(\cm{\widehat{A}}_{(i)})+c}\to\frac{c}{\sigma_{r_i}(\cm{A}_{(i)})}=o\left(\min_{1\leq j\leq r_i-1}\frac{\sigma_{j+1}(\cm{A}_{(i)})}{\sigma_{j}(\cm{A}_{(i)})}\right).
	\end{equation}

\end{proof}

\section{Proofs of Corollaries \ref{cor1} and \ref{cor:comparison}}

\begin{proof}[Proof of Corollary \ref{cor1}]
	Here we prove the asymptotic normality for $\text{vec}(\bm{\widehat{U}}_1)$, since the proofs for the $\text{vec}(\bm{\widehat{U}}_2)$ and $\text{vec}(\bm{\widehat{U}}_3)$ are similar. In this part, we simplify $\cm{\widehat{A}}_{\textup{MLR}}$ to $\cm{\widehat{A}}$. Note that $\bm{\widehat{U}}_1$ and $\bm{U}_1$ are the eigenvectors of $\cm{\widehat{A}}_{(1)}\cm{\widehat{A}}_{(1)}'$ and $\cm{A}_{(1)}\cm{A}_{(1)}'$ respectively. By Theorem 1, $\sqrt{T}\text{vec}(\cm{\widehat{A}}_{(1)}-\cm{A}_{(1)})\to_dN(\bm{0},\bm{\Sigma}_{\bm{h}})$.
	Note that
	\begin{equation}
	\begin{split}
	&\sqrt{T}(\cm{\widehat{A}}_{(1)}\cm{\widehat{A}}_{(1)}'-\cm{A}_{(1)}\cm{A}_{(1)}')\\
	=&\sqrt{T}(\cm{\widehat{A}}_{(1)}-\cm{A}_{(1)})\cm{A}_{(1)}'+\sqrt{T}\cm{A}_{(1)}(\cm{\widehat{A}}_{(1)}-\cm{A}_{(1)})'+\sqrt{T}(\cm{\widehat{A}}_{(1)}-\cm{A}_{(1)})(\cm{\widehat{A}}_{(1)}-\cm{A}_{(1)})',
	\end{split}
	\end{equation}
	so we have
	\begin{equation}
	\begin{split}
	&\sqrt{T}\text{vec}(\cm{\widehat{A}}_{(1)}\cm{\widehat{A}}_{(1)}'-\cm{A}_{(1)}\cm{A}_{(1)}')\\
	=&(\cm{A}_{(1)}\otimes\bm{I}_N)\sqrt{T}\text{vec}(\cm{\widehat{A}}_{(1)}-\cm{A}_{(1)})+(\bm{I}_N\otimes \cm{A}_{(1)})\sqrt{T}\text{vec}(\cm{\widehat{A}}_{(1)}-\cm{A}_{(1)})+O_p(T^{-1/2}).
	\end{split}
	\end{equation}
	Therefore, $\sqrt{T}\text{vec}(\cm{\widehat{A}}_{(1)}\cm{\widehat{A}}_{(1)}'-\cm{A}_{(1)}\cm{A}_{(1)}')$ is asymptotically normally distributed.
	
	By the matrix perturbation expansion \citep{Izenman75,Velu98},
	\begin{equation}
	\sqrt{T}(\bm{\widehat{U}}_{1k}-\bm{U}_{1k})=\sum_{i\neq k}\frac{1}{d_k^2-d_i^2}(\bm{U}_{1i}'\otimes \bm{U}_{1i}\bm{U}_{1i}')\sqrt{T}\text{vec}(\cm{\widehat{A}}_{(1)}\cm{\widehat{A}}_{(1)}'-\cm{A}_{(1)}\cm{A}_{(1)}')+O_p(T^{-1/2}).
	\end{equation}
	Therefore, $\sqrt{T}(\bm{\widehat{U}}_1-\bm{U}_1)$ is also asymptotically normally distributed.
	
	For $\text{vec}(\cm{\widehat{G}}_{(1)})$, by the definition of HOSVD,
	\begin{equation}
	\cm{\widehat{G}}=[\![\cm{\widehat{A}};\bm{\widehat{U}}'_1,\bm{\widehat{U}}'_2,\bm{\widehat{U}}'_3]\!],~~\text{and}~~\cm{G}=[\![\cm{A};\bm{U}'_1,\bm{U}'_2,\bm{U}'_3]\!].
	\end{equation}
	So we have
	\begin{equation}\begin{split}
	&\text{vec}(\cm{\widehat{G}}_{(1)}-\cm{G}_{(1)})\\
	=&(\bm{\widehat{U}}_3'\otimes\bm{\widehat{U}}_2'\otimes\bm{\widehat{U}}_1')\text{vec}(\cm{\widehat{A}}_{(1)})-(\bm{{U}}_3'\otimes\bm{{U}}_2'\otimes\bm{{U}}_1')\text{vec}(\cm{{A}}_{(1)})\\
	=&(\bm{\widehat{U}}_3'\otimes\bm{\widehat{U}}_2'\otimes\bm{\widehat{U}}_1'-\bm{{U}}_3'\otimes\bm{{U}}_2'\otimes\bm{{U}}_1')\text{vec}(\cm{\widehat{A}}_{(1)})
	+(\bm{{U}}_3'\otimes\bm{{U}}_2'\otimes\bm{{U}}_1')\text{vec}(\cm{\widehat{A}}_{(1)}-\cm{A}_{(1)})\\
	=&[(\bm{\widehat{U}}'_3-\bm{U}'_3)\otimes\bm{U}'_2\otimes\bm{U}'_1]\text{vec}(\cm{A}_{(1)})+[\bm{U}'_3\otimes(\bm{\widehat{U}}_2-\bm{U}_2)'\otimes\bm{U}'_1]\text{vec}(\cm{A}_{(1)})\\
	&+[\bm{U}'_3\otimes\bm{U}'_2\otimes(\bm{\widehat{U}}'_1-\bm{U}'_1)]\text{vec}(\cm{A}_{(1)})+(\bm{{U}}_3'\otimes\bm{{U}}_2'\otimes\bm{{U}}_1')\text{vec}(\cm{\widehat{A}}_{(1)}-\cm{A}_{(1)})+o_p(T^{-1/2})\\
	=&[\bm{I}_{r_3}\otimes((\bm{U}'_2\otimes\bm{U}'_1)\cm{A}'_{(3)})]\text{vec}(\bm{\widehat{U}}_3-\bm{U}_3)+[\bm{I}_{r_2}\otimes((\bm{U}'_3\otimes\bm{U}'_1)\cm{A}'_{(2)})]\text{vec}(\bm{\widehat{U}}_2-\bm{U}_2)\\
	&+[\bm{I}_{r_1}\otimes((\bm{U}'_3\otimes\bm{U}'_2)\cm{A}'_{(1)})]\text{vec}(\bm{\widehat{U}}_1-\bm{U}_1)+(\bm{{U}}_3'\otimes\bm{{U}}_2'\otimes\bm{{U}}_1')\text{vec}(\cm{\widehat{A}}_{(1)}-\cm{A}_{(1)})\\&+o_p(T^{-1/2}).
	\end{split}\end{equation}
	Therefore, $\sqrt{T}\text{vec}(\cm{\widehat{G}}_{(1)}-\cm{G}_{(1)})$ is also normally distributed with mean zero, as $T\to\infty$. For simplicity, we omit the covariance of each component, but they can be easily calculated by the above formula.
\end{proof}

\begin{proof}[Proof of Corollary \ref{cor:comparison}]
	The $\sqrt{T}$-consistency and asymptotic normality of $\cm{\widehat{A}}_{\textup{OLS}}$ has been studied in the proof of Theorem 1, with $\bm{\Sigma}_{\textup{OLS}}=\bm{J}^{-1}$.
	
	As discussed previously, $\bm{\Sigma}_{\textup{MLR}}=\bm{P}\bm{J}^{-1}\bm{P}'$ where $\bm{P}$ is a projection matrix. Note that $\bm{J}^{-1}-\bm{H}(\bm{H}'\bm{J}\bm{H})^\dagger\bm{H}'=\bm{J}^{-1/2}\bm{Q}_{\bm{J}^{1/2}\bm{H}}\bm{J}^{-1/2}$, where $\bm{Q}_{\bm{J}^{1/2}\bm{H}}$ is the projection matrix onto the orthogonal compliment of $\text{span}(\bm{J}^{1/2}\bm{H})$. Then, it is clear that $\bm{J}^{-1}\geq\bm{H}(\bm{H}'\bm{J}\bm{H})^\dagger\bm{H}'$.
	
	For the RRR estimator, the components in the SVD, $\bm{A}=\bm{U}\bm{D}\bm{V}'$, can be denoted as $\bm{\theta}=(\text{vec}(\bm{U})',\text{diag}(\bm{D})',\text{vec}(\bm{V})')'$. Therefore, the gradient matrix of the RRR is $\bm{R}=\partial\bm{h}/\partial\bm{\theta}$.
	Since $\bm{U}_1$ in Tucker decomposition is exactly the same as the left singular vectors in the SVD of $\cm{A}_{(1)}$, we can view the Tucker decomposition as a further decomposition of the matrix $\bm{DV}'$. Therefore, $\bm{H}=\partial\bm{h}/\partial\bm{\phi}=\partial\bm{h}/\partial\bm{\theta}\cdot\partial\bm{\theta}/\partial\bm{\phi}=\bm{R}\cdot\partial\bm{\theta}/\partial\bm{\phi}$.
	By similar arguments in the proof of Theorem 1, we can obtain that the RRR estimator has the asymptotic covariance $\bm{\Sigma}_{\textup{RRR}}=\bm{R}(\bm{R}'\bm{JR})^\dagger\bm{R}'$ and it is smaller than or equal to $\bm{\Sigma}_{\textup{MLR}}$ since $\text{span}(\bm{J}^{1/2}\bm{R})\subset\text{span}(\bm{J}^{1/2}\bm{H})$.
\end{proof}

\section{Proofs of Propositions 1 and 2}

\begin{proof}[Proof of Proposition 1]
	Proof of global convergence hinges on standard arguments for block relaxation algorithm \citep{Lange10}. Note that although the objective function is nonconvex, the subproblem of each updating step is well-defined, differentiable and convex. Since the algorithm decreases the objective function monotonically, the convergence is guaranteed and any convergent point is a stationary point. With a slight abuse of notation, denote the objective function of $\bm{\phi}$ by $L(\bm{\phi})=L(\cm{G},\bm{U}_1,\bm{U}_2,\bm{U}_3)$, and then global convergence is guaranteed under the following conditions: (i) $L$ is coercive; (ii) the stationary points of $L$ are isolated; (iii) the algorithm mapping is continuous; (iv) $\bm{\phi}$ is a fixed point of the algorithm if and only if it is a stationary point of $L$; (v) $L(\bm{\phi}^{(t+1)})\leq L(\bm{\phi}^{(t)})$ with equality if and only if $\bm{\phi}^{(t)}$ is a fixed point of the algorithm.
	
	Condition (i) is guaranteed by the compactness of the set $\{\bm{\phi}:L(\bm{\phi})\leq L(\bm{\phi}^{(0)})\}$. Condition (ii) is assumed. Condition (iii) follows from the implicit function theorem since the algorithmic map $M$ is a composition of four differentiable and convex maps. A fixed point $\bm{\phi}$ satisfies that $\nabla_{\cm{G}}L(\bm{\phi})=0$
	and $\nabla_{\bm{U}_i}L(\bm{\phi})=0$; therefore the fixed point of the mapping $M(\bm{\phi})$, i.e., condition (iv) is satisfied. Finally, each step monotonically decreases $L(\bm{\phi})$, so they give a strict decrease if and only if they actually change the corresponding components. Hence, condition (v) is satisfied.
	
	Proof of local convergence hinges on the Ostrowski's theorem, which states that the sequence $\bm{\phi}^{(t+1)}=M(\bm{\phi}^{(t)})$ is locally attracted to $\bm{\phi}^{(\infty)}$ if the spectral radius of the differential of the algorithmic map $\rho[dM(\bm{\phi}^{(\infty)})]<1$. The condition can be shown to be true based on the local convergence of block relaxation algorithm by \citet{Lange10}, and we omit the detailed proof here.
\end{proof}

\begin{proof}[Proof of Proposition 2]
	The proof of this proposition follows the same lines as that of Theorem 3 in \cite{Uematsu17}. In this analysis, we fix parameters $\bm{\rho}$ and Lagrangian multipliers $\cm{C}_i$, and consider the objective function $\mathcal{L}_{\bm{\varrho}}(\cm{G},\{\bm{U}_i\},\{\bm{D}_i\},\{\bm{V}_i\};\{\cm{C}_i\})$. By the nature of the ADMM algorithm applied to $\cm{G}, \bm{U}_i, \bm{D}_i, \bm{V}_i$, the resulting sequence $\{\mathcal{L}_{\bm{\varrho}}(\cdot)\}$ is non-increasing. Clearly, the value of the function $\mathcal{L}_{\bm{\varrho}}(\cm{G},\{\bm{U}_i\},\{\bm{D}_i\},\{\bm{V}_i\};\{\cm{C}_i\})$ is bounded from below. Hence, the sequence $\{\mathcal{L}_{\bm{\varrho}}(\cdot)\}$ converges. 
	
	Note that the subspace of matrices with the orthonormal constraint is a Stiefel manifold which is compact and smooth. Moreover, on the Stiefel manifold, the objective function $\mathcal{L}_{\bm{\varrho}}(\cdot)$ is strongly convex with respect to any one of the blocks $\bm{U}_i$ and $\bm{V}_i$, for $i=1,2,3$, when all the other blocks are fixed. 
	
	Now consider the $\bm{U}_1$-update first. By definition of $\bm{U}_1^{(k+1)}$ in the algorithm, the value of the gradient of $\mathcal{L}_{\bm{\varrho}}(\cm{G},\cdot,\bm{U}_2^{(k)},\bm{U}_3^{(k)},\{\bm{D}_i^{(k)}\},\{\bm{V}_i^{(k)}\})$  with respect to $\bm{U}_1$  at  $\bm{U}_1^{(k+1)}$ vanishes. Thus, it follows easily from the strong convexity of $\mathcal{L}_{\bm{\varrho}}$ that $\Delta\mathcal{L}_{\bm{\varrho}}(\bm{U}_1^{(k+1)})\geq \delta d_g^2(\bm{U}_1^{(k+1)}, \bm{U}_1^{(k)})$, where $d_g(\cdot, \cdot)$ denotes the distance function on the Stiefel manifold, and $\delta>0$ is a constant.  Then it holds
	\[
	\sum_{k=0}^{\infty}d_g(\bm{U}_1^{(k+1)}, \bm{U}_1^{(k)}) \leq \delta^{-1/2} \sum_{k=0}^{\infty} [\Delta\mathcal{L}_{\bm{\varrho}}(\bm{U}_1^{(k+1)})]^{1/2} <\infty,
	\]
	which entails that $\{\bm{U}_1^{(k)}\}$ is a Cauchy sequence on the Stiefel manifold.  Thus, the sequence $\{\bm{U}_1^{(k)}\}$ converges to a limit point on the Stiefel manifold, i.e., a blockwise local minima with respect to $\bm{U}_1$. Similar arguments can be applied to $\bm{U}_2$, $\bm{U}_3$ and $\bm{V}_i$ for $i=1,2,3$. 
	
	Since we do not impose any constraints on $\cm{G}$ and $\bm{D}_i$ blocks, the objective function along one of these blocks with all the other blocks fixed is strictly convex. Thus, by the similar arguments, the sequences $\{\cm{G}^{(k)}\}$ and $\{\bm{D}_i^{(k)}\}$ also converge to a limit point. This completes the proof of Proposition 2.
\end{proof}

\section{Five lemmas used for the proof of Theorem 2 \label{append:lems}}

\begin{lemma} (HOSVD perturbation bound)
	\label{lemma:perturbation}
	Suppose that $\cm{A}=[\![\cm{G};\bm{U}_1,\bm{U}_2,\bm{U}_3]\!]$ and $\cm{\widetilde{A}}=[\![\cm{\widetilde{G}};\bm{\widetilde{U}}_1,\bm{\widetilde{U}}_2,\bm{\widetilde{U}}_3]\!]$ are two HOSVD for $\cm{A}$ and $\cm{\widetilde{A}}$, with the same multilinear ranks $(r_1,r_2,r_3)$. Under Assumptions \ref{asmp:RPS} and \ref{asmp:RSG}, we have
	\begin{equation}
	\|\cm{\widetilde{G}}-\cm{G}\|_{\textup{F}}\leq \frac{C(\eta_1+\eta_2+\eta_3)}{\delta}\|\cm{\widetilde{A}}-\cm{A}\|_{\textup{F}}.
	\end{equation}
	and
	\begin{equation}
	\|\bm{\widetilde{U}}_i-\bm{U}_i\|_{\textup{F}}\leq \frac{C\eta_i}{\delta}\|\cm{\widetilde{A}}-\cm{A}\|_{\textup{F}},
	\end{equation}
	where $\eta_i=\sum_{j=1}^{r_i}\sigma^2_1(\cm{A}_{(i)})/\sigma^2_j(\cm{A}_{(i)})$.
\end{lemma}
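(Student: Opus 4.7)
My plan is to reduce the HOSVD perturbation bound to classical matrix singular vector perturbation theory applied mode by mode, and then to control the core tensor via a telescoping identity. The key reductions are that matricization preserves the Frobenius norm, so $\|\cm{\widetilde{A}}_{(i)}-\cm{A}_{(i)}\|_{\textup{F}}=\|\cm{\widetilde{A}}-\cm{A}\|_{\textup{F}}$; that by definition of HOSVD the columns of $\bm{U}_i$ (resp.\ $\bm{\widetilde{U}}_i$) are the top $r_i$ left singular vectors of $\cm{A}_{(i)}$ (resp.\ $\cm{\widetilde{A}}_{(i)}$); and that $\cm{G}=\cm{A}\times_1\bm{U}_1'\times_2\bm{U}_2'\times_3\bm{U}_3'$ (and analogously for $\cm{\widetilde{G}}$). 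Assumptions \ref{asmp:RPS} and \ref{asmp:RSG} guarantee the quantities entering the bound are well defined.

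\textbf{Bound for the factor matrices.} For each mode $i$, I would apply the Davis--Kahan/Wedin sin-$\Theta$ inequality column by column. Assumption \ref{asmp:RSG} ensures that the top $r_i$ singular values of $\cm{A}_{(i)}$ are strictly separated, so each individual singular vector is unique up to sign; after aligning signs to minimise $\|\widetilde{\bm{u}}_{i,j}-\bm{u}_{i,j}\|_{2}$, one gets $\|\widetilde{\bm{u}}_{i,j}-\bm{u}_{i,j}\|_{2}\lesssim\|\cm{\widetilde{A}}_{(i)}-\cm{A}_{(i)}\|_{\textup{op}}/\mathrm{gap}_{j}$ for $1\le j\le r_i$, where $\mathrm{gap}_{j}$ denotes the distance from $\sigma_j(\cm{A}_{(i)})$ to the nearest other singular value including the zeros beyond the $r_i$-th. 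The relative spectral gap $\sigma_{j-1}^{2}-\sigma_{j}^{2}\ge\delta\sigma_{j-1}^{2}$ together with $\sigma_{r_i+1}(\cm{A}_{(i)})=0$ gives $\mathrm{gap}_{j}\gtrsim\delta\sigma_{j}(\cm{A}_{(i)})$ for every $j$. Squaring, summing over $j$, and using $\|\cdot\|_{\textup{op}}\le\|\cdot\|_{\textup{F}}$ yields $\|\bm{\widetilde{U}}_{i}-\bm{U}_{i}\|_{\textup{F}}^{2}\lesssim \delta^{-2}\|\cm{\widetilde{A}}-\cm{A}\|_{\textup{F}}^{2}\sum_{j=1}^{r_i}\sigma_{j}^{-2}(\cm{A}_{(i)})$, which is exactly the second claim once the definition of $\eta_{i}$ is substituted.

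\textbf{Bound for the core tensor.} I would decompose $\cm{\widetilde{G}}-\cm{G}=\cm{\widetilde{A}}\times_{1}\bm{\widetilde{U}}_{1}'\times_{2}\bm{\widetilde{U}}_{2}'\times_{3}\bm{\widetilde{U}}_{3}'-\cm{A}\times_{1}\bm{U}_{1}'\times_{2}\bm{U}_{2}'\times_{3}\bm{U}_{3}'$ into four pieces by inserting three intermediate terms in which $\bm{\widetilde{U}}_{i}$ is replaced by $\bm{U}_{i}$ one factor at a time. One piece equals $(\cm{\widetilde{A}}-\cm{A})\times_{1}\bm{\widetilde{U}}_{1}'\times_{2}\bm{\widetilde{U}}_{2}'\times_{3}\bm{\widetilde{U}}_{3}'$, whose Frobenius norm is at most $\|\cm{\widetilde{A}}-\cm{A}\|_{\textup{F}}$ because contractions with orthonormal matrices are non-expansive. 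Each of the other three pieces involves $(\bm{\widetilde{U}}_{i}-\bm{U}_{i})'$ contracted against $\cm{A}$ (or $\cm{\widetilde{A}}$) and two orthonormal factors, so via the mode-$i$ matricization and the standard operator bound for multilinear products, its Frobenius norm is bounded by $\sigma_{1}(\cm{A}_{(i)})\|\bm{\widetilde{U}}_{i}-\bm{U}_{i}\|_{\textup{F}}$. Combining the factor-matrix bound with the triangle inequality then produces the $(\eta_{1}+\eta_{2}+\eta_{3})/\delta$ factor.

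\textbf{Main obstacle.} The principal difficulty is the sign/rotational non-uniqueness of HOSVD factors: the stated bound is to be read after optimal column-wise sign alignment of $\bm{\widetilde{U}}_{i}$ against $\bm{U}_{i}$. Verifying that one may apply Davis--Kahan column by column (rather than at the subspace level) is essential, since it is precisely this refinement that produces the sharp $\eta_{i}/\delta$ rate involving all $\sigma_{j}$s rather than the coarser $1/(\delta\sigma_{r_i})$ dependence that a single subspace-level sin-$\Theta$ bound would give. Assumption \ref{asmp:RSG} is strong enough to guarantee column-level uniqueness and to convert every spectral gap to $\delta\sigma_{j}$, so the required per-column argument goes through; the remaining calculations are bookkeeping.
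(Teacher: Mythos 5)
Your proposal is correct and follows essentially the same route as the paper's own proof: a column-wise Davis--Kahan/Wedin bound for each singular vector (the paper cites the SVD variant of Yu, Wang and Samworth, with the relative gap of Assumption \ref{asmp:RSG} converting each denominator to $\delta\sigma_j^2(\cm{A}_{(i)})$, exactly as your $\mathrm{gap}_j\gtrsim\delta\sigma_j$ step does), followed by squaring and summing over $j$ to get the $\eta_i/\delta$ rate, and then the identical telescoping decomposition of $\cm{\widetilde{G}}-\cm{G}$ into one term controlled by orthonormal contraction and three terms controlled by $\sigma_1(\cm{A}_{(i)})\|\bm{\widetilde{U}}_i-\bm{U}_i\|_{\textup{F}}$ via Assumption \ref{asmp:RPS}. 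Your explicit attention to column-level sign alignment is handled only implicitly in the paper, but this is a presentational rather than substantive difference.
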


\begin{proof}[Proof of Lemma 1]
	Since both $[\![\cm{G};\bm{U}_1,\bm{U}_2,\bm{U}_3]\!]$ and $[\![\cm{\widetilde{G}};\bm{\widetilde{U}}_1,\bm{\widetilde{U}}_2,\bm{\widetilde{U}}_3]\!]$ are HOSVD for $\cm{A}$ and $\cm{\widetilde{A}}$.
	Each factor matrix is exactly the left singular vectors of the corresponding tensor matricization. We can apply the matrix perturbation theory for the factor matrices.
	
	By the extension of the Davis-Kahan theorem for singular decomposition, i.e. Theorem 3 in \citet{Yu15}, under Assumption 6, we have for the $j$-th singular vector of $\cm{A}_{(1)}$,
	\begin{equation}
	\|\bm{\widetilde{U}}_{ij}-\bm{U}_{ij}\|_{\textup{F}}\leq\frac{c(2\sigma_1(\cm{G}_{(i)})+\|\cm{\widetilde{A}}_{(i)}-\cm{A}_{(i)}\|_{\textup{op}})\|\cm{\widetilde{A}}_{(i)}-\cm{A}_{(i)}\|_{\textup{F}}}{\min[\sigma^2_{j-1}(\cm{G}_i)
		-\sigma^2_{j}(\cm{G}_i),\sigma^2_{j}(\cm{G}_i)-\sigma^2_{j+1}(\cm{G}_i)]}\leq\frac{c\sigma_1(\cm{A}_{(i)})\|\cm{\widetilde{A}}-\cm{A}\|_{\textup{F}}}{\delta\sigma^2_j(\cm{A}_{(i)})}.
	\end{equation}
	Therefore,
	\begin{equation*}
	%	\begin{split}
	\|\bm{\widetilde{U}}_i-\bm{U}_i\|^2_{\textup{F}}= \sum_{j=1}^{r_i}\|\bm{\widetilde{U}}_{ij}-\bm{U}_{ij}\|^2_{\textup{F}}\leq \frac{c\eta_i^2}{\delta^2}\|\cm{\widetilde{A}}-\cm{A}\|_{\textup{F}}^2.
	%	\end{split}
	\end{equation*}	
	
	By the HOSVD, $\cm{G}=[\![\cm{A};\bm{U}_1',\bm{U}_2',\bm{U}_3']\!]$ and $\cm{\widetilde{G}}=[\![\cm{\widetilde{A}};\bm{\widetilde{U}}_1',\bm{\widetilde{U}}_2',\bm{\widetilde{U}}_3']\!]$. Then, we have
	\begin{equation}\label{eq:split}
	\begin{split}
	\|\cm{\widetilde{G}}-\cm{G}\|_{\textup{F}}&=\|(\bm{\widetilde{U}}_3\otimes\bm{\widetilde{U}}_2\otimes\bm{\widetilde{U}}_1)'\textup{vec}(\cm{\widetilde{A}})-(\bm{U}_3\otimes\bm{U}_2\otimes\bm{U}_1)'\textup{vec}(\cm{A})\|_{\textup{F}}\\
	&=\|(\bm{\widetilde{U}}_3\otimes\bm{\widetilde{U}}_2\otimes\bm{\widetilde{U}}_1)'\textup{vec}(\cm{\widetilde{A}}-\cm{A})\|_{\textup{F}}+\|(\bm{\widetilde{U}}_3-\bm{U}_3)'\cm{A}_{(3)}(\bm{\widetilde{U}}_2\otimes\bm{\widetilde{U}}_1)\|_{\text{F}}\\
	&+\|(\bm{\widetilde{U}}_2-\bm{U}_2)'\cm{A}_{(2)}(\bm{U}_3\otimes\bm{\widetilde{U}}_1)\|_{\text{F}}+\|(\bm{\widetilde{U}}_1-\bm{U}_1)'\cm{A}_{(1)}(\bm{U}_3\otimes\bm{U}_2)\|_{\text{F}}.
	\end{split}
	\end{equation}
	Since $\bm{\widetilde{U}}_3\otimes\bm{\widetilde{U}}_2\otimes\bm{\widetilde{U}}_1$ is orthonormal, $\|(\bm{\widetilde{U}}_3\otimes\bm{\widetilde{U}}_2\otimes\bm{\widetilde{U}}_1)'\textup{vec}(\cm{\widetilde{A}}-\cm{A})\|_{\textup{F}}\leq\|\cm{\widetilde{A}}-\cm{A}\|_{\textup{F}}$. In addition $(\bm{U}_3\otimes\bm{U}_2)$ is orthonormal, so
	\begin{eqnarray}
	\|(\bm{\widetilde{U}}_1-\bm{U}_1)'\cm{A}_{(1)}(\bm{U}_3\otimes\bm{U}_2)\|_{\text{F}}\leq\|(\bm{\widetilde{U}}_1-\bm{U}_1)'\cm{A}_{(1)}\|_{\text{F}}\leq \bar{g}\|\bm{\widetilde{U}}_1-\bm{U}_1\|_{\text{F}},
	\end{eqnarray}
	where the last inequality follows from $\lambda_{\max}^{1/2}(\bm{A}_{(1)}\bm{A}_{(1)}')=\sigma_1(\bm{A}_{(1)})\leq\bar{g}$. We can also bound the other two terms in \eqref{eq:split} in the similar way.
	
	In summary, we can have
	\begin{equation}
	\|\cm{\widetilde{G}}-\cm{G}\|_{\text{F}}\leq\frac{C(\eta_1+\eta_2+\eta_3)}{\delta}\|\cm{\widetilde{A}}-\cm{A}\|_{\textup{F}}.
	\end{equation}
\end{proof}

\begin{lemma} (Deviation bound)
	\label{lemma:DB}
	Under the conditions of Theorem \ref{thm:errorbound}, for $T\gtrsim\log(N^2P)$, the following two inequalities
	\begin{equation}\begin{split}
	&\langle T^{-1}\bm{Z}'\bm{e},\bm{\widehat{\Delta}}_{\bm{u}}\bm{\widehat{g}}\rangle\leq C\mathcal{M}\sqrt{\log(N^2P)/T}\|\bm{\widehat{\Delta}}_{\bm{u}}\|_1,\\
	\text{and}~&\langle T^{-1}\bm{Z}'\bm{e},\bm{U}\bm{\widehat{\Delta}}_{\bm{g}}\rangle\leq C\mathcal{M}\sqrt{\log(N^2P)/T}\|\bm{\widehat{\Delta}}_{\bm{g}}\|_1,
	\end{split}\end{equation}
	hold with probability at least $1-C\exp(-c\log(N^2P))$.
\end{lemma}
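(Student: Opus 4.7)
The strategy is to reduce each inner product to the $\ell_\infty$ norm of a cross-covariance-type vector via H\"older's inequality, and then invoke the concentration inequality for stationary Gaussian VAR processes from \citet{Basu15}.

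First, I would apply H\"older's inequality to both terms. Using the identity $\langle \bm{a}, \bm{M}\bm{b}\rangle = \langle \bm{a}\bm{b}', \bm{M}\rangle$,
\begin{equation*}
\langle T^{-1}\bm{Z}'\bm{e}, \bm{\widehat{\Delta}}_{\bm{u}}\bm{\widehat{g}}\rangle = \langle T^{-1}\bm{Z}'\bm{e}\,\bm{\widehat{g}}', \bm{\widehat{\Delta}}_{\bm{u}}\rangle \leq \|T^{-1}\bm{Z}'\bm{e}\|_\infty\,\|\bm{\widehat{g}}\|_\infty\,\|\bm{\widehat{\Delta}}_{\bm{u}}\|_1.
\end{equation*}
Assumption \ref{asmp:RPS} ($\sigma_1(\cm{G}_{(j)})\leq\bar{g}$) ensures $\|\bm{\widehat{g}}\|_\infty \leq \|\cm{\widehat{G}}\|_{\mathrm{F}} \leq \sqrt{r_1}\bar{g} = O(1)$, so it suffices to control $\|T^{-1}\bm{Z}'\bm{e}\|_\infty$. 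Similarly,
\begin{equation*}
\langle T^{-1}\bm{Z}'\bm{e}, \bm{U}\bm{\widehat{\Delta}}_{\bm{g}}\rangle = \langle \bm{U}'T^{-1}\bm{Z}'\bm{e}, \bm{\widehat{\Delta}}_{\bm{g}}\rangle \leq \|\bm{U}'T^{-1}\bm{Z}'\bm{e}\|_\infty\,\|\bm{\widehat{\Delta}}_{\bm{g}}\|_1,
\end{equation*}
reducing the second bound to controlling $\|\bm{U}'T^{-1}\bm{Z}'\bm{e}\|_\infty$.

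Next, I would apply the VAR concentration. Since $\bm{Z}'\bm{e} = \mathrm{vec}(\bm{X}'\bm{E})$, the $N^2P$ entries of $T^{-1}\bm{Z}'\bm{e}$ are sums of the form $T^{-1}\sum_t X_{t,\ell}E_{t,j}$, each a martingale-difference sum since $\bm{\epsilon}_t$ is independent of $\mathcal{F}_{t-1}$. Under Assumptions \ref{asmp:stationary} and \ref{asmp:subgau}, Proposition 2.4(b) of \citet{Basu15} gives a sub-exponential deviation bound per entry with variance proxy proportional to $\mathcal{M}/T$, where $\mathcal{M} = \lambda_{\max}(\bm{\Sigma}_{\bm{\epsilon}})(1+\mu_{\max}(\mathcal{A})/\mu_{\min}(\mathcal{A}))$. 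A union bound over the $N^2P$ entries, combined with the assumption $T\gtrsim\log(N^2P)$, yields $\|T^{-1}\bm{Z}'\bm{e}\|_\infty \leq c\mathcal{M}\sqrt{\log(N^2P)/T}$ with probability at least $1-C\exp(-c\log(N^2P))$. For $\|\bm{U}'T^{-1}\bm{Z}'\bm{e}\|_\infty$, each of its $r_1r_2r_3 \leq N^2P$ entries equals $T^{-1}\sum_t \bm{x}_t'\bm{V}_k\bm{\epsilon}_t$, where $\bm{V}_k$ is the matrix reshape of the unit-norm column $\bm{u}_k$ of $\bm{U}$; this is again a martingale sum whose conditional variance is bounded by $\lambda_{\max}(\bm{\Sigma}_{\bm{\epsilon}})\lambda_{\max}(\bm{\Gamma}^*)\|\bm{V}_k\|_{\mathrm{F}}^2 = O(\mathcal{M})$ using $\|\bm{V}_k\|_{\mathrm{F}}=1$. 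The identical concentration, union bounded over $r_1r_2r_3 \leq N^2P$ projected entries, yields the same bound, and a final union bound over the two events delivers the joint probability asserted in the lemma.

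The main obstacle is the per-entry concentration of $T^{-1}\sum_t \bm{x}_t'\bm{V}_k\bm{\epsilon}_t$ under VAR dependence: the summands are not i.i.d., so one must control both the conditional-variance accumulation and the persistence of the VAR through the spectral quantities $\mu_{\min}(\mathcal{A})$ and $\mu_{\max}(\mathcal{A})$. The \citet{Basu15} framework provides exactly this via spectral-density bounds and Hanson--Wright-type inequalities, and invoking it is the technical heart of the argument. A minor bookkeeping obstacle is absorbing the rank-dependent factors (such as the $\sqrt{r_1}\bar{g}$ from $\|\bm{\widehat{g}}\|_\infty$ and $\lambda_{\max}(\bm{\Gamma}^*)$) into the constant $C$ while keeping the dominant scaling $\mathcal{M}\sqrt{\log(N^2P)/T}$ intact.
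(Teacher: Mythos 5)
Your proposal matches the paper's proof essentially step for step: both reduce each inner product via H\"older's inequality to the sup-norm of (projected) entries of $T^{-1}\bm{X}'\bm{E}$, invoke the \citet{Basu15} cross-covariance concentration bound (the paper's Lemma 4) for bilinear forms $\bm{u}'(\bm{X}'\bm{E}/T)\bm{v}$ with unit vectors --- noting, as you do via $\|\bm{V}_k\|_{\mathrm{F}}=1$ and as the paper does via spectral-density invariance under orthonormal projection, that the projected entries are exactly such bilinear forms --- and finish with union bounds over at most $N^2P$ entries at level $\eta=\sqrt{\log(N^2P)/T}$. The only minor difference is in the first inequality: the paper bounds $\|\bm{\widehat{g}}\|_\infty\leq\bar{g}$ directly (each entry of $\cm{\widehat{G}}_{(1)}$ is dominated by its spectral norm), whereas your bound $\|\bm{\widehat{g}}\|_\infty\leq\|\cm{\widehat{G}}\|_{\mathrm{F}}\leq\sqrt{r_1}\bar{g}$ carries an extra $\sqrt{r_1}$ into the constant, which is harmless for fixed ranks but avoidable.
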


\begin{proof}[Proof of Lemma 2]
	For the first inequality,
	\begin{equation}
	\langle T^{-1}\bm{Z}'\bm{e},\bm{\widehat{\Delta}}_{\bm{u}}\bm{\widehat{g}}\rangle=\langle T^{-1}\bm{Z}'\bm{e\widehat{g}}',\bm{\widehat{\Delta}}_{\bm{u}}\rangle\leq\|T^{-1}\bm{Z}'\bm{e\widehat{g}}'\|_\infty\|\bm{\widehat{\Delta}}_{\bm{u}}\|_1\leq \bar{g}\|T^{-1}\bm{Z}'\bm{e}\|_\infty\|\bm{\widehat{\Delta}}_{\bm{u}}\|_1,
	\end{equation}
	where
	$\|\bm{Z}'\bm{e}\|_\infty=\|\bm{X}'\bm{E}\|_\infty=\max_{1\leq i\leq NP,1\leq j\leq N}|\bm{e}_i'\bm{X}'\bm{E}\bm{e}_j|$ where $\bm{e}_i$ is a coordinate vector whose $i$-th entry is 1 and the others are 0.
	
	By Lemma 4, for any vector $\bm{u}$ and $\bm{v}$ s.t. $\|\bm{u}\|_2=\|\bm{v}\|_2=1$, and $\eta>0$,
	\begin{equation}
	\mathbb{P}\left[|\bm{u}'(\bm{X}'\bm{E}/T)\bm{v}|>2\pi\left(\lambda_{\max}(\bm{\Sigma}_{\bm{\epsilon}})\left(1+\frac{\mu_{\max}(\mathcal{A})}{\mu_{\min}(\mathcal{A})}\right)\right)\eta\right]\leq6\exp[-cT\min(\eta,\eta^2)].
	\end{equation}
	
	Therefore, if we denote $\mathcal{M}\equiv\lambda_{\max}(\bm{\Sigma}_{\bm{\epsilon}})[1+\mu_{\max}(\mathcal{A})/\mu_{\min}(\mathcal{A})]$ and take a union bound,
	\begin{equation}
	\mathbb{P}\left[\max_{1\leq i\leq NP,1\leq j\leq N}|\bm{e}_i'\bm{X}'\bm{Ee}_j/T|>2\pi\mathcal{M}\eta\right]\leq6N^2P\exp[-cT\min(\eta,\eta^2)].
	\end{equation}
	Take $\eta=\sqrt{\log(N^2P)/T}$ and we obtain
	\begin{equation}
	\mathbb{P}\left[\max_{1\leq i\leq NP,1\leq j\leq N}|\bm{e}_i'\bm{X}'\bm{Ee}_j/T|>2\pi\mathcal{M}\sqrt{\log(N^2P)/T}\right]\leq C\exp[-c\log(N^2P)].
	\end{equation}
	
	For the second inequality,
	\begin{equation}
	\begin{split}
	&\langle T^{-1}\bm{Z}'\bm{e},\bm{U}\bm{\widehat{\Delta}}_{\bm{g}}\rangle\leq\|T^{-1}\bm{U}\bm{Z}'\bm{e}\|_\infty\|\bm{\widehat{\Delta}}_{\bm{g}}\|_1\\
	=&\|T^{-1}(\bm{U}_3\otimes\bm{U}_2)'\bm{X}'\bm{E}\bm{U}_1\|_\infty\|\bm{\widehat{\Delta}}_{\bm{g}}\|_1\\
	=&\max_{1\leq i\leq r_2r_3,1\leq j\leq r_1}|T^{-1}\bm{e}_i'(\bm{U}_3\otimes\bm{U}_2)'\bm{X}'\bm{E}\bm{U}_1\bm{e}_j|\cdot\|\bm{\widehat{\Delta}}_{\bm{g}}\|_1.
	\end{split}
	\end{equation}
	
	For any orthonormal matrix $\bm{U}_3\otimes\bm{U}_2$, the spectral density of $\{\bm{X}_t(\bm{U}_3\otimes\bm{U}_2)\}$ is defined as
	\begin{equation}
	f_{\bm{X}(\bm{U}_3\otimes\bm{U}_2)}(\theta)=\frac{1}{2\pi}\sum_{\ell=-\infty}^\infty(\bm{U}_3\otimes\bm{U}_2)'\bm{\Gamma}_{\bm{X}}(\ell)(\bm{U}_3\otimes
	\bm{U}_2)e^{-i\ell\theta}=(\bm{U}_3\otimes\bm{U}_2)'f_{\bm{X}}(\theta)(\bm{U}_3\otimes\bm{U}_2),
	\end{equation}
	so we have $\mathcal{M}(f_{\bm{X}(\bm{U}_3\otimes\bm{U}_2)})\leq\mathcal{M}(f_{\bm{X}})$. Similarly, $\mathcal{M}(f_{\bm{EU}_1})\leq\mathcal{M}(f_{\bm{E}})$. Therefore, for any $\|\bm{u}\|_2\leq1$ and $\|\bm{v}\|_2\leq1$,
	\begin{equation}
	\mathbb{P}[|T^{-1}\bm{u}'((\bm{U}_3\otimes\bm{U}_2)\bm{X}'\bm{E}\bm{U}_1)\bm{v}|>2\pi\mathcal{M}\eta]\leq6\exp[-cT\min(\eta,\eta^2)].
	\end{equation}
	Taking a union bound, we can have
	\begin{equation}
	\mathbb{P}\left[\max_{1\leq i\leq r_2r_3,1\leq j\leq r_1}|T^{-1}\bm{e}_i'((\bm{U}_3\otimes\bm{U}_2)\bm{X}'\bm{E}\bm{U}_1)\bm{e}_j|>2\pi\mathcal{M}\eta\right]\leq Cr_1r_2r_3\exp[-cT\min(\eta,\eta^2)].
	\end{equation}
	Take $\eta=\sqrt{\log(N^2P)/T}$ and we obtain
	\begin{equation}
	\begin{split}
	&\mathbb{P}\left[\max_{1\leq i\leq r_2r_3,1\leq j\leq r_1}|T^{-1}\bm{e}_i'((\bm{U}_3\otimes\bm{U}_2)\bm{X}'\bm{E}\bm{U}_1)\bm{e}_j|>2\pi\mathcal{M}\sqrt{\log(N^2P)/T}\right]\\
	\leq&Cr_1r_2r_3\exp[-c\log(N^2P)]\leq CN^2P\exp[-c\log(N^2P)]\leq C'\exp[-\log(N^2P)].
	\end{split}
	\end{equation}
	The proof is complete.
\end{proof}

\begin{lemma} (Restricted eigenvalue)
	\label{lemma:RE}
	Under the conditions of Theorem \ref{thm:errorbound}, if the sample size $T\gtrsim\mathcal{M}^2d\min[\log(N^2P),\log(cN^2P/d)]$, for $\bm{\widehat{\Delta}}=\bm{\widehat{U}}\bm{\widehat{g}}-\bm{Ug}$, where $(\cm{\widehat{G}},\bm{\widehat{U}}_1,\bm{\widehat{U}}_2,\bm{\widehat{U}}_3)$ and $(\cm{G},\bm{U}_1,\bm{U}_2,\bm{U}_3)$ belong to $\Omega$,
	\begin{equation}
	T^{-1}\|(\bm{I}_N\otimes\bm{X})\bm{\widehat{\Delta}}\|_2^2\geq\alpha\|\bm{\widehat{\Delta}}\|_2^2/2,
	\end{equation}
	with probability at least $1-2\exp\{-cd\min[\log(N^2P),\log(cN^2P/d)]\}$, where $\alpha=\lambda_{\min}(\bm{\Sigma}_{\bm{\epsilon}})/\mu_{\max}(\mathcal{A})$, $d=2\nu^{-2}r_1r_2r_3$.
\end{lemma}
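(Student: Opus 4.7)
The plan is to establish a uniform restricted eigenvalue condition on the set
\[
\mathcal{V}:=\{\bm{\widehat{U}}\bm{\widehat{g}}-\bm{Ug}:(\cm{\widehat{G}},\{\bm{\widehat{U}}_i\}),\,(\cm{G},\{\bm{U}_i\})\in\Omega\}.
\]
Reshaping any $\bm{v}\in\mathcal{V}$ into the $NP\times N$ matrix $\bm{V}$ gives $T^{-1}\|(\bm{I}_N\otimes\bm{X})\bm{v}\|_2^2=T^{-1}\sum_{j=1}^N \bm{V}_{\cdot j}'\bm{X}'\bm{X}\bm{V}_{\cdot j}$ and $\|\bm{v}\|_2^2=\sum_j\|\bm{V}_{\cdot j}\|_2^2$, so it suffices to obtain the pointwise lower bound $T^{-1}\bm{u}'\bm{X}'\bm{X}\bm{u}\ge\alpha\|\bm{u}\|_2^2/2$ uniformly in the relevant directions $\bm{u}\in\mathbb{R}^{NP}$. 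I would then follow the standard ``expectation plus uniform deviation'' decomposition $T^{-1}\bm{u}'\bm{X}'\bm{X}\bm{u}=\bm{u}'\bm{\Gamma}^*\bm{u}+\bm{u}'(T^{-1}\bm{X}'\bm{X}-\bm{\Gamma}^*)\bm{u}$ used in high-dimensional VAR analyses.

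For the first (population) term, the spectral-density characterization of $\bm{\Gamma}^*$ for a stationary Gaussian VAR yields $\lambda_{\min}(\bm{\Gamma}^*)\ge 2\pi\min_\theta\lambda_{\min}(f_X(\theta))\ge \lambda_{\min}(\bm{\Sigma}_{\bm{\epsilon}})/\mu_{\max}(\mathcal{A})=\alpha$, by Proposition 2.3 of Basu and Michailidis (2015) under Assumption 1. For the deviation term at a fixed $\bm{u}$ on the unit sphere, the same Hanson--Wright-type concentration already invoked in Lemma 2 (i.e.\ Lemma 4 of the paper) gives $\mathbb{P}[|\bm{u}'(T^{-1}\bm{X}'\bm{X}-\bm{\Gamma}^*)\bm{u}|>2\pi\mathcal{M}\eta]\le 6\exp\{-cT\min(\eta,\eta^2)\}$.

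The crux is promoting this pointwise bound to a uniform statement on $\mathcal{V}$. Here I would exploit the restricted parameter space $\Omega$: every nonzero entry of an orthonormal $\bm{U}_i$ satisfies $\bm{U}_{ij}^2\ge\nu$, so each of its $r_i$ unit-norm columns has at most $\nu^{-1}$ nonzero entries. By Kronecker-multiplicativity of the support, the elements $\bm{Ug}$ and $\bm{\widehat{U}}\bm{\widehat{g}}$ lie in unions of coordinate subspaces whose supports are strongly constrained, and their difference lives in a union of subspaces of $\mathbb{R}^{N^2P}$ of total effective dimension at most $d=2\nu^{-2}r_1r_2r_3$ (the factor $2$ accounts for the two Kronecker-structured terms in the difference). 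A volumetric $\tfrac14$-net inside each coordinate subspace, together with the combinatorial count of admissible support patterns, produces an $\varepsilon$-net of $\mathcal{V}\cap\{\|\bm{v}\|_2=1\}$ of cardinality at most $\exp\{cd\min[\log(NP),\log(cNP/d)]\}$. A union bound with $\eta\asymp\alpha/\mathcal{M}$, combined with $T\gtrsim\mathcal{M}^2 d\min[\log(NP),\log(cNP/d)]$, controls the deviation by $\alpha/4$ on every net point; a short peeling argument lifts this to the whole restricted set, and adding it to the population bound yields the claim with the advertised probability.

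The main obstacle is the complexity analysis of $\mathcal{V}$: unlike in sparse regression or plain reduced-rank estimation, $\mathcal{V}$ is neither a sparse cone nor a low-rank variety, but a nonlinear union of Kronecker-structured sparse subspaces born out of the interaction between the threshold $\nu$, the column orthonormality of the $\bm{U}_i$'s, and the three-way Tucker decomposition. Correctly identifying the effective dimension $d$ and counting the admissible support patterns through this structure is the delicate part; once $d$ is pinned down, the covering and union-bound steps are routine and parallel those of Proposition 4.3 in Basu and Michailidis (2015).
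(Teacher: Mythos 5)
Your proposal is correct and follows essentially the same route as the paper's proof: splitting $\bm{\widehat{\Delta}}$ into $N$ blocks of length $NP$, lower-bounding the population term via $\lambda_{\min}(\bm{\Gamma}^*)\geq\lambda_{\min}(\bm{\Sigma}_{\bm{\epsilon}})/\mu_{\max}(\mathcal{A})$, deducing from the threshold $\nu$ and the Kronecker structure that each block is $d$-sparse with $d=2\nu^{-2}r_1r_2r_3$, and controlling the deviation term uniformly over the sparse set $\mathcal{K}(d)$ by the Basu--Michailidis concentration plus discretization bounds (the paper's Lemmas 4 and 5). The only cosmetic difference is that you propose building the $\varepsilon$-net and support count by hand with a peeling step, whereas the paper simply embeds each block into the generic $d$-sparse set and invokes Lemma F.2 of Basu and Michailidis directly, which sidesteps the refined Kronecker support-counting you flag as delicate.
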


\begin{proof}[Proof of Lemma 3]
	Denote by $\mathcal{K}(s)=\{\bm{v}\in\mathbb{R}^{NP}:\|\bm{v}\|_0\leq s,\|\bm{v}\|_2\leq1\}$ the set of $s$-sparse vectors.
	
	If we split $\bm{\widehat{\Delta}}$ into $N$ parts, namely $(\bm{\widehat{\Delta}})=(\bm{\widehat{\Delta}}_1',\dots,\bm{\widehat{\Delta}}_N')'$, where $\bm{\widehat{\Delta}}_k\in\mathbb{R}^{NP}$, we have
	\begin{equation}
	\|(\bm{I}_N\otimes\bm{X})\hat{\bm{\Delta}}\|_2^2=\sum_{i=1}^N\|\bm{X}\bm{\widehat{\Delta}}_i\|_2^2.
	\end{equation}
	
	Correspondingly, we split $\bm{U}$ and $\bm{\widehat{U}}$ into $N$ blocks, $\bm{U}=(\bm{M}_1',\dots,\bm{M}_N')$ and $\bm{\widehat{U}}=(\bm{\widehat{M}}_1',\dots,\bm{\widehat{M}}_N')'$.
	Since $(\cm{\widehat{G}},\bm{\widehat{U}}_1,\bm{\widehat{U}}_2,\bm{\widehat{U}}_3)$ and $(\cm{G},\bm{U}_1,\bm{U}_2,\bm{U}_3)$ belong to $\Omega$, the square of smallest nonzero entries in $\bm{U}_i$ and $\bm{\widehat{U}}_i$ is at least $\nu$. Since each column in $\bm{U}_i$ or $\bm{\widehat{U}}_i$ has unit Euclidean norm, the number of nonzero entries in $\bm{U}_i$ or $\bm{\widehat{U}}_i$
	is at most $1/\nu$. By the Kronecker structure in $\bm{U}$ and $\bm{\widehat{U}}$, the number of nonzero entries in each column of $\bm{M}_i$ or $\bm{\widehat{M}}_i$ is at most $\nu^{-2}$.
	Therefore, $\|\bm{\widehat{\Delta}}_i\|_0\leq2\nu^{-2}r_1r_2r_3:=d$.
	
	Denote $\bm{\widehat{\Gamma}}=\bm{X}'\bm{X}/T$ and $\bm{\Gamma}=\mathbb{E}\bm{\widehat{\Gamma}}$. Since
	\begin{equation}
	\begin{split}
	T^{-1}\|(\bm{I}_N\otimes\bm{X})\bm{\widehat{\Delta}}\|_2^2=&\bm{\widehat{\Delta}}'(\bm{I}_N\otimes\bm{\widehat{\Gamma}})\bm{\widehat{\Delta}}\\
	=&\bm{\widehat{\Delta}}'(\bm{I}_N\otimes\bm{\Gamma})\bm{\widehat{\Delta}}+\bm{\widehat{\Delta}}'[\bm{I}_N\otimes(\bm{\widehat{\Gamma}}-\bm{\Gamma})]\bm{\widehat{\Delta}}\\
	=&\bm{\widehat{\Delta}}'(\bm{I}_N\otimes\bm{\Gamma})\bm{\widehat{\Delta}}+\sum_{i=1}^N\bm{\widehat{\Delta}}_i'(\bm{\widehat{\Gamma}}-\bm{\Gamma})\bm{\widehat{\Delta}}_i.
	\end{split}
	\end{equation}
	By the property of spectral density, $\lambda_{\min}(\bm{\Gamma})\geq\lambda_{\min}(\bm{\Sigma}_{\bm{\epsilon}})/\mu_{\max}(\mathcal{A})$, so $T^{-1}\mathbb{E}(\|(\bm{I}_N\otimes\bm{X})\bm{\widehat{\Delta}}\|_2^2)=T^{-1}\bm{\widehat{\Delta}}'(\bm{I}_N\otimes\bm{\Gamma})
	\bm{\widehat{\Delta}}\geq\lambda_{\min}(\bm{\Gamma})\geq\lambda_{\min}(\bm{\Sigma}_{\bm{\epsilon}})/\mu_{\max}(\mathcal{A})\|\bm{\widehat{\Delta}}\|_2^2=\alpha\|\bm{\widehat{\Delta}}\|_2^2$.
	
	So it remains to show that $\sup_{\bm{\widehat{\Delta}}_i\in\mathcal{K}(d)}\bm{\widehat{\Delta}}_i'(\bm{\widehat{\Gamma}}-\bm{\Gamma})\bm{\widehat{\Delta}}_i$ is close to zero. If we combine Lemma 4 and 5, we can obtain that for any $\eta>0$,
	\begin{equation}\begin{split}
	&\mathbb{P}\left[\sup_{\bm{u}\in\mathcal{K}(d)}\Big{|}\bm{u}'(\bm{\widehat{\Gamma}}-\bm{\Gamma})\bm{u}\Big{|}>2\pi\mathcal{M}(f_{\bm{X}})\eta\right]\\
	\leq&2\exp\{-cT\min(\eta,\eta^2)+d\min[\log(NP),\log(cNP/d)]\}.
	\end{split}\end{equation}
	
	Finally, if we take $\eta=\alpha/(4\pi\mathcal{M})$,
	\begin{equation}\begin{split}
	&\mathbb{P}\left[\|(\bm{I}_N\otimes\bm{X})\bm{\widehat{\Delta}}\|_2^2/T\geq\alpha\|\bm{\widehat{\Delta}}\|_2^2/2\right]\\
	\geq&\mathbb{P}\left[\sup_{\bm{\widehat{\Delta}}_i\in\mathcal{K}(d)}\Big{|}\bm{\widehat{\Delta}}_i'(\bm{\widehat{\Gamma}}-\bm{\Gamma})\bm{\widehat{\Delta}}_i\Big{|}<\alpha\|\bm{\widehat{\Delta}}_i\|_2^2/2\right]\\
	\geq&1-2\exp\{-cT\mathcal{M}^{-2}+2d\min[\log(NP),\log(cNP/d)]\}\\
	\geq&1-2\exp\{-cd\min[\log(NP),\log(cNP/d)]\}.
	\end{split}\end{equation}
	The proof is complete.
\end{proof}

Next, to make the proof self-contained, we state two lemmas to establish concentration inequalities for Gaussian time series from \citet{Basu15}. The first one is Proposition 2.4 in \citet{Basu15}.

\begin{lemma}
	For a stationary and centered Gaussian time series $\{\bm{x}_t\}$ satisfying the bounded spectral density condition, there exists a constant $c>0$ such that for any vector $\bm{v}\in\mathbb{R}^p$ with $\|\bm{u}\|_2\leq1$, $\|\bm{v}\|_2\leq1$, and any $\eta\geq0$,
	\begin{equation*}\begin{split}
	\mathbb{P}[|\bm{v}'(\widehat{\bm{\Gamma}}-\bm{\Gamma})\bm{v}|>2\pi\mathcal{M}(f_{\bm{X}})\eta]\leq2\exp[-cT\min(\eta^2,\eta)],
	\end{split}\end{equation*}
	where $\widehat{\bm{\Gamma}}=T^{-1}\bm{X}'\bm{X}$ and $\bm{X}=[\bm{x}_T,\dots,\bm{x}_1]'$.
	
	For two $p$-dimensional, centered, stationary Gaussian processes $\bm{y}_t$ and $\bm{\epsilon}_t$ such that $\text{Cov}(\bm{y}_t,\bm{\epsilon}_t)=0$ for every $t$.
	Let $\bm{X}=[\bm{x}_T,\dots,\bm{x}_1]'$ and $\bm{E}=[\bm{\epsilon}_T,\dots,\bm{\epsilon}_1]'$ be the data matrices. Then, there exists a constant $c>0$ such that for any $\bm{u}$, $\bm{v}\in\mathbb{R}^p$ with $\|\bm{u}\|_2\leq1$ and $\|\bm{v}\|_2\leq1$, we have
	\begin{equation*}\begin{split}
	\mathbb{P}\left[|\bm{u}'(\bm{X}'\bm{E}/T)\bm{v}|>2\pi\left(\lambda_{\max}(\bm{\Sigma_\epsilon})\left(1+\frac{\mu_{\max}(\mathcal{A})}{\mu_{\max}(\mathcal{A})}\right)\right)\right].
	\end{split}\end{equation*}
\end{lemma}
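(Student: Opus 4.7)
The plan is to recognize that both inequalities are instances of Hanson--Wright-type concentration for Gaussian quadratic forms, coupled with the key fact that the operator norm of a block Toeplitz covariance matrix is controlled by the supremum of the spectral density. The result is essentially Proposition 2.4 of \citet{Basu15}, so I would either cite it directly or mirror their proof strategy.

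For the first inequality, I would first reduce $\bm{v}'(\widehat{\bm{\Gamma}}-\bm{\Gamma})\bm{v}$ to a centered quadratic form in a Gaussian vector. Stacking $\bm{Z}=\text{vec}(\bm{X}')\in\mathbb{R}^{Tp}$, we have $\bm{Z}\sim N(\bm{0},\bm{\Sigma}_T)$ where $\bm{\Sigma}_T$ is the block Toeplitz matrix built from the autocovariances $\{\bm{\Gamma}_{\bm{X}}(h)\}$, and
\begin{equation*}
\bm{v}'(\widehat{\bm{\Gamma}}-\bm{\Gamma})\bm{v} \;=\; T^{-1}\bm{Z}'(\bm{I}_T\otimes\bm{v}\bm{v}')\bm{Z} \;-\; \mathbb{E}\bigl[T^{-1}\bm{Z}'(\bm{I}_T\otimes\bm{v}\bm{v}')\bm{Z}\bigr].
\end{equation*}
Writing $\bm{Z}=\bm{\Sigma}_T^{1/2}\bm{g}$ with $\bm{g}\sim N(\bm{0},\bm{I})$, the Hanson--Wright inequality applied to the matrix $\bm{Q}=T^{-1}\bm{\Sigma}_T^{1/2}(\bm{I}_T\otimes\bm{v}\bm{v}')\bm{\Sigma}_T^{1/2}$ gives the bound with exponent $-c T\min(\eta,\eta^2)$, provided $\|\bm{Q}\|_{\textup{op}}$ and $\|\bm{Q}\|_{\textup{F}}^2/T$ are each controlled by $2\pi\mathcal{M}(f_{\bm{X}})$. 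The crucial spectral-density estimate is the standard inequality $\|\bm{\Sigma}_T\|_{\textup{op}}\leq 2\pi\,\mathcal{M}(f_{\bm{X}})$ for block Toeplitz matrices (a quick consequence of writing $\bm{\Sigma}_T$ as a Riemann approximation to $\int f_{\bm{X}}(\theta)\,d\theta$), which yields $\|\bm{Q}\|_{\textup{op}}\leq 2\pi\mathcal{M}(f_{\bm{X}})\,\|\bm{v}\|_2^2/T$ and a parallel bound on $\|\bm{Q}\|_{\textup{F}}^2$.

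For the second inequality, I would use a polarization/decoupling identity to recast the cross-term bilinear form as a difference of quadratic forms of the combined Gaussian process $\bm{w}_t=(\bm{x}_t',\bm{\epsilon}_t')'$. Because $\mathrm{Cov}(\bm{x}_t,\bm{\epsilon}_t)=0$, the spectral density of $\{\bm{w}_t\}$ is block diagonal, $f_{\bm{w}}(\theta)=\mathrm{diag}(f_{\bm{X}}(\theta),f_{\bm{E}}(\theta))$, and the polarization identity
\begin{equation*}
4\bm{u}'(\bm{X}'\bm{E}/T)\bm{v} \;=\; T^{-1}\|\bm{X}\bm{u}+\bm{E}\bm{v}\|_2^2 \;-\; T^{-1}\|\bm{X}\bm{u}-\bm{E}\bm{v}\|_2^2
\end{equation*}
expresses the cross-term as the difference of two centered quadratic forms in $\bm{w}_t$. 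Applying the first inequality to each piece and a union bound gives the prefactor $2\pi\mathcal{M}(f_{\bm{w}})$. To get this into the stated form, I would invoke the VAR representation $\bm{x}_t = \mathcal{A}(L)^{-1}\bm{\epsilon}_t$, which implies $f_{\bm{X}}(\theta)=(2\pi)^{-1}\mathcal{A}^{-1}(e^{-i\theta})\bm{\Sigma}_{\bm{\epsilon}}\mathcal{A}^{-*}(e^{-i\theta})$, whence $\mathcal{M}(f_{\bm{X}})\leq \lambda_{\max}(\bm{\Sigma}_{\bm{\epsilon}})/[2\pi\mu_{\min}(\mathcal{A})]$, while $\mathcal{M}(f_{\bm{E}})=\lambda_{\max}(\bm{\Sigma}_{\bm{\epsilon}})/(2\pi)$; combining these gives exactly $\mathcal{M}(f_{\bm{w}}) \leq (2\pi)^{-1}\lambda_{\max}(\bm{\Sigma}_{\bm{\epsilon}})[1+\mu_{\max}(\mathcal{A})/\mu_{\min}(\mathcal{A})]$, which matches the stated prefactor.

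The main obstacle is the careful bookkeeping of constants in the polarization step and in relating $\mathcal{M}(f_{\bm{X}})$ to the VAR-specific quantities $\mu_{\min}(\mathcal{A})$, $\mu_{\max}(\mathcal{A})$, and $\lambda_{\max}(\bm{\Sigma}_{\bm{\epsilon}})$; the factor of six in the tail probability of the original \citet{Basu15} version arises from three applications of a two-sided Hanson--Wright bound (corresponding to the two polarization quadratic forms plus a control on $\|\bm{X}\bm{u}\|_2^2$ or $\|\bm{E}\bm{v}\|_2^2$). Since the derivation is entirely standard and the result is stated in the paper as a citation-ready lemma, the cleanest write-up is to set up the reduction to Hanson--Wright, invoke the block Toeplitz spectral bound, and then defer the constant-chasing to \citet{Basu15}.
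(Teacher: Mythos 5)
You are right that the paper supplies no proof of this lemma at all: it is quoted verbatim (typos included --- note the second display in the paper is garbled, missing the $\eta$ and the tail bound $\leq 6\exp[-cT\min(\eta,\eta^2)]$, and printing $\mu_{\max}(\mathcal{A})/\mu_{\max}(\mathcal{A})$ where $\mu_{\max}(\mathcal{A})/\mu_{\min}(\mathcal{A})$ is meant) as Proposition 2.4 of \citet{Basu15}, so your citation-level conclusion matches the paper exactly, and your sketch of the first inequality is a faithful reconstruction of the cited proof: reducing $\bm{v}'(\widehat{\bm{\Gamma}}-\bm{\Gamma})\bm{v}$ to a centered Gaussian quadratic form with matrix $\bm{Q}=T^{-1}\bm{\Sigma}_T^{1/2}(\bm{I}_T\otimes\bm{v}\bm{v}')\bm{\Sigma}_T^{1/2}$, invoking the block Toeplitz bound $\|\bm{\Sigma}_T\|_{\textup{op}}\leq 2\pi\mathcal{M}(f_{\bm{X}})$, using $\mathrm{rank}(\bm{Q})\leq T$ to control $\|\bm{Q}\|_{\textup{F}}^2\leq T\|\bm{Q}\|_{\textup{op}}^2$, and applying Hanson--Wright is precisely how that part is established.

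There is, however, a genuine error in your treatment of the cross term. From $\mathrm{Cov}(\bm{x}_t,\bm{\epsilon}_t)=\bm{0}$ you conclude that the joint process $\bm{w}_t=(\bm{x}_t',\bm{\epsilon}_t')'$ has block-diagonal spectral density. This does not follow: contemporaneous uncorrelatedness kills only the lag-zero cross-covariance, whereas $f_{\bm{X},\bm{E}}(\theta)=(2\pi)^{-1}\sum_{\ell=-\infty}^{\infty}\mathrm{Cov}(\bm{x}_{t+\ell},\bm{\epsilon}_t)e^{-i\ell\theta}$ involves all lags. In the very application this lemma serves --- the VAR, where $\bm{x}_t$ is built from past innovations --- one has $\mathrm{Cov}(\bm{x}_{t+\ell},\bm{\epsilon}_t)\neq\bm{0}$ for $\ell\geq1$; indeed $f_{\bm{X},\bm{E}}(\theta)$ is proportional to $\mathcal{A}^{-1}(e^{-i\theta})\bm{\Sigma}_{\bm{\epsilon}}$ and is emphatically nonzero. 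This is why the general statement in \citet{Basu15} carries the prefactor $2\pi[\mathcal{M}(f_{\bm{X}})+\mathcal{M}(f_{\bm{E}})+\mathcal{M}(f_{\bm{X},\bm{E}})]\eta$: after your polarization step, the univariate processes $\bm{u}'\bm{x}_t\pm\bm{v}'\bm{\epsilon}_t$ have spectral density $\bm{u}^*f_{\bm{X}}\bm{u}+\bm{v}^*f_{\bm{E}}\bm{v}\pm2\,\mathrm{Re}(\bm{u}^*f_{\bm{X},\bm{E}}\bm{v})$, so the cross-spectrum cannot be dropped. What the hypothesis $\mathrm{Cov}(\bm{x}_t,\bm{\epsilon}_t)=\bm{0}$ actually buys is only that $\mathbb{E}[\bm{u}'\bm{X}'\bm{E}\bm{v}]=0$, i.e., that the two polarization quadratic forms have equal means. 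The omission is reparable rather than fatal: for the VAR one has $\mathcal{M}(f_{\bm{X},\bm{E}})\leq(2\pi)^{-1}\lambda_{\max}(\bm{\Sigma}_{\bm{\epsilon}})/\sqrt{\mu_{\min}(\mathcal{A})}$, and since $\mu_{\max}(\mathcal{A})\geq1$ (by Jensen, as $(2\pi)^{-1}\int_{-\pi}^{\pi}\mathcal{A}(e^{i\theta})\,d\theta=\bm{I}_N$), the three-term sum is dominated, up to a universal constant absorbable into $c$, by $(2\pi)^{-1}\lambda_{\max}(\bm{\Sigma}_{\bm{\epsilon}})[1+\mu_{\max}(\mathcal{A})/\mu_{\min}(\mathcal{A})]$. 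But your claim that the block-diagonal computation "matches the stated prefactor exactly" is incorrect as written, and the step asserting block-diagonality is the one place where your argument would fail if submitted as a proof rather than as a pointer to \citet{Basu15}.
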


Finally, we state a union concentration inequality for vectors in a sparse set via discretization from Lemma F.2 in \citet{Basu15}.

\begin{lemma}
	Consider a symmetric matrix $\bm{D}_{p\times p}$. If, for any $\bm{v}\in\mathbb{R}^p$ with $\|\bm{v}\|_2\leq1$, and any $\eta\geq0$,
	\begin{equation*}
	\mathbb{P}[|\bm{v}'\bm{D}\bm{v}|>C\eta]\leq2\exp[-cT\min(\eta,\eta^2)]
	\end{equation*}
	then, for any integers $s\geq1$, we have
	\begin{equation*}
	\mathbb{P}\left[\sup_{\bm{v}\in\mathcal{K}(s)}|\bm{v}'\bm{D}\bm{v}|>C\eta\right]\leq2\exp[-cT\min(\eta^2,\eta)+s\min\{\log(p),\log(cp/s)\}].
	\end{equation*}
\end{lemma}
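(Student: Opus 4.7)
The plan is to prove this by a standard discretization (epsilon-net) argument over the sparse sphere $\mathcal{K}(s)$, combined with a union bound across the $\binom{p}{s}$ coordinate supports. First I would observe that every $\bm{v}\in\mathcal{K}(s)$ is supported on some index set $S\subset\{1,\ldots,p\}$ with $|S|=s$, and that for a fixed support the set of admissible $\bm{v}$ is a subset of the unit Euclidean ball in $\mathbb{R}^s$. Hence
\[
\mathcal{K}(s)\;\subseteq\;\bigcup_{|S|=s} B_2^{(S)},
\]
where $B_2^{(S)}$ denotes the unit ball restricted to coordinates in $S$. For each fixed $S$, I would build an $\epsilon$-net $\mathcal{N}_S$ of $B_2^{(S)}$ with $\epsilon=1/4$ (any fixed small constant works); by the classical volume-comparison bound, $|\mathcal{N}_S|\leq (1+2/\epsilon)^s=9^s$. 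Taking $\mathcal{N}=\bigcup_S \mathcal{N}_S$ yields a net of $\mathcal{K}(s)$ with cardinality at most $\binom{p}{s}\cdot 9^s$.

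Next I would use the standard ``net-to-supremum'' conversion for quadratic forms: if $\bm{v}^*\in\mathcal{K}(s)$ attains (or nearly attains) the supremum and $\bm{w}\in\mathcal{N}$ satisfies $\|\bm{v}^*-\bm{w}\|_2\leq \epsilon$, then writing $\bm{v}^{*\prime}\bm{D}\bm{v}^*-\bm{w}'\bm{D}\bm{w}=(\bm{v}^*-\bm{w})'\bm{D}\bm{v}^*+\bm{w}'\bm{D}(\bm{v}^*-\bm{w})$ and noting that both $\bm{v}^*-\bm{w}$ and $\bm{v}^*,\bm{w}$ lie in (a scaled copy of) $\mathcal{K}(2s)\subset\mathcal{K}(s)$-type sets with Euclidean norm bounded by $\epsilon$ and $1$ respectively, one obtains the standard inequality
\[
\sup_{\bm{v}\in\mathcal{K}(s)}|\bm{v}'\bm{D}\bm{v}|\;\leq\;(1-2\epsilon)^{-1}\max_{\bm{w}\in\mathcal{N}}|\bm{w}'\bm{D}\bm{w}|.
\]
With $\epsilon=1/4$ the prefactor is a harmless absolute constant that can be absorbed into $C$.

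Then I would apply the hypothesized single-vector inequality to each $\bm{w}\in\mathcal{N}$ and take a union bound:
\[
\mathbb{P}\!\left[\sup_{\bm{v}\in\mathcal{K}(s)}|\bm{v}'\bm{D}\bm{v}|>C\eta\right]
\;\leq\; 2|\mathcal{N}|\exp[-cT\min(\eta,\eta^2)]
\;\leq\; 2\exp\!\left[-cT\min(\eta,\eta^2)+\log|\mathcal{N}|\right].
\]
Finally, to produce the $s\min\{\log p,\log(cp/s)\}$ cardinality term, I would bound $\log|\mathcal{N}|\leq s\log 9+\log\binom{p}{s}$ using the two classical inequalities $\binom{p}{s}\leq p^s$ and $\binom{p}{s}\leq (ep/s)^s$, take whichever is tighter, and absorb the constant $\log 9$ into $c$ on the exponent (or enlarge $C$ appropriately).

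The main obstacle, such as it is, is purely bookkeeping: ensuring that the constants $C$ and $c$ on the right-hand side of the conclusion can be chosen independently of $s$ and $p$ after absorbing the net prefactor $(1-2\epsilon)^{-1}$ and the $s\log 9$ term. No genuinely new probabilistic content is needed beyond the single-vector hypothesis, so no delicate step remains once the net is constructed and the union bound is applied.
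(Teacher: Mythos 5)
Your proof is correct: the paper does not prove this lemma itself but imports it verbatim as Lemma F.2 of Basu and Michailidis (2015), and your support-by-support $\epsilon$-net plus union-bound argument, with $\log\binom{p}{s}\leq s\min\{\log p,\log(ep/s)\}$ and the net prefactors absorbed into $C$ and $c$, is essentially the standard proof given in that reference. One cosmetic point: your aside about $\mathcal{K}(2s)$ is unnecessary (and the inclusion $\mathcal{K}(2s)\subset\mathcal{K}(s)$ as written is backwards) --- since each net $\mathcal{N}_S$ lives on the same support $S$ as the vectors it approximates, the difference $\bm{v}^*-\bm{w}$ remains supported on $S$, and the symmetric quadratic-form identity $\sup_{\bm{v}\in B_2^{(S)}}|\bm{v}'\bm{D}\bm{v}|=\|\bm{D}_{SS}\|_{\textup{op}}$ closes the net-to-supremum step without any sparsity doubling.
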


\section{Details about the SFM representation in \eqref{eq:MLRVAR_factor} \label{appendix:factor}} 

%and \eqref{eq:VAR_ME} 

To verify the SFM representation  in \eqref{eq:MLRVAR_factor}, first note that the proposed model can be written in the matrix form, $\bm{Y}=\bm{X}(\bm{U}_3\otimes\bm{U}_2)\cm{G}_{(1)}'\bm{U}_1'+\bm{E}$, where $\bm{Y}=(\bm{y}_1,\dots, \bm{y}_T)^\prime$, $\bm{X}=(\bm{x}_1,\dots, \bm{x}_T)^\prime$, and $\bm{E}=(\bm{e}_1, \dots, \bm{e}_T)^\prime$. Consider the singular value decomposition
$\bm{X}(\bm{U}_3\otimes\bm{U}_2)\cm{G}_{(1)}'=\bm{U}_{\bm{x}} \bm{D}_{\bm{x}} \bm{V}_{\bm{x}}^\prime$, where $\bm{D}_{\bm{x}}\in\mathbb{R}^{r_1\times r_1}$ is diagonal, and $\bm{U}_{\bm{x}}$ and $\bm{V}_{\bm{x}}$ are orthonormal. Define $\bm{F}=\sqrt{T}\bm{U}_{\bm{x}}$ and $\bm{\Lambda}=\bm{U}_1\bm{V}_{\bm{x}}\bm{D}_{\bm{x}}/\sqrt{T}$. Note that $\bm{F}^\prime \bm{F}/T=\bm{I}_{r_1}$ and that $\bm{\Lambda}^\prime\bm{\Lambda}$ is diagonal. Thus,
$\bm{Y}=\bm{X}(\bm{U}_3\otimes\bm{U}_2)\cm{G}_{(1)}'\bm{U}_1'+\bm{E}=\bm{F}\bm{\Lambda}'+\bm{E}$,
which is the matrix form of \eqref{eq:MLRVAR_factor}.

%Next we give more details about the dynamic factor model in \eqref{eq:VAR_ME}. Substituting $\bm{w}_t=\bm{y}_t-\bm{e}_t$ into the second equation of  \eqref{eq:VAR_ME}, we have
%%\begin{equation}
%$\bm{y}_t=\bm{V}\bm{C}\bm{V}'\bm{y}_{t-1}+\bm{e}_t-\bm{V}\bm{C}\bm{V}'\bm{e}_{t-1}+\bm{u}_t.$
%%\end{equation}
%If $\bm{e}_t$ and $\bm{u}_t$ are mutually independent white noise processes, then  $\bm{e}_t-\bm{V}\bm{C}\bm{V}'\bm{e}_{t-1}+\bm{u}_t$ is a VMA(1) process  \citep[Appendix A4.3]{BoxJenkins}, and consequently $\bm{y}_t$  has the VARMA(1,1) representation,
%\begin{equation*}
%\bm{y}_t=\bm{V}\bm{C}\bm{V}'\bm{y}_{t-1}+\widetilde{\bm{e}}_t-\bm{\Theta}\widetilde{\bm{e}}_{t-1},
%\end{equation*}
%where $\widetilde{\bm{e}}_t$ is a white noise process whose covariance matrix and $\bm{\Theta}$ are determined by $\bm{\Sigma}_{\bm{e}}$, $\bm{\Sigma_\xi}$, $\bm{\Lambda}$ and $\bm{C}$.

%\section{ADMM subroutine of Algorithm \ref{alg:ADMM} \label{appendix:admm}}

\section{Generation of orthonormal matrices \label{appendix: sim}}
To generate an arbitrary tall orthonormal matrix $\bm{O}\in\mathbb{R}^{m\times n}$ with $m>n$, we first generate an $m\times m$ square matrix of independent standard normal random numbers, and then set its top $n$  singular vectors as the columns of $\bm{O}$.

We next generate the sparse orthonormal matrices in Sections 6.2 and 6.3. In Section 6.2, when $(r_1,r_2,r_3,s_1,s_2,s_3)=(2,2,2,3,3,2)$, let
\begin{equation}
%\begin{split}
\bm{U}_1= \begin{bmatrix}
\bm{a}_{3\times 1}  &\bm{0}_{3\times 1} \\
\hdashline
\bm{0}_{3\times 1} & \bm{b}_{3\times 1}\\
\hdashline
\bm{0}_{4\times 1} & \bm{0}_{4\times 1}   \\
\end{bmatrix} \in\mathbb{R}^{10\times 2}, \,%\quad
\bm{U}_2= \begin{bmatrix}
\bm{c}_{3\times 1}  &\bm{0}_{3\times 1} \\
\hdashline
\bm{0}_{3\times 1} & \bm{d}_{3\times 1}\\
\hdashline
\bm{0}_{4\times 1} & \bm{0}_{4\times 1}   \\
\end{bmatrix} \in\mathbb{R}^{10\times 2}, \,
%\\
%\quad\text{and}\quad
\bm{U}_3= \begin{bmatrix}
1  & 0\\
\hdashline
\bm{0}_{2\times 1} & \bm{e}_{2\times 1}\\
\hdashline
\bm{0}_{2\times 2} & \bm{0}_{2\times 1}\\
\end{bmatrix} \in\mathbb{R}^{5\times 2},
%\end{split}
\end{equation}
where $\bm{a}_{3\times 1}$, $\bm{b}_{3\times 1}$, $\bm{c}_{3\times 1}$, $\bm{d}_{3\times 1}$ and $\bm{e}_{2\times 1}$ are random Gaussian vectors scaled to have unit Euclidean norm. 

When $(r_1,r_2,r_3,s_1,s_2,s_3)=(3,3,3,3,3,2)$, let
\begin{equation}
\label{eq:gen1}
\begin{split}
\bm{U}_1&= \begin{bmatrix}
\bm{a}_{3\times 1}  &\bm{0}_{3\times 1} &\bm{0}_{3\times 1}\\
\hdashline
\bm{0}_{3\times 1} & \bm{b}_{3\times 1} &\bm{0}_{3\times 1}\\
\hdashline
\bm{0}_{3\times 1} & \bm{0}_{3\times 1} &\bm{c}_{3\times 1} \\
\hdashline
0 & 0 & 0
\end{bmatrix} \in\mathbb{R}^{10\times 3},\quad
\bm{U}_2= \begin{bmatrix}
\bm{d}_{3\times 1}  &\bm{0}_{3\times 1} &\bm{0}_{3\times 1}\\
\hdashline
\bm{0}_{3\times 1} & \bm{e}_{3\times 1} &\bm{0}_{3\times 1}\\
\hdashline
\bm{0}_{3\times 1} & \bm{0}_{3\times 1} &\bm{f}_{3\times 1} \\
\hdashline
0 & 0 & 0
\end{bmatrix} \in\mathbb{R}^{10\times 3}\\
\quad\text{and}\quad
\bm{U}_3&= \begin{bmatrix}
1  & 0 & 0\\
\hdashline
\bm{0}_{2\times 1} & \bm{g}_{2\times 1} & \bm{0}_{2\times 1}\\
\hdashline
\bm{0}_{2\times 2} & \bm{0}_{2\times 1} & \bm{h}_{2\times 1}\\
\end{bmatrix} \in\mathbb{R}^{5\times 3},
\end{split}
\end{equation}
where each nonzero part is generated by the same way as that for $(r_1,r_2,r_3)=(s_1,s_2,s_3)$.

When $(r_1,r_2,r_3,s_1,s_2,s_3)=(3,3,3,2,2,2)$, $\bm{U}_1$ and $\bm{U}_2$ are generated by the same way as that in \eqref{eq:gen1} except that the last entries in $\bm{a}_{3\times 1}$, $\bm{b}_{3\times1}$, $\bm{c}_{3\times1}$, $\bm{d}_{3\times1}$, $\bm{e}_{3\times1}$, and $\bm{f}_{3\times1}$ are zero.

For the cases with $(N,P)=(15,8)$, zero rows are added below $\bm{U}_i$s generated for $(N,P)=(10,5)$.

\end{document}